\newtheorem{definition}{Definition}[section]
\newtheorem{example}[theorem]{Example}
\newtheorem{claim}[theorem]{Claim}
\newtheorem{remark}{Remark}
\newenvironment{cenchop}{\begin{center}\begin{tabular}{l}}{\end{tabular}\end{center}}
\newcommand{\qsp}{\\[1.2ex]} 
\newcommand{\qed}{\hfill $\square$}
\newcommand{\langname}[1]{\text{#1}}  
\newcommand{\pred}[1]{\mathtt{#1}}  
\newcommand{\fname}[1]{\mathit{#1}}  
\newcommand{\sq}[1]{`{#1}'}  
\newcommand{\dedalus}{\langname{Dedalus}}
\newcommand{\datalogneg}{\langname{Datalog}^{\neg}}
\newcommand{\datalog}{\langname{Datalog}}
\newcommand{\webdamlog}{\langname{Webdamlog}}
\newcommand{\lups}{\langname{LUPS}}
\renewcommand{\qed}{$\hfill \square$}  
\newcommand{\tline}{\hrulefill}
\newcommand{\Nat}{\mathbb{N}}  
\newcommand{\len}[1]{|#1|} 
\newcommand{\rom}[1]{\text{\emph{(#1)}}} 
\newcommand{\romI}{\rom i}
\newcommand{\romII}{\rom{ii}}
\newcommand{\romIII}{\rom{iii}}
\newcommand{\romIV}{\rom{iv}}
\newcommand{\ded}{\mathcal{P}}  
\newcommand{\timecomplexity}[1]{{\scriptstyle \text{#1}}}
\newcommand{\ptime}{\timecomplexity{PTIME}}
\newcommand{\myset}[1]{\{#1\}}
\newcommand{\fc}{\boldsymbol{f}}
\newcommand{\fcB}{\boldsymbol{g}}
\newcommand{\rar}[2]{#1/#2}
\newcommand{\proj}[2]{#1|_{#2}}
\newcommand{\adom}[1]{\fname{adom}(#1)}
\newcommand{\sch}{\mathcal{D}}
\newcommand{\univ}{\mathbf{dom}}
\newcommand{\ntup}{(\,)}
\newcommand{\rl}{\varphi}
\newcommand{\head}[1]{\fname{head}_{#1}}
\newcommand{\bpos}[1]{\fname{pos}_{#1}}
\newcommand{\bneg}[1]{\fname{neg}_{#1}}
\newcommand{\idb}[1]{\fname{idb}(#1)}
\newcommand{\edb}[1]{\fname{edb}(#1)}
\newcommand{\schof}[1]{\fname{sch}(#1)}
\newcommand{\vars}[1]{\fname{vars}(#1)}
\newcommand{\uvar}{\mathbf{var}}
\newcommand{\var}[1]{\mathtt{#1}}
\newcommand{\tvar}[1]{\mathtt{\bar{#1}}}
\newcommand{\grl}{\psi}  
\newcommand{\gr}[2]{\fname{ground}(#1,#2)} 
\newcommand{\grp}[3]{\fname{ground}_{#1}(#2,#3)}
\newcommand{\nw}{\mathcal{N}}
\newcommand{\sh}[1]{(#1)}
\newcommand{\toloct}[1]{#1^{\mathrm{LT}}}
\newcommand{\addlt}[3]{#1^{\Uparrow#2,#3}}
\newcommand{\projlt}[3]{#1|^{#2,#3}}
\newcommand{\shprojlt}[3]{\projlt{\sh{#1}}{#2}{#3}}
\newcommand{\shaddlt}[3]{\addlt{\sh{#1}}{#2}{#3}}
\newcommand{\droplt}[1]{#1^{\Downarrow}}
\newcommand{\reltime}{\pred{time}}
\newcommand{\timesucc}{\pred{tsucc}}
\newcommand{\relall}{\pred{all}}
\newcommand{\schtime}{\sch_{\mathrm{time}}}
\newcommand{\Itime}{I_{\reltime}}
\newcommand{\before}{\pred{before}}
\newcommand{\chosen}{\pred{chosen}}
\newcommand{\other}{\pred{other}}
\newcommand{\cand}{\pred{cand}}
\newcommand{\rcvinf}{\pred{rcvInf}}
\newcommand{\issmaller}{\pred{isSmaller}}
\newcommand{\hasmax}{\pred{hasMax}}
\newcommand{\hassender}{\pred{hasSender}}
\newcommand{\purech}[1]{\fname{pure}_{\mathrm{ch}}(#1)}
\newcommand{\cronrulecandidate}[1]{\cand_{R}(\var x,\var s,\var y,\var t,\tvar u)\gets\addlt{\simplebody{\tvar u,\tvar v,\var y}}{\var x}{\var s},\,\relall(\var y),\,\reltime(\var t)#1 }
\newcommand{\decl}[1]{\fname{decl}(#1)}
\newcommand{\purecaus}[1]{\fname{pure}_{\mathrm{ca}}(#1)}
\newcommand{\pure}[1]{\fname{pure}(#1)}
\newcommand{\cnf}{\rho}
\newcommand{\cnfstart}[2]{\fname{\fname{start}}(#1,#2)}
\newcommand{\cnfs}{\mathit{st}}
\newcommand{\cnfb}{\mathit{bf}}
\newcommand{\pair}[2]{(#1,#2)}
\newcommand{\run}{\mathcal{R}}
\newcommand{\trace}[1]{\fname{trace}(#1)}
\newcommand{\untag}[1]{\fname{untag}(#1)}
\newcommand{\sendto}[2]{\delta^{#1\to#2}}
\newcommand{\deduc}[1]{\fname{deduc}_{#1}}
\newcommand{\induc}[1]{\fname{induc}_{#1}}
\newcommand{\async}[1]{\fname{async}_{#1}}
\newcommand{\mstep}[1]{(#1)}
\newcommand{\locR}[1]{\fname{loc}_{\run}(#1)}
\newcommand{\globR}[1]{\fname{glob}_{\run}(#1)}
\newcommand{\nwnat}{\nw\times\Nat}
\newcommand{\arr}{\alpha_{\run}}
\newcommand{\slice}[1]{\mathrm{trans}_{\run}^{[#1]}}
\newcommand{\slicefin}[1]{\text{fin}_{\run}^{[#1]}}
\newcommand{\sliceduc}[1]{\text{duc}_{\run}^{[#1]}}
\newcommand{\slicesnd}[1]{\text{snd}_{\run}^{[#1]}}
\newcommand{\slicecaus}[1]{\text{caus}_{\run}^{[#1]}}
\newcommand{\senders}[1]{\text{senders}_{\run}^{[#1]}}
\newcommand{\mesg}[1]{\mathrm{mesg}_{\run}^{[#1]}}
\newcommand{\caus}{\prec_{\run}}
\newcommand{\mesgev}[1]{\fname{mesg}(#1)}
\newcommand{\grded}{G}
\newcommand{\cauM}{\prec_{M}}
\newcommand{\totM}{<_{M}}
\newcommand{\globM}[1]{\fname{glob}_{M}(#1)}
\newcommand{\locM}[1]{\fname{loc}_{M}(#1)}
\newcommand{\Mind}{M^{\mathrm{ind}}}
\newcommand{\tup}{\tau}
\begin{document}

\title{Putting Logic-Based Distributed Systems on Stable Grounds}

\author[Ameloot et al.]
{Tom~J.~Ameloot\thanks{T.J.~Ameloot is a Postdoctoral Fellow of the Research Foundation -- Flanders (FWO).}, Jan~Van~den~Bussche \\
    Hasselt University \& transnational University of Limburg
\and William~R.~Marczak\\
    University of California, Berkeley
\and Peter~Alvaro \\
    University of California, Santa Cruz
\and Joseph~M.~Hellerstein\\
    University of California, Berkeley}

\submitted{5 September 2012}
\revised{12 November 2013, 24 April 2015}
\accepted{16 July 2015}

\maketitle

\begin{abstract}
In the Declarative Networking paradigm, $\datalog$-like languages
are used to express distributed computations. Whereas recently formal
operational semantics for these languages have been developed, a corresponding
declarative semantics has been lacking so far. The challenge is to
capture precisely the amount of nondeterminism that is inherent to
distributed computations due to concurrency, networking delays, and
asynchronous communication. This paper shows how a declarative, model-based
semantics can be obtained by simply using the well-known stable model
semantics for $\datalog$ with negation. We show that the model-based
semantics matches previously proposed formal operational
semantics.

\emph{To appear in Theory and Practice of Logic Programming (TPLP).}
\end{abstract}

\begin{keywords}
 Dedalus, Datalog, stable model semantics, distributed system, asynchronous communication
\end{keywords}

\section{Introduction}

Cloud environments have emerged as a modern way to store and manipulate
data \cite{cloud,cavage_2013}. 
For our purposes, a cloud is a distributed
system that should produce output as the result of some computation. We use the common term ``node'' as a synonym for an individual computer or server in a network.

In recent years, logic programming has been proposed as an attractive
foundation for distributed and cloud programming, building on work
in declarative networking \cite{decl_netw_cacm}.
The essential idea in declarative networking, is that the programmer uses a high-level declarative language (like $\datalog$) to specify only what has to happen, and not exactly how. 
For example, the programmer could specify only that certain messages are generated in reply to other messages; the exact technical details to send (and possibly resend) messages over transmission protocols are filled in by some runtime engine. This frees the programmer from thinking in low-level terms that distract from the actual meaning of the specific program at hand. 
In particular, complex distributed algorithms and protocols
can be expressed in relatively few lines of code \cite{trevor_SD3,i-do-declare,hellerstein_declimp}.
Besides the interest in declarative networking, we are also seeing a more general resurgence of $\datalog$ (with negation)
\cite{datalog2.0,datalog_sigmod}. 
Moreover, issues related to data-oriented distributed computing are receiving attention at database theory conferences \cite{hellerstein_datalog_talk,rtdn_pods,webdamlog,ameloot_consistency,green_winmove}.

One of the latest languages proposed in declarative networking is
$\dedalus$~\cite{dedalus,dedalus-datalog20,hellerstein_declimp},
a $\datalog$-inspired language that has influenced other recent language
designs for distributed and cloud computing such as $\webdamlog$
\cite{webdamlog} and $\langname{Bloom}$ \cite{hellerstein_calm}.

\paragraph*{Model-based semantics}

In this paper, we describe the meaning of distributed $\datalog$
programs using a model-based semantics. This approach contrasts with
most previous work in declarative networking, where the meaning of
programs was typically described with an operational semantics \cite{deutsch_network,declnetw_opsem,grumbach_netlog,rtdn_pods},
with a few exceptions~\cite{lobo_2012,ma_decldist}. 

There are several important motivations for a model-based semantics
of a distributed program. First, we can better separate the program
structure, i.e., the rules, from the (distributed) implementation
that may change over time. For example, consider rules that generate
messages. These rules \emph{can} be implemented with asynchronous
communication, but how we evaluate them across machines is eventually
just a physical performance decision. Said differently, the point
of message rules is not to model a physical phenomenon, but rather
to admit a wider array of physical implementations than a local evaluation
strategy. Model-based interpretations of a program admit all such
implementations, and can perhaps suggest some new ones. Second, we
can investigate the \emph{need} for time: we can think about when
temporal delay is needed for expressivity, rather than when it is
imposed upon us by some implementation detail like physical separation
of nodes. In this context we mention the CRON conjecture by Hellerstein,
that relates causality on messages to the nature of the computations
in which those messages participate~\cite{hellerstein_declimp,ameloot_noncausality2014}.
We elaborate on causality below.

Concretely, our approach will be to model a distributed program with
$\datalog$ under the stable model semantics~\cite{gelfond_stable}
because this semantics is widely used in logic programming. Following
the language $\dedalus$~\cite{dedalus,dedalus-datalog20,hellerstein_declimp},
we express the functionality of the distributed program with three
kinds of rules: ``deductive rules'' for local computation, ``inductive
rules'' for persisting memory across local computation steps, and,
``asynchronous rules'' for representing message sending. The asynchronous
rules will nondeterministically choose the arrival times of messages~\cite{datalog_choice,sz_nondet}.

However, using only the above rules is not sufficient, as this still
allows stable models that express undesirable computations, where
messages can be sent ``into the past''. Therefore, each program
is augmented with a set of rules that express \emph{causality} on
the messages. Causality stands for the physical constraint that an
effect can only happen after its cause. Applied to message delivery,
this intuitively means that a sent message can only be delivered in
the future, not in the past. The rules for causality reason from the
perspective of the local times of each node, which is a justified
approach since there is no common ``global clock'' in a distributed
environment \cite{attiyawelch_dcbook}. As a second improvement, we
also introduce rules to ensure that only a finite number of messages
arrive at each local step of a node, as occurs in a real distributed
system. Applying the stable model semantics to the augmented $\datalog$
programs constitutes our modeling of a distributed ($\datalog$)
program.

On another note, it is already well-known that for finite input domains,
the combination of $\datalog$ and stable model semantics allows for
expressing all problems in NP~\cite{marek_stable}. However, it is
not yet clear what can be represented when infinite input domains
are considered. From this perspective, our work demonstrates that
the stable model semantics is indeed also suitable for modeling distributed
programs, whose execution is unbounded in time. Here, time would be
provided as an infinite input.

\paragraph{Correctness}

As we have motivated above, our goal is to describe the workings of
a distributed system declaratively, so that new insights can emerge
from this perspective. Hence, it is important to verify that the model-based
semantics really corresponds to the execution of a distributed program.

To this end, we additionally formalize the execution of a distributed
$\datalog$ program by means of an \emph{operational} semantics
\cite{deutsch_network,declnetw_opsem,grumbach_netlog,rtdn_pods}.
This second semantics is defined as a transition system. The transition
system is infinite because nodes run indefinitely and keep sending
messages. In addition, the transition system is highly nondeterministic,
because nodes work concurrently and messages can be delayed.

We establish rigorously a correspondence between the features of the
operational semantics and the features of the proposed model-based
semantics. To formulate our result, we describe each operational execution
by a structure that we call a \emph{trace}, which includes for each node in the network the detailed information about the local
steps it has performed and about the messages it has sent and received.
For our distributed $\datalog$ programs, we show that such operational
traces correspond to the set of stable models.

\paragraph*{Outline}

This paper is organized as follows. First, Section~\ref{sec:related-work}
discusses related work. Section~\ref{sec:preliminaries} gives preliminaries.
Next, Section~\ref{sec:declarative} represents distributed $\datalog$
programs under the model-based semantics; this section is based on $\dedalus$, a $\datalog$-like language.
Section~\ref{sec:operational}
justifies the intuitions of the model-based semantics by establishing
an equivalence with an operational semantics. Section~\ref{sec:discussion}
finishes with the conclusion.

\section{Related Work}

\label{sec:related-work}

The work of \citeNS{lobo_2012} is closely related to our
work. For a $\dedalus$-inspired language, they give a model-theoretic
semantics based on answer set programming, i.e., stable models. To
define this semantics, they syntactically translate the rules of their
language to $\datalog$, where all literals are given an explicit
location and time variable, to represent the data that each node has
during each local time. This translation resembles the model-theoretic
semantics for distributed $\datalog$ programs in this paper. To enforce
natural execution properties in their semantics, like causality, Lobo
et al.\ specify auxiliary rules in the syntactical translation. The
work of \citeNS{lobo_2012} does not yet mention the connection
between the model-theoretic semantics and desired executions of a
distributed system, i.e., an operational semantics.

Extending the work of Lobo et al, the work of \citeNS{ma_decldist}
formalizes a distributed system as a composition of I/O automata~\cite{lynch_book}.
An operational execution of such a system is a sequence of valid transitions,
called a trace. Global properties of the system can be analyzed by
translating it into a logic program, to which an answer set solver
can be applied. Ma et al.\ mention that operational traces of the
system correspond to answer sets of the logic program, and that this
provides a formal foundation for the analysis tools based on answer
set programming. Thus, the work of \citeNS{ma_decldist}
indicates a practical benefit of having a correspondence between a
declarative and operational semantics for languages used in declarative
networking. As mentioned above, we also establish a similar correspondence
in the current paper, for our distributed $\datalog$ programs. We
note, however, a few differences between our work and that of Ma et
al. First, in the work of Ma et al, the message buffer of a node has
a maximum size. In our operational semantics, the buffers are unbounded.
Moreover, Ma et al.\ construct their logic programs for a fixed range
of timestamps. In our declarative, model-based semantics, time is
given as an infinite input to a $\datalog$ program whose rules are
independent of a fixed time range. Lastly, our work devotes much attention
to rigorously showing the correspondence between the declarative and
operational semantics, whereas this is not elaborated in the work
of Ma et al.

Also in the setting of distributed systems, \citeNS{interlandi_2013}
give a $\dedalus$-inspired language for describing \emph{synchronous}
systems. In such systems, the nodes of the network proceed in rounds
and the messages can not be arbitrarily delayed. During each round,
the nodes share the same global clock. Interlandi et al.\ specify
an operational semantics for their language, based on relational transducer
networks \cite{ameloot_jacm_2013}. They also show that this operational
semantics coincides with a model-theoretic semantics of a single holistic
$\datalog$ program.  It should be noted that \citeNS{lobo_2012},
and the current paper, deal with \emph{asynchronous} systems, that
in general pose a bigger challenge for a distributed program to be
correct, i.e., the program should remain unaffected by nondeterministic
effects caused by message delays. 

An area of artificial intelligence that is closely related to declarative
networking is that of programming multi-agent systems in declarative
languages. The knowledge of an agent can be expressed by a logic program,
which also allows for non-monotone reasoning, and agents update their
knowledge by modifying the rules in these logic programs \cite{leite_minerva,nigam_agents,leite_evolp}.
The language $\lups$ \cite{alferes_lups} was designed to specify
such dynamic updates to logic programs, and $\lups$ is also a declarative
language itself. After applying a sequence of updates specified in
$\lups$, the semantics of the resulting logic program can be defined
in an inductive way. But an interesting connection to this current
work, is that the semantics can also be given by first syntactically
translating the original program and its updates into a single normal
logic program, after which the stable model semantics is applied~\cite{alferes_lups}.
It should be noted however that in this second semantics, there is
no modeling of causality or the sending of messages.

Of course, logic programming is not the only means for specifying
a (distributed) system. For example, in the area of formal methods,
logic-based languages like TLA~\cite{lamport_dist_tla}, Z~\cite{using_z},
and Event-B~\cite{event_b} can be used to specify various distributed
algorithms. Specifications written in these languages can also be
automatically checked for correctness.

Although we work within the established setting of declarative networking~\cite{decl_netw_cacm}, the scientific debate on the merits of $\datalog$ versus other formalisms for programming distributed systems remains open. 
It seems desirable to have an analysis of how features of $\datalog$ relate to the features of other languages for formal specification, e.g.~\cite{lamport_dist_tla,using_z,event_b}, both on the syntactical and the semantical level. 
However, a deep understanding of the other languages would be needed. Moreover, one may expect that features of $\datalog$ will in general not map naturally to features of the other languages. Hence, we consider such a comparison to be a separate research project, outside the scope of the current paper.

\section{Preliminaries}

\label{sec:preliminaries}

\subsection{Database Basics}

\label{sub:databases-basics}

A \emph{database schema} $\sch$ is a finite set of pairs $(R,k)$
where $R$ is a \emph{relation name} and $k\in\Nat$ its associated
\emph{arity}. A relation name occurs at most once in a database
schema. We often write $(R,k)$ as $\rar Rk$.

We assume some infinite universe $\univ$ of atomic data values. A
\emph{fact} $\fc$ is a pair $(R,\bar{a})$, often denoted as $R(\bar{a})$,
where $R$ is a relation name and $\bar{a}$ is a tuple of values
over $\univ$. For a fact $R(\bar{a})$, we call $R$ the \emph{predicate}.
We say that a fact $R(a_{1},\ldots,a_{k})$ is \emph{over} database
schema $\sch$ if $\rar Rk\in\sch$. A database \emph{instance}
$I$ over $\sch$ is a set of facts over $\sch$. For a subset $\sch'\subseteq\sch$,
we write $\proj I{\sch'}$ to denote the subset of facts in $I$ whose
predicate is a relation name in $\sch'$. We write $\adom I$ to denote
the set of values occurring in facts of $I$.

\subsection{Datalog with Negation}

\label{sub:datalog-with-negation}

We recall $\datalog$ with negation \cite{ahv_book}, abbreviated
$\datalogneg$. We assume the standard database perspective, where
a $\datalogneg$ program is evaluated over a given set of facts, i.e.,
where these facts are not part of the program itself.

Let $\uvar$ be a universe of \emph{variables}, disjoint from $\univ$.
An \emph{atom} is of the form $R(u_{1},\ldots,u_{k})$ where $R$
is a relation name and $u_{i}\in\uvar\cup\univ$ for each $i=1,\ldots,k$.
We call $R$ the \emph{predicate}. If an atom contains no data values,
we call it \emph{constant-free}. A \emph{literal} is an atom or
an atom with ``$\neg$'' prepended. A literal that is an atom is
called \emph{positive} and otherwise it is called \emph{negative}.

It will be technically convenient to use a slightly unconventional
definition of rules. Formally, a $\datalogneg$ \emph{rule} $\rl$
is a triple
\[
(\head{\rl},\,\bpos{\rl},\,\bneg{\rl})
\]
where $\head{\rl}$ is an atom; $\bpos{\rl}$ and $\bneg{\rl}$ are
sets of atoms; and, the variables in $\rl$ all occur in $\bpos{\rl}$.
This last condition is called \emph{safety}. The components $\head{\rl}$,
$\bpos{\rl}$ and $\bneg{\rl}$ are called respectively the \emph{head},
the \emph{positive body atoms} and the \emph{negative body atoms}.
We refer to $\bpos{\rl}\cup\bneg{\rl}$ as the \emph{body atoms}.
Note, $\bneg{\rl}$ contains just atoms, not negative literals. Every
$\datalogneg$ rule $\rl$ must have a head, whereas $\bpos{\rl}$
and $\bneg{\rl}$ may be empty. If $\bneg{\rl}=\emptyset$ then $\rl$
is called \emph{positive}.

A rule $\rl$ may be written in the conventional syntax. For instance,
if $\head{\rl}=T(\var u,\var v)$, $\bpos{\rl}=\{R(\var u,\var v)\}$
and $\bneg{\rl}=\{S(\var v)\}$, with $\var u,\var v\in\uvar$, then
we can write $\rl$ as
\[
T(\var u,\var v)\gets R(\var u,\var v),\,\neg S(\var v).
\]
The specific ordering of literals to the right of the arrow has no
significance in this paper.

The set of variables of $\rl$ is denoted $\vars{\rl}$.  If $\vars{\rl}=\emptyset$
then $\rl$ is called \emph{ground}, in which case $\{\head{\rl}\}\cup\bpos{\rl}\cup\bneg{\rl}$
is a set of facts.

Let $\sch$ be a database schema. A rule $\rl$ is said to be \emph{over
schema} $\sch$ if for each atom $R(u_{1},\ldots,u_{k})\in\{\head{\rl}\}\cup\bpos{\rl}\cup\bneg{\rl}$
we have $\rar Rk\in\sch$. A $\datalogneg$ \emph{program $P$ over
$\sch$} is a set of (safe) $\datalogneg$ rules over $\sch$. 
We write $\schof P$ to denote the smallest database schema that $P$
is over; note, $\schof P$ is uniquely defined. We define $\idb P\subseteq\schof P$
to be the database schema consisting of all relations in rule-heads
of $P$. We abbreviate $\edb P=\schof P\setminus\idb P$.%
\footnote{The abbreviation ``idb'' stands for ``intensional database schema''
and ``edb'' stands for ``extensional database schema'' \cite{ahv_book}.%
} 

Any database instance $I$ over $\schof P$ can be given as \emph{input}
to $P$. Note, $I$ may already contain facts over $\idb P$.%
\footnote{The need for this will become clear in Section~\ref{sec:operational}.%
} Let $\rl\in P$. A \emph{valuation for} $\rl$ is a total function
$V:\vars{\rl}\to\univ$. The \emph{application} of $V$ to an atom
$R(u_{1},\ldots,u_{k})$ of $\rl$, denoted $V(R(u_{1},\ldots,u_{k}))$,
results in the \emph{fact} $R(a_{1},\ldots,a_{k})$ where for each
$i\in\{1,\ldots,k\}$ we have $a_{i}=V(u_{i})$ if $u_{i}\in\uvar$
and $a_{i}=u_{i}$ otherwise. In words: applying $V$ replaces the
variables by data values and leaves the old data values unchanged.
This is naturally extended to a set of atoms, which results in a set
of facts. Valuation $V$ is said to be \emph{satisfying for $\rl$
on $I$} if $V(\bpos{\rl})\subseteq I$ and $V(\bneg{\rl})\cap I=\emptyset$.
If so, $\rl$ is said to \emph{derive} the fact $V(\head{\rl})$.

\subsubsection{Positive and Semi-positive}

\newcommand{\tp}{T_{P}}

\label{sub:positive-and-semi-positive}

Let $P$ be a $\datalogneg$ program. We say that $P$ is \emph{positive}
if all rules of $P$ are positive. We say that $P$ is \emph{semi-positive}
if for each rule $\rl\in P$, the atoms of $\bneg{\rl}$ are over
$\edb P$. Note, positive programs are semi-positive.

We now give the semantics of a semi-positive $\datalogneg$ program
$P$ \cite{ahv_book}. First, let $\tp$ be the \emph{immediate consequence
operator} that maps each instance $J$ over $\schof P$ to the instance
$J'=J\cup A$ where $A$ is the set of facts derived by all possible
satisfying valuations for the rules of $P$ on $J$. 

Let $I$ be an instance over $\schof P$. Consider the infinite sequence
$I_{0}$, $I_{1}$, $I_{2}$, etc, inductively defined as follows:
$I_{0}=I$ and $I_{i}=\tp(I_{i-1})$ for each $i\geq1$. The \emph{output
of $P$ on input $I$}, denoted $P(I)$, is defined as $\bigcup_{j}I_{j}$;
this is the \emph{minimal fixpoint} of the $T_{P}$ operator. Note,
$I\subseteq P(I)$. When $I$ is finite, the fixpoint is finite and
can be computed in polynomial time according to data complexity~\cite{vardi_comp}.

\subsubsection{Stratified Semantics}

\label{sub:stratified-semantics}

We now recall the stratified semantics for a $\datalogneg$ program
$P$ \cite{ahv_book}. As a slight abuse of notation, here we will
treat $\idb P$ as a set of only relation names (without associated
arities). First, $P$ is called \emph{syntactically stratifiable}
if there is a function $\sigma:\idb P\to\{1,\ldots,|\idb P|\}$ such
that for each rule $\rl\in P$, having some head predicate $T$, the
following conditions are satisfied:
\begin{itemize}
\item $\sigma(R)\leq\sigma(T)$ for each $R(\bar{u})\in\proj{\bpos{\rl}}{\idb P}$;
\item $\sigma(R)<\sigma(T)$ for each $R(\bar{u})\in\proj{\bneg{\rl}}{\idb P}$.
\end{itemize}
For $R\in\idb P$, we call $\sigma(R)$ the \emph{stratum number}
of $R$. For technical convenience, we may assume that if there is
an $R\in\idb P$ with $\sigma(R)>1$ then there is an $S\in\idb P$
with $\sigma(S)=\sigma(R)-1$. Intuitively, function $\sigma$ partitions
$P$ into a sequence of semi-positive $\datalogneg$ programs $P_{1}$,
\ldots, $P_{k}$ with $k\leq|\idb P|$ such that for each $i=1,\ldots,k$,
the program $P_{i}$ contains the rules of $P$ whose head predicate
has stratum number $i$. This sequence is called a \emph{syntactic
stratification} of $P$. We can now apply the \emph{stratified semantics}
to $P$: for an input $I$ over $\schof P$, we first compute the
fixpoint $P_{1}(I)$, then the fixpoint $P_{2}(P_{1}(I))$, etc. The
\emph{output of $P$ on input $I$}, denoted $P(I)$, is defined
as $P_{k}(P_{k-1}(\ldots P_{1}(I)\ldots))$. It is well known that
the output of $P$ does not depend on the chosen syntactic stratification
(if more than one exists). Not all $\datalogneg$ programs are syntactically
stratifiable.

\subsubsection{Stable Model Semantics}

\label{sub:stable-model-semantics}

We now recall the stable model semantics for a $\datalogneg$ program
$P$ \cite{gelfond_stable,sz_nondet}. Let $I$ be an instance over
$\schof P$. Let $\rl\in P$. Let $V$ be a valuation for $\rl$ whose
image is contained in $\adom I\cup C$, where $C$ is the set of all
constants appearing in $P$. Valuation $V$ does not have to be satisfying
for $\rl$ on $I$. Together, $V$ and $\rl$ give rise to a ground
rule $\grl$, obtained from $\rl$ by replacing each $u\in\vars{\rl}$
with $V(u)$. We call $\grl$ a \emph{ground rule of $\rl$ with
respect to $I$}. Let $\gr{\rl}I$ denote the set of all ground rules
of $\rl$ with respect to $I$. The \emph{ground program of $P$
on $I$}, denoted $\gr PI$, is defined as $\bigcup_{\rl\in P}\gr{\rl}I$.
Note, if $I=\emptyset$, the set $\gr PI$ contains only rules whose
ground atoms are made with $C$, or atoms that are nullary.

Let $M$ be another instance over $\schof P$. We write $\grp MPI$
to denote the program\emph{ }obtained from $\gr PI$ as follows:
\begin{enumerate}
\item remove every rule $\grl\in\gr PI$ for which $\bneg{\grl}\cap M\neq\emptyset$;
\item remove the negative (ground) body atoms from all remaining rules.
\end{enumerate}
Note, $\grp MPI$ is a positive program. We say that $M$ is a \emph{stable
model of $P$ on input $I$} if $M$ is the output of $\grp MPI$
on input $I$. If so, the semantics of positive $\datalogneg$ programs
implies $I\subseteq M$. Not all $\datalogneg$ programs have stable
models on every input \cite{gelfond_stable}.

\subsection{Network and Distributed Databases}

\label{sub:network-and-distributed-database}

A \emph{(computer) network} is a nonempty finite set $\nw$ of \emph{nodes},
which are values in $\univ$. Intuitively, $\nw$ represents the identifiers
of compute nodes involved in a distributed system. Communication channels
(edges) are not explicitly represented because we allow a node $x$
to send a message to any node $y$, as long as $x$ knows about $y$
by means of input relations or received messages. For general distributed
or cluster computing, the delivery of messages is handled by the network
layer, which is abstracted away. But ($\datalog$) programs can also
describe the network layer itself \cite{decl_netw_cacm,hellerstein_declimp},
in which case we would restrict attention to programs where nodes
only send messages to nodes to which they are explicitly linked; these
nodes would again be provided as input.

A \emph{distributed database instance} $H$ over a network $\nw$
and a database schema $\sch$ is a function that maps every node of
$\nw$ to an ordinary finite database instance over $\sch$. This
represents how data over the same schema $\sch$ is spread over a
network.

As a small example of a distributed database instance, consider the
following instance $H$ over a network $\nw=\{x,y\}$ and a schema
$\sch=\{\rar R1,\,\rar S1\}$: $H(x)=\{R(a),\, S(b)\}$ and $H(y)=\{R(a),\,S(c)\}$.
In words: we put facts $R(a)$ and $S(b)$ at node $x$, and we put facts $R(a)$ and $S(c)$ at node $y$. 
Note that it is possible that the same fact is given to multiple nodes.

\section{Model-Based Semantics}

\label{sec:declarative}

Here we describe a class of distributed $\datalogneg$ programs that
we give a model-based semantics. First, in Section~\ref{sub:user-language},
we recall the user language $\dedalus$, that is based on $\datalogneg$ with annotations,
in which the programmer can express the functionality of the distributed
program. Next, we discuss how to assign a declarative, model-based
semantics to $\dedalus$ programs. This semantics consists of applying
the stable model semantics to the $\dedalus$ programs after they are transformed
into pure $\datalogneg$ programs, i.e., without annotations. We introduce
some auxiliary notations and symbols in Section~\ref{sub:pure-P-notations-and-relations}.
Next, in Section~\ref{sub:basic-transformation}, we give a basic
transformation of $\dedalus$ programs in order to apply the stable model
semantics. However, this basic transformation has some shortcomings,
that we iteratively correct in Sections~\ref{sub:causal-transformation} and
\ref{sub:causal-finite-transformation}.

\subsection{User Language: $\dedalus$}

\label{sub:user-language}

\newcommand{\simplebody}[1]{\mathbf{B}\{#1\}}

\newcommand{\sugind}{\bullet}

\newcommand{\sugas}[1]{\mid#1}

\newcommand{\relid}{\pred{Id}}

\newcommand{\relnode}{\pred{Node}}

Our user language for distributed $\datalogneg$ programs is $\dedalus$~\cite{dedalus,dedalus-datalog20,hellerstein_declimp},
here presented as $\datalogneg$ with annotations.%
\footnote{These annotations correspond to syntactic sugar in the previous presentations
of $\dedalus$.%
} Essentially, the language represents updatable memory for the nodes
of a network and provides a mechanism for communication between these
nodes.

\subsubsection{Syntax}

Let $\sch$ be a database schema. We write $\simplebody{\tvar v}$,
where $\tvar v$ is a tuple of variables, to denote any sequence $\beta$
of literals over database schema $\sch$, such that the variables
in $\beta$ are precisely those in the tuple $\tvar v$. Let $R(\tvar u)$
denote any atom over $\sch$. There are three types of $\dedalus$
rules over $\sch$:
\begin{itemize}
\item A \emph{deductive} rule is a normal $\datalogneg$ rule over $\sch$.
\item An \emph{inductive} rule is of the form 
\[
R(\tvar u)\sugind\gets\simplebody{\tvar u,\tvar v}.
\]

\item An \emph{asynchronous} rule is of the form 
\[
R(\tvar u)\sugas{\var y}\gets\simplebody{\tvar u,\tvar v,\var y}.
\]

\end{itemize}
For asynchronous rules, the annotation \sq{$\sugas{\var y}$}
with $\var y\in\uvar$ means that the derived head facts are transferred
(``piped'') to the addressee node represented by $\var y$. Deductive,
inductive and asynchronous rules will express respectively local computation,
updatable memory, and message sending. 
As in Section~\ref{sub:datalog-with-negation},
a $\dedalus$ rule is called \emph{safe} if all its variables occur
in at least one positive body atom. 

We already provide some intuition of how asynchronous rules operate.
There are four conceptual time points involved in the execution of
an asynchronous rule: the time when the body is evaluated; the time
when the derived fact is sent to the addressee; the time when the
fact arrives at the addressee; and, the time when the arrived fact
becomes visible at the addressee. In the model-based semantics presented
later, the first two time points coincide and the last two time points
coincide; and, there is no upper bound on the interval between these
two pairs, although it will be finite.

Now consider the following definition:

\begin{definition}A \emph{$\dedalus$ program over a schema $\sch$}
is a set of deductive, inductive and asynchronous $\dedalus$ rules
over $\sch$, such that all rules are safe, and the set of deductive
rules is syntactically stratifiable.\end{definition} 

In the current work, we will additionally assume that $\dedalus$
programs are constant-free, as is common in the theory of database
query languages, and which is not really a limitation, since constants
that are important for the program can always be indicated by unary
relations in the input.

Let $\ded$ be a $\dedalus$ program. The definitions of $\schof{\ded}$,
$\idb{\ded}$, and $\edb{\ded}$ are like for $\datalogneg$ programs.
An \emph{input} for $\ded$ is a \emph{distributed} database instance
over some network $\nw$ and the schema $\edb{\ded}$. 

\subsubsection{Semantics Sketch}

We sketch the main idea behind the semantics of a $\dedalus$ program
$\ded$. We illustrate the semantics in Section~\ref{sub:examples}.

Let $H$ be an input distributed database instance for $\ded$,
over a network $\nw$. 
The idea is that all nodes $x\in\nw$ run the same program $\ded$
and use their local input fragment $H(x)$ to do local computation
and to send messages. Conceptually, each node of $\nw$ should be
thought of as doing local computation steps, indefinitely. During
each step, a node reads the following facts: $\romI$ the local input;
$\romII$ some received message facts, generated by asynchronous rules
on other nodes or the node itself; and, $\romIII$ the facts derived
by inductive rules during the previous step on this same node. Next,
the deductive rules are applied to these available facts, to compute
a fixpoint $D$ under the stratified semantics. 

Subsequently, the asynchronous and inductive rules are fired in parallel on the deductive fixpoint $D$, trying all possible valuations in single-step
derivations (i.e., no fixpoint). The asynchronous rules send messages
to other nodes or to the same node. Messages arrive after an arbitrary
(but finite) delay, where the delay can vary for each message. The
inductive rules store facts in the memory of the local node. The effect
of an inductive derivation is only visible in the very next step;
so, if a fact is to be remembered over multiple steps, it should always
be explicitly rederived by inductive rules.

\subsubsection{Examples}
\label{sub:examples}

We consider several examples to demonstrate the three kinds of $\dedalus$ rules, and how they work together. 
These examples also illustrate the utility of $\dedalus$ when applied to some practical problems.
Here, we follow the principle that the
    output on a node $x$ consists of the facts that are eventually derived during every step of $x$.

    
\begin{example}
    \label{ex:cover}
    In this example we compute reachable vertices on graph data.
    Consider the $\dedalus$ program $\ded$ in Figure~\ref{fig:example-vertex-reach}.
    We assume the \emph{edb} relations $\rar R2$, $\rar{\pred{start}}1$,
    and $\rar{\relnode}1$. For each node, relation $R$ describes a local
    graph, and relation $\pred{start}$ provides certain starting vertices.
    In any input distributed database instance $H$ over a network $\nw$, we assume that for each node, relation $\relnode$ is initialized
    to contain all nodes of $\nw$; intuitively, $\relnode$ can be regarded
    as an address book for $\nw$.

    Now, the idea is that each node of $\nw$ will check whether all of
    its local vertices are reachable from the (distributed) start vertices.
    Communication is needed to share these start vertices, which is accomplished
    by the asynchronous rule. The receipt of a start vertex initializes
    a local relation $\rar{\pred{marked}}1$ at each node; this relation
    contains reachable vertices. The inductive rule says that all reachable
    vertices that we know during the current step, are remembered in the
    next step. This way, the effect of the communication is preserved.
    Moreover, the third rule, which is deductive, collects all local graph
    vertices reachable from the currently known reachable vertices. Note,
    the inductive rule will cause the result of this deductive computation
    to be also remembered in the next step, although this effect is not
    really needed here. The last four rules, which are deductive, check
    that all local vertices are reachable from the start vertices seen
    so far; if so, a local flag $\pred{covered}\ntup$ is derived. 

    In our semantics, we will enforce that all messages eventually arrive.
    In such a semantics, eventually a node will produce $\pred{covered}\ntup$
    during each step iff all its local vertices are reachable from the
    distributed start vertices.
    \qed
\end{example}

\begin{figure}
    \begin{framed}
    \begin{cenchop}
        $\pred{marked}(\var u)\sugas{\var y}\gets\pred{start}(\var u),\,\relnode(\var y).$\qsp
        
        $\pred{marked}(\var u)\sugind\gets\pred{\pred{marked}}(\var u).$\qsp

        $\pred{marked}(\var v)\gets\pred{marked}(\var u),\, R(\var u,\var v).$\\

        $\pred{vert}(\var u)\gets R(\var u,\var v).$\\

        $\pred{vert}(\var u)\gets R(\var v,\var u).$\\

        $\pred{missing}\ntup\gets\pred{vert}(\var u),\,\neg\pred{marked}(\var u).$\\

        $\pred{covered}\ntup\gets\neg\pred{missing}\ntup.$
    \end{cenchop}
    \end{framed}
    \caption{\label{fig:example-vertex-reach}$\dedalus$ program for Example~\ref{ex:cover}.}
\end{figure}


\begin{example}
    \label{ex:set-order}
    In this example we generate a random ordering of a set through asynchronous delivery of messages. 
    Every node generates a random ordering of a local \emph{edb} relation $\rar S1$ that represents an input set. We also assume an \emph{edb} relation $\rar{\relid}1$ that contains on each node the identifier of that node; the relation $\relid$ allows a node to send a message to itself.
    The idea is that a node sends all elements of $S$ to itself as messages, and the arbitrary arrival order is used to generate an ordering of the elements. This ordering depends on the execution, and some executions will not lead to orderings if some elements are always jointly delivered.
    
    The corresponding program is shown in Figure~\ref{fig:set-order}. We use relation $\rar M1$ to send the elements of $S$, as accomplished by the single asynchronous rule.
    The relations $\rar F1$ and $\rar N2$ represent the ordering of $S$ so far, and they are considered as the output of the program; the letters `F' and `N' stand for ``first'' and ``next'' respectively. 
    For example, a possible ordering of the set $\myset{a,b,c,d}$ could be expressed by the following facts: %
        $F(d)$, $N(d,c)$, $N(c,b)$, $N(b,a)$.
    
    Inductive rules are responsible for remembering the iteratively updated versions of $F$ and $N$. 
    The other rules are deductive, and they can conceptually be executed in the order in which they are written.
    The main technical challenge is to only update the ordering when precisely one element of $S$ arrives; otherwise, because we have no choice mechanism, we would accidentally give the same ordinal to two different elements. Checking whether we may update the ordering is accomplished through other auxiliary relations.
    We use a nullary relation $\pred{started}$ as a flag to know whether we still have to initialize relation $F$ or not.

    Note that the program keeps sending all elements of $S$ through the single asynchronous rule. 
    Alternatively, by adapting the program, we could send the elements only once by making sure the asynchronous rule is fired only once (in parallel for all elements of $S$).
    In that case, as soon as two elements are later delivered together, the ordering will not contain all elements.
    \qed
\end{example}

\begin{figure}
    \newcommand{\relused}{\pred{used}}
    \newcommand{\relnew}{\pred{new}}
    \newcommand{\reltwo}{\pred{two}}
    \newcommand{\relkeep}{\pred{keep}}
    \newcommand{\relstarted}{\pred{started}}
    \newcommand{\rellast}{\pred{last}}
    \newcommand{\relnotlast}{\pred{notlast}}
    \begin{framed}
    \begin{cenchop}        
        $M(\var u)\sugas{\var x}\gets S(\var u),\,\relid(\var x).$\qsp
        $\relused(\var u)\gets F(\var u).$\\
        $\relused(\var u)\gets N(\var u,\var v).$\\
        $\relused(\var u)\gets N(\var v,\var u).$\\
        $\relnew(\var u)\gets M(u),\,%
            \neg\relused(\var u).$\\
        $\pred{eq}(\var u, \var u)\gets S(\var u).$\\
        $\reltwo\ntup\gets\relnew(\var u),\, %
            \relnew(\var v),\, \neg\pred{eq}(\var u, \var v).$\\
        $\relkeep(\var u)\gets\relnew(\var u),\,\neg\reltwo\ntup.$\\
        $\relnotlast(\var u)\gets N(\var u,\var v).$\\
        $\rellast(\var u)\gets F(\var u),\,\neg\relnotlast(\var u).$\\
        $\rellast(\var u)\gets N(\var v, \var u),\,\neg\relnotlast(\var u)$.
        \qsp
        $\relstarted\ntup\gets F(\var u).$\\
        $F(\var u)\sugind \gets \neg\relstarted\ntup,\,\relkeep(\var u).$\\
        $N(\var u,\var v)\sugind\gets\relstarted\ntup,\,\rellast(\var u),\,\relkeep(\var v).$\\
        $F(\var u)\sugind\gets F(\var u).$\\
        $N(\var u,\var v)\sugind\gets N(\var u,\var v).$
    \end{cenchop}
    \end{framed}
    \caption{\label{fig:set-order}$\dedalus$ program
    for Example~\ref{ex:set-order}.}
\end{figure}


\begin{example}
    \label{ex:2pc}
    This example is inspired by commit protocols that were expressed in a precursor language of $\dedalus$~\cite{i-do-declare}. 
    In particular, we implement a two-phase commit protocol where agents, represented by nodes, vote either ``yes'' or ``no'' for transaction identifiers. 
    Such a protocol could be part of a bigger system, where transactions are distributed across agents and each agent may only perform the transaction locally if \emph{all} agents want to do this. 
    A single \emph{coordinator} node is responsible for combining the votes for each transaction identifier $t$: the coordinator broadcasts ``yes'' for $t$ if all votes for $t$ are ``yes'', and ``no'' otherwise. 
    Each agent stores the decision of the coordinator.
    
    Because the agents and the coordinator have different roles, we make two separate $\dedalus$ programs.%
        \footnote{In our formal definitions, all nodes execute the same $\dedalus$ program. However, it is easy to simulate two different programs by giving every node the union of both programs, but using a flag to guard the rules of each program. In this example, we can then assume that one node gets a ``coordinator'' flag as input, and the other nodes get an ``agent'' flag as input.}
    First, the agent nodes are assigned the following simple $\dedalus$ program, whose relations are explained below:
    \begin{cenchop}
    $\pred{vote}(\var t,\var x,\var v)\sugas\var y\gets \pred{myVote}(\var t,\var v),\,%
        \relid(\var x),\,\pred{coord}(\var y).$
    \qsp
    $\pred{outcome}(\var t,\var v)\sugind\gets\pred{outcome}(\var t,\var v).$
    \end{cenchop}
    Here, the \emph{edb} relations are: $\rar{\pred{myVote}}2$ that maps each transaction identifier $t$ to a local vote ``yes'' or ``no'', $\rar{\relid}1$ storing the identifier of the agent, and $\rar{\pred{coord}}1$ storing the identifier of the coordinator.
    Also, the relations $\rar{\pred{vote}}3$ and $\rar{\pred{outcome}}2$ represent respectively the outgoing votes and the final decision by the coordinator.
    
    Second, the coordinator node is assigned the $\dedalus$ program shown in Figure~\ref{fig:2pc}. The coordinator has the following \emph{edb} relations: relation $\rar T1$ containing all transaction identifiers, relations $\rar Y1$ and $\rar N1$ containing the constants ``yes'' and ``no'' respectively, and relation $\rar{\pred{agents}}1$ containing all voting agents.
    The coordinator uses an inductive rule to gradually accumulate all votes for each transaction identifier. Votes can have arbitrary delays, but in our model the delays are always finite.
    In each computation step, the deductive rules at the coordinator recompute a relation $\pred{complete}$ that contains the transaction identifiers for which all votes have been received.
    When a transaction identifier $t$ has at least one ``no'' vote, the coordinator decides ``no'' for $t$, and otherwise the coordinator decides ``yes'' for $t$. The final decision is broadcast to all agents. The coordinator adds the transactions with a decision to a log, so the decision will not be broadcast again.
    \qed
\end{example}

\begin{figure}
    \begin{framed}
    \begin{cenchop}
        $\pred{vote}(\var t, \var x, \var v)\sugind\gets\pred{vote}(\var t,\var x,\var v).$\\
        $\pred{known}(\var t,\var x)\gets\pred{vote}(\var t,\var x,\var v).$\\
        $\pred{missing}(\var t)\gets T(\var t),\, \pred{agent}(\var x),\, \neg\pred{known}(\var t,\var x).$\\
        $\pred{complete}(\var t)\gets T(\var t),\, \neg\pred{missing}(\var t).$\\      
        $\pred{decideNo}(\var t)\gets \pred{votes}(\var t,\var x,\var v),\,N(\var v).$\\
        $\pred{decideYes}(\var t)\gets \pred{complete}(\var t),\,\neg\pred{decideNo}(\var t).$
        \qsp
        $\pred{outcome}(\var t,\var v)\sugas\var y \gets \pred{decideNo}(\var t),\, \neg\pred{log}(\var t),\, N(\var v),\, \pred{agent}(\var y).$\\
        $\pred{outcome}(\var t,\var v)\sugas\var y \gets\pred{decideYes}(\var t),\,\neg\pred{log}(\var t),\,Y(\var v),\,\pred{agent}(\var y).$\\
        $\pred{log}(\var t)\sugind\gets\pred{complete}(\var t).$\\
        $\pred{log}(\var t)\sugind\gets\pred{log}(\var t)$.
    \end{cenchop}
    \end{framed}
    \caption{\label{fig:2pc}$\dedalus$ (coordinator) program
    for Example~\ref{ex:2pc}.}
\end{figure}

\subsection{Auxiliary Notations and Relations}

\label{sub:pure-P-notations-and-relations}

Let $\ded$ be a $\dedalus$ program. Let $\rar Rk\in\schof{\ded}$.
We will use facts of the form $R(x,s,a_{1},\ldots,a_{k})$ to express that
fact $R(a_{1},\ldots,a_{k})$ is present at a node $x$ during its
local step $s$, with $s\in\Nat$, after the deductive rules are executed.
We call $x$ the \emph{location specifier} and $s$ the \emph{timestamp}.
In order to represent timestamps, we assume $\Nat\subseteq\univ$. 

We write $\toloct{\schof{\ded}}$ to denote the database schema obtained
from $\schof{\ded}$ by incrementing the arity of every relation by
two. The two extra components will contain the location specifier
and timestamp.%
\footnote{The abbreviation \sq{$\mathrm{LT}$} stands for ``location specifier
and timestamp''. %
} For an instance $I$ over $\schof{\ded}$, $x\in\univ$ and $s\in\Nat$,
we write $\addlt Ixs$ to denote the facts over $\toloct{\schof{\ded}}$
that are obtained by prepending location specifier $x$ and timestamp
$s$ to every fact of $I$. Also, if $L$ is a sequence of literals
over $\schof{\ded}$, and $\var x,\var s\in\uvar$, we write $\addlt L{\var x}{\var s}$
to denote the sequence of literals over $\toloct{\schof{\ded}}$ that
is obtained by adding location specifier $\var x$ and timestamp $\var s$
to the literals in $L$ (negative literals stay negative).

We also need auxiliary relation names, that are assumed not to be
used in $\schof{\ded}$; these are listed in Table~\ref{tab:relation-names}.%
\footnote{In practice, auxiliary relations can be differentiated from those
in $\schof{\ded}$ by a namespace mechanism.%
} The concrete purpose of these relations will become clear in the
following subsections.

\begin{table}

\caption{Relation names not in $\protect\schof{\protect\ded}$.}
\label{tab:relation-names}
\begin{centering}

\begin{tabular}{p{0.4\textwidth}|p{0.4\textwidth}}
\hline 
Relation Names & Meaning\tabularnewline
\hline 
\hline 
$\relall$ & network\tabularnewline
\hline 
$\reltime$, $\timesucc$, $<$, $\neq$ & timestamps\tabularnewline
\hline 
$\before$ & happens-before relation\tabularnewline
\hline 
$\cand_{R}$, $\chosen_{R}$, $\other_{R}$, for each relation name
$R$ in $\idb{\ded}$ & messages\tabularnewline
\hline 
$\hassender$, $\issmaller$, $\hasmax$, $\rcvinf$ & only a finite number of messages arrive at each step of a node\tabularnewline
\hline 
\end{tabular}

\end{centering}
\end{table}

We define the following schema 
\[
\schtime=\{\rar{\reltime}1,\,\rar{\timesucc}2,\,\rar{<\!}2,\,\rar{\neq\!}2\}.
\]
The relations \sq{$<$} and \sq{$\neq$} will be written in infix
notation in rules. We consider only the following instance over $\schtime$:
\begin{eqnarray*}
\Itime & = & \{\reltime(s),\,\timesucc(s,s+1)\mid s\in\Nat\}\\
 &  & {}\cup\{(s<t)\mid s,t\in\Nat:\, s<t\}\\
 &  & {}\cup\{(s\neq t)\mid s,t\in\Nat:\, s\neq t\}.
\end{eqnarray*}
Intuitively, the instance $\Itime$ provides timestamps together with
relations to compare them.

\subsection{Dynamic Choice Transformation}

\label{sub:basic-transformation}

Let $\ded$ be a $\dedalus$ program. We describe the \emph{dynamic
choice transformation} to transform $\ded$ into a pure $\datalogneg$
program $\purech{\ded}$. The most technical part of the transformation
involves the use of dynamic choice to select an arrival timestamp
for each message generated by an asynchronous rule. The actual transformation
is presented first; next we give the semantics; and, lastly, we discuss
how the transformation can be improved.

\subsubsection{Transformation}

We incrementally construct $\purech{\ded}$. In particular, for each
rule in $\ded$, we specify what corresponding rule (or rules) should
be added to $\purech{\ded}$. For technical convenience, we assume
that rules of $\ded$ always contain at least one positive body atom.
This assumption allows us to more elegantly enforce that head variables
in rules of $\purech{\ded}$ also occur in at least one positive body
atom.%
\footnote{This assumption is not really a restriction, since a nullary positive
body atom is already sufficient.%
} Let $\var x,\var s,\var t,\var{t'}\in\uvar$ be distinct variables
not yet occurring in rules of $\ded$.  We write $\simplebody{\tvar v}$,
where $\tvar v$ is a tuple of variables, to denote any sequence $\beta$
of literals over $\schof{\ded}$, such that the variables in $\beta$
are precisely those in $\tvar v$. Also recall the notations and relation
names from Section~\ref{sub:pure-P-notations-and-relations}.

\paragraph*{Deductive rules}

For each deductive rule $R(\tvar u)\gets\simplebody{\tvar u,\tvar v}$
in $\ded$, we add to $\purech{\ded}$ the following rule:
\begin{equation}
R(\var x,\var s,\tvar u)\gets\addlt{\simplebody{\tvar u,\tvar v}}{\var x}{\var s}.\label{eq:pure-duc}
\end{equation}
This rule expresses that deductively derived facts at some node $x$
during step $s$ are (immediately) visible within step $s$ of $x$.
Note, all atoms in this rule are over $\toloct{\schof{\ded}}$.

\paragraph*{Inductive rules}

For each inductive rule $R(\tvar u)\sugind\gets\simplebody{\tvar u,\tvar v}$
in $\ded$, we add to $\purech{\ded}$ the following rule: 
\begin{equation}
R(\var x,\var t,\tvar u)\gets\addlt{\simplebody{\tvar u,\tvar v}}{\var x}{\var s},\,\timesucc(\var s,\var t).\label{eq:pure-ind}
\end{equation}
This rule expresses that inductively derived facts becomes visible
in the \emph{next} step of the \emph{same} node.

\paragraph*{Asynchronous rules}

\newcommand{\rulechosen}[1]{\chosen_{R}(\var x,\var s,\var y,\var t,\tvar w)\gets\cand_{R}(\var x,\var s,\var y,\var t,\tvar w),\,\neg\other_{R}(\var x,\var s,\var y,\var t,\tvar w)#1}

\newcommand{\ruleother}[1]{\other_{R}(\var x,\var s,\var y,\var t,\tvar w)\gets\cand_{R}(\var x,\var s,\var y,\var t,\tvar w),\,\chosen_{R}(\var x,\var s,\var y,\var{t'},\tvar w),\,\var t\neq\var{t'}#1}

\newcommand{\ruledeliv}[1]{R(\var y,\var t,\tvar w)\gets\chosen_{R}(\var x,\var s,\var y,\var t,\tvar w)#1}

We use facts of the form $\relall(x)$ to say that $x$ is a node
of the network at hand. We use facts of the form $\cand_{R}(x,s,y,t,\bar{a})$
to express that node $x$ at its step $s$ sends a message $R(\bar{a})$
to node $y$, and that $t$ could be the arrival timestamp of this
message at $y$.%
\footnote{Here, \sq{$\cand$} abbreviates ``candidate''.%
} Within this context, we use a fact $\chosen_{R}(x,s,y,t,\bar{a})$
to say that $t$ is the \emph{effective} arrival timestamp of this
message at $y$. Lastly, a fact $\other_{R}(x,s,y,t,\bar{a})$ means
that $t$ is \emph{not} the arrival timestamp of the message. Now,
for each asynchronous rule 
\[
R(\tvar u)\sugas{\var y}\gets\simplebody{\tvar u,\tvar v,\var y}
\]
in $\ded$, letting $\tvar w$ be a tuple of new and distinct variables
with $\len{\tvar w}=\len{\tvar u}$, we add to $\purech{\ded}$ the
following rules, for which the intuition is given below:
\begin{equation}
\cronrulecandidate .\label{eq:sz-cand}
\end{equation}
\begin{equation}
\rulechosen .\label{eq:chosen}
\end{equation}
\begin{equation}
\ruleother .\label{eq:other}
\end{equation}
\begin{equation}
\ruledeliv .\label{eq:deliv}
\end{equation}

Rule~(\ref{eq:sz-cand}) represents the messages that are sent.
It evaluates the body of the original asynchronous rule, verifies
that the addressee is within the network by using relation $\relall$,
and it generates all possible candidate arrival timestamps.

Now remains the matter of actually choosing \emph{one} arrival timestamp
amongst all these candidates. Intuitively, rule~(\ref{eq:chosen})
selects an arrival timestamp for a message with the condition that
this timestamp is not yet ignored, as expressed with relation $\other_{R}$.
Also, looking at rule~(\ref{eq:other}), a possible arrival timestamp
$t$ becomes ignored if there is already a chosen arrival timestamp
$t'$ with $t\neq t'$. Together, both rules have the effect that
exactly one arrival timestamp will be chosen under the stable model
semantics. This technical construction is due to \citeNS{sz_nondet},
who show how to express dynamic choice under the stable model semantics.

Rule~(\ref{eq:deliv}) represents the actual arrival of an $R$-message
with the chosen arrival timestamp: the data-tuple in the message becomes
part of the addressee's state for relation $R$. When the addressee
reads relation $R$, it thus transparently reads the arrived $R$-messages.

Note, if multiple asynchronous rules in $\ded$ have the same head
predicate $R$, only new $\cand_{R}$-rules have to be added because
the rules (\ref{eq:chosen})--(\ref{eq:deliv}) are general for all
$R$-messages.

Note that if there are asynchronous rules in $\ded$, program $\purech{\ded}$
is not syntactically stratifiable if a $\cand_{R}$-rule contains
a body atom that (indirectly) negatively depends on $R$.%
\footnote{Indeed, $\cand_{R}$ is used to compute $R$, but $R$ is also used
to compute $\cand_{R}$, giving a cycle through negation.%
} In that case, $\purech{\ded}$ might not even be locally stratifiable~\cite{apt_bol}.

\subsubsection{Semantics}

Now we define the semantics of $\purech{\ded}$. Let $H$ be an input
distributed database instance for $\ded$, over a network $\nw$.
Using the notations from Section~\ref{sub:pure-P-notations-and-relations},
we define $\decl H$ to be the following database instance over the
schema $\toloct{\edb{\ded}}\cup\{\rar{\relall}1\}\cup\schtime$:
\begin{eqnarray*}
\decl H & = & \{R(x,s,\bar{a})\mid x\in\nw,\, s\in\Nat,\, R(\bar{a})\in H(x)\}\\
 &  & {}\cup\{\relall(x)\mid x\in\nw\}\cup\Itime.
\end{eqnarray*}
In words: we make for each node its input facts available at all timestamps;
we provide the set of all nodes; and, $\Itime$ provides the timestamps
with comparison relations.%
\footnote{For simplicity we already include relation $<$ in this definition,
although this relation will only be used later.%
} Note, instance $\decl H$ is infinite because $\Nat$ is infinite.

The stable model semantics for $\datalogneg$ programs is reviewed
in Section~\ref{sub:stable-model-semantics}. Consider now the following
definition:

\begin{definition}For an input distributed database instance $H$
for $\ded$, we call any stable model of $\purech{\ded}$ on input
$\decl H$ a \emph{choice-model} of $\ded$ on input $H$.\end{definition}

\subsubsection{Possible Improvement}

\label{sub:choice-transform-improvement}

\newcommand{\lsnd}{\mathrm{snd}}

\newcommand{\lrcv}{\mathrm{rcv}}

We illustrate a shortcoming of the dynamic choice transformation.
Consider the $\dedalus$ program $\ded$ in Figure~\ref{fig:non-causality}.
We assume that in each input distributed database, the \emph{edb}
relation $\rar{\relid}1$ contains on each node just the identifier
of this node. This way, the node can send messages to itself. Relation
$T$ is the intended output relation of $\ded$. The idea is that
a node sends $A\ntup$ to itself continuously. When $A\ntup$ arrives,
we send $B\ntup$, but we also want to create an output fact $T\ntup$.
We only create $T\ntup$ when $B\ntup$ is absent. When $B\ntup$
is received, it is remembered by inductive rules. Now, we see that
the delivery of at least one $A\ntup$ is necessary to cause a $B\ntup$
to be sent. This creates the expectation that $T\ntup$ is always
created: at least one $A\ntup$ is delivered before any $B\ntup$.
This intuition can be formalized as \emph{causality}~\cite{attiyawelch_dcbook}
(see also Section~\ref{sub:run-happens-before}). 

However, this intuition is violated by some choice-models of $\ded$,
as we demonstrate next. Consider the input distributed database instance
$H$ over a singleton network $\{z\}$ that assigns the fact $\relid(z)$
to $z$. Now, consider the following choice-model $M$ of $\ded$
on $H$:%
\footnote{Using straightforward arguments, it can indeed be shown that $M$
is a stable model of $\purech{\ded}$ on $\decl H$.%
}
\[
M=\decl H\cup M_{A}^{\lsnd}\cup M_{A}^{\lrcv}\cup M_{B}^{\lsnd}\cup M_{B}^{\lrcv},
\]
where 
    \[
    \begin{array}{ll}
    M_A^\lsnd = %
                {} & \{\cand_A(z,s,z,t)\mid s,t\in\Nat\} \\
                & {}\cup \{\chosen_A(z,s,z,s+1)\mid s\in\Nat\} \\
                & {}\cup \{\other_A(z,s,z,t)\mid s,t\in\Nat,\, t\neq s+1\};\\   
    \\
    M_A^\lrcv = %
                {} & \{A(z,s)\mid s\in\Nat,\, s\geq 1\}; \\
    \\
    M_B^\lsnd = %
                {} & \{\cand_B(z,s,z,t)\mid s,t\in\Nat,\, s\geq 1\} \\
                & {}\cup \{\chosen_B(z,1,z,0)\} \\
                & {}\cup \{\chosen_B(z,s,z,s+1)\mid s\in\Nat,\, s\geq 2\} \\
                & {}\cup \{\other_B(z,1,z,t)\mid t\in\Nat,\, t\neq 0\} \\
                & {}\cup \{\other_B(z,s,z,t)\mid s,t\in\Nat,\, s\geq 2,\, t\neq s+1\};\\
    \\
    M_B^\lrcv = %
                {} & \{B(z,s)\mid s\in\Nat\}.
    \end{array}
    \]    
In $M_{B}^{\lsnd}$, note that one
$B$-message is sent at timestamp $1$ of $z$, and arrives at timestamp
$0$ of $z$. We immediately see that this message is peculiar: we
should not be able to send a message to arrive in the past. Because
of the stray message $B\ntup$, the fact $B\ntup$ exists at all timestamps:
it arrives at timestamp $0$ and is henceforth persisted by the inductive
rule for relation $B$; this is modeled by set $M_{B}^{\lrcv}$. Subsequently,
there are no ground rules of the form $T(z,s)\gets A(z,s)$ with $s\in\Nat$
in the ground program $\grp MCI$, where $C=\purech{\ded}$ and $I=\decl H$.

In the next subsection, we exclude such unintuitive stable models
using an extended transformation of $\dedalus$ programs.

\begin{figure}
\begin{framed}
\begin{cenchop}

$A\ntup\sugas{\var x}\gets\relid(\var x).$\\

$B\ntup\sugas{\var x}\gets A\ntup,\,\relid(\var x).$\qsp

$T\ntup\gets A\ntup,\,\neg B\ntup.$\\

$T\ntup\sugind\gets T\ntup.$\\

$B\ntup\sugind\gets B\ntup.$

\end{cenchop}
\end{framed}
\caption{$\dedalus$ program sensitive to non-causality.}

\label{fig:non-causality}

\end{figure}

\subsection{Causality Transformation}

\label{sub:causal-transformation}

Let $\ded$ be a $\dedalus$ program. In this section, we present
the causality transformation $\purecaus{\ded}$ that extends $\purech{\ded}$
to exclude the unintuitive stable models that we have encountered in the
previous subsection. We first present the new transformation, and
then we discuss how the transformation can still be improved.

\subsubsection{Transformation}

\newcommand{\rulebeforestep}[1]{\before(\var x,\var s,\var x,\var t)\gets\relall(\var x),\,\timesucc(\var s,\var t)#1}

\newcommand{\rulebeforetr}[1]{\before(\var x,\var s,\var y,\var t)\gets\before(\var x,\var s,\var z,\var u),\,\before(\var z,\var u,\var y,\var t)#1}

\newcommand{\rulecandidate}[1]{%
    \begin{array}{ll}%
        \cand_{R}(\var x,\var s,\var y,\var t,\tvar u)\gets %
            &  \addlt{\simplebody{\tvar u,\tvar v,\var y}}{\var x}{\var s},\,\relall(\var y),\,\reltime(\var t),\\%
            &  \neg\before(\var y,\var t,\var x,\var s)#1 %
    \end{array}%
 }

\newcommand{\rulebeforesend}[1]{\before(\var x,\var s,\var y,\var t)\gets\chosen_{R}(\var x,\var s,\var y,\var t,\tvar w)#1}

We define $\purecaus{\ded}$ again incrementally. First, we transform
deductive and inductive rules just as in $\purech{\ded}$. 

Next, we use facts of the form $\before(x,s,y,t)$ to express that
local step $s$ of node $x$ happens before local step $t$ of node
$y$. Regardless of $\ded$, we always add the following rules to
$\purecaus{\ded}$:
\begin{equation}
\rulebeforestep .\label{eq:before-step}
\end{equation}
\begin{equation}
\rulebeforetr .\label{eq:before-tr}
\end{equation}
Rule (\ref{eq:before-step}) expresses that on every node, a step
happens before the next step. Rule (\ref{eq:before-tr}) makes relation
$\before$ transitive.

Now, for each asynchronous rule 
\[
R(\tvar u)\sugas{\var y}\gets\simplebody{\tvar u,\tvar v,\var y}
\]
in $\ded$, we add to $\purecaus{\ded}$ the previous transformation
rules (\ref{eq:chosen}), (\ref{eq:other}) and (\ref{eq:deliv})
(omitting the $\cand_{R}$-rule), and we add the following new rules,
where $\tvar w$ is a tuple of new and distinct variables with $\len{\tvar w}=\len{\tvar u}$,
and $\var x$, $\var s$, and $\var t$ are also new variables:
\begin{equation}
\rulecandidate .\label{eq:cand}
\end{equation}
\begin{equation}
\rulebeforesend .\label{eq:before-send}
\end{equation}
Like the old rule~(\ref{eq:sz-cand}), rule (\ref{eq:cand}) represents
the messages that are sent, but now candidate arrival timestamps are
restricted by relation $\before$ to enforce causality. Intuitively,
this restriction prevents cycles from occurring in relation $\before$.
This aligns with the semantics of a real distributed system, where
the happens-before relation is a strict partial order~\cite{attiyawelch_dcbook}
(see also Section~\ref{sub:run-happens-before}).

Rule~(\ref{eq:before-send}) adds the causal restriction that the
local step of the sender happens before the arrival step of the addressee.
Together with the previously introduced rules (\ref{eq:before-step})
and (\ref{eq:before-tr}), this will make sure that when the addressee
later \emph{causally} replies to the sender, the reply --- as generated
by a rule of the form (\ref{eq:cand}) --- will arrive after this
first send-step of the sender.

\begin{remark}The new program $\purecaus{\ded}$ excludes unintuitive
models like the one in Section~\ref{sub:choice-transform-improvement}.
In the context of that particular example, it will be impossible to
exhibit a stable model of $\purecaus{\ded}$ in which $B\ntup$ is
sent to timestamp $0$. Indeed, $B\ntup$ can only be sent starting
from timestamp $1$; timestamp $0$ at $z$ (locally) happens before
timestamp $1$ at $z$; and, the negative $\before$-literal in rule
(\ref{eq:cand}) will prevent sending from timestamp $1$ at $z$
to timestamp $0$ at $z$. Also in scenarios where different nodes
$x$ and $y$ send messages to each other, when node $x$ replies
to a message of node $y$ sent at timestamp $s$ of $y$, node $x$
can not send the reply to a timestamp $t$ of $y$ with $t<s$.\end{remark}

\subsubsection{Semantics}

The semantics of the causality transformation is the same as for the
dynamic choice transformation:

\begin{definition}For an input distributed database instance $H$
for $\ded$, we call any stable model of $\purecaus{\ded}$ on input
$\decl H$ a \emph{causal model} of $\ded$ on input $H$.\end{definition}

\subsubsection{Possible Improvement}

\newcommand{\lbefore}{\mathrm{before}}

We illustrate a shortcoming of the causality transformation. Consider
the $\dedalus$ program $\ded$ in Figure~\ref{fig:message-heaping}.
We assume that in each input distributed database, the \emph{edb}
relation $\rar{\pred{contact}}1$ contains intended recipients of
messages. Relation $T$ serves as the output relation of $\ded$.
The idea is that a node sends $A\ntup$ to its recipients continuously.
When $A\ntup$ arrives, a recipient sets a local flag $\pred{first}\ntup$.
Later, when a second $A\ntup$ arrives, the recipient creates an output
fact $T\ntup$ that we remember by means of inductive rules. Intuitively,
we expect that $T\ntup$ is always created because the fact $A\ntup$
is sent infinitely often to a recipient, making this recipient witness
the arrival of $A\ntup$ at (hopefully) two distinct moments.

However, this intuition is violated by some causal models of $\ded$.
Consider the input distributed database instance $H$ over a network
$\{x,y\}$ that (only) assigns the fact $\pred{contact}(y)$ to $x$.
Now, consider the following causal model $M$ of $\ded$ on $H$:%
\footnote{Using straightforward arguments, it can be shown that $M$ is a stable
model of $\purecaus{\ded}$ on $\decl H$.%
}
\[
M=\decl H\cup M_{A}^{\lsnd}\cup M_{A}^{\lrcv}\cup M^{\lbefore},
\]
where 
    \[
    \begin{array}{ll}
    M_A^\lsnd = {} %
                & \{\cand_A(x,s,y,t)\mid s,t\in\Nat\} \\
                & {}\cup \{\chosen_A(x,s,y,0)\mid s\in\Nat\} \\
                & {}\cup \{\other_A(x,s,y,t)\mid s,t\in\Nat,\, t\neq 0\};\\
    \\
    M_A^\lrcv = {} %
                & \{A(y,0)\} \\
                & {}\cup \{\pred{first}(y,s) \mid s\in\Nat,\, s\geq 1\}; \\
    \\
    M^\lbefore = {} %
                 & \{\before(x,s,x,t) \mid s,t\in\Nat,\, s<t\};\\                 
                 & {}\cup \{\before(y,s,y,t) \mid s,t\in\Nat,\, s<t\};\\
                 & {}\cup \{\before(x,s,y,t) \mid s,t\in\Nat\}
    \end{array}
    \]    
In this causal model, all instances
of message $A\ntup$ that $x$ sends to $y$ arrive at timestamp $0$
of $y$. For this reason, node $y$ can not witness two different
arrivals of message $A\ntup$.  In practice, however, node $y$ can
not receive an \emph{infinite} number of messages during a timestamp,
and the deliveries of the $A\ntup$ messages would be spread out more
evenly in time. So, in the next subsection, we will additionally exclude
such infinite message arrivals, to obtain our final transformation
of $\dedalus$ programs.

\begin{figure}
\begin{framed}
\begin{cenchop}

$A\ntup\sugas{\var y}\gets\pred{contact}(\var y).$\qsp

$\pred{first}\ntup\sugind\gets A\ntup.$\\

$\pred{first}\ntup\sugind\gets\pred{first}\ntup.$\qsp

$T\ntup\gets\pred{first}\ntup,\, A\ntup.$\\

$T\ntup\sugind\gets T\ntup.$

\end{cenchop}
\end{framed}

\caption{$\dedalus$ program sensitive to infinite message grouping.}

\label{fig:message-heaping}

\end{figure}

\subsection{Causality-Finiteness Transformation}

\label{sub:causal-finite-transformation}

Let $\ded$ be a $\dedalus$ program. As seen in the previous subsection,
program $\purecaus{\ded}$ allows an infinite number of messages to
arrive at any step of a node. This does not happen in any real-world
distributed system; indeed, no node has to process an infinite number
of messages at any given moment. We consider this to be an additional
restriction that must be explicitly enforced. To this purpose, we
present in this section the causality-finiteness transformation $\pure{\ded}$
that extends $\purecaus{\ded}$.

We will approach this problem as follows. Suppose there are an infinite
number of messages that arrive at some node $y$ during its step $t$.
Since in a network there are only a finite number of nodes and a node
can only send a finite number of messages during each step (the input
domain is finite), there must be at least one node $x$ that sends
messages to step $t$ of $y$ during an infinite number of steps of
$x$. Hence there is no maximum value amongst the corresponding send-timestamps
of $x$. Thus, in order to prevent the arrival of an infinite number
of messages at step $t$ of $y$, it will be sufficient to demand
that there always \emph{is} such a maximum send-timestamp for every
sender. Below, we will implement this strategy with some concrete
rules in $\pure{\ded}$.

\subsubsection{Transformation}

\newcommand{\rulehassender}[1]{\hassender(\var y,\var t,\var x,\var s)\gets\chosen_{R}(\var x,\var s,\var y,\var t,\tvar w),\,\neg\rcvinf(\var y,\var t)#1}

\newcommand{\ruleissmaller}[1]{\begin{array}{ll}%
    \issmaller(\var y,\var t,\var x,\var s)\gets  %
            &  \hassender(\var y,\var t,\var x,\var s),\,\hassender(\var y,\var t,\var x,\var{s'}),\\%
            &  \var s<\var{s'}#1 %
    \end{array}%
}

\newcommand{\rulehasmax}[1]{\hasmax(\var y,\var t,\var x)\gets\hassender(\var y,\var t,\var x,\var s),\,\neg\issmaller(\var y,\var t,\var x,\var s)#1}

\newcommand{\rulercvinf}[1]{\rcvinf(\var y,\var t)\gets\hassender(\var y,\var t,\var x,\var s),\,\neg\hasmax(\var y,\var t,\var x)#1}

We define $\pure{\ded}$ as $\purecaus{\ded}$ extended as follows.
The additional rules can be thought of as being relative to an addressee
and a step of this addressee, represented by the variables $\var y$
and $\var t$ respectively. 

We use a fact $\rcvinf(y,t)$ to express that node $y$ receives an
infinite number of messages during its step $t$. First, we add the
following rule to $\pure{\ded}$ for \emph{each} relation $\chosen_{R}$
that results from the transformation of asynchronous rules in $\purecaus{\ded}$,
where $\var x$, $\var s$, $\var y$, and $\var t$ are variables
and $\tvar w$ is a tuple of distinct variables disjoint from the
previous ones with $\len{\tvar w}$ the arity of relation $R$ in
$\schof{\ded}$: 
\begin{equation}
\rulehassender .\label{eq:has-sender}
\end{equation}
This rule intuitively means that as long as addressee $\var y$ has
not received an infinite number of messages during its step $\var t$,
we register the senders and their send-timestamps. 

Recall the auxiliary relations defined in Section~\ref{sub:pure-P-notations-and-relations}.
Next, we add to $\pure{\ded}$ the following rules, for which the
intuition is provided below:
\begin{equation}
\ruleissmaller .\label{eq:is-smaller}
\end{equation}
\begin{equation}
\rulehasmax .\label{eq:has-max}
\end{equation}
\begin{equation}
\rulercvinf .\label{eq:rcv-inf}
\end{equation}
Rule~(\ref{eq:is-smaller}) checks for each sender and each of its
send-timestamps whether there is a later send-timestamp of that same
sender. Rule~(\ref{eq:has-max}) tries to find a maximum send-timestamp.
Finally, rule~(\ref{eq:rcv-inf}) derives a $\rcvinf$-fact if no
maximum send-timestamp was found for at least one sender.

We will show in Section~\ref{sub:partial-order-M} that in any stable
model, the above rules make sure that every node receives only a finite
number of messages at every step.

\subsubsection{Semantics}

The semantics of the causality-finiteness transformation is again
the same as for the dynamic choice transformation and the causality
transformation:

\begin{definition}For an input distributed database instance $H$
for $\ded$, we call any stable model of $\pure{\ded}$ on input $\decl H$
a \emph{causal-finite model} of $\ded$ on input $H$.\end{definition}We
will refer to a causal-finite model also simply as \emph{model}.

\section{Correctness}

\label{sec:operational}

In Section~\ref{sec:declarative}, we have described the computation
of a distributed $\datalogneg$ program by means of stable models.
By using suitable rules, we have excluded some unintuitive stable
models. But at this point we are still not sure whether the remaining
stable models really correspond to the execution of a distributed
system. We fill that gap in this section: we show that each remaining
stable model corresponds to an execution of the distributed $\datalogneg$
program under an operational semantics, and vice versa. We call such
an execution a \emph{run}, and we will only be concerned with so-called
\emph{fair} runs, where each node is made active infinitely often
and all sent messages are eventually delivered. 

We extract from each run $\run$ a \emph{trace}, denoted $\trace{\run}$,
which is a set of facts that shows in detail what each node computes
during each step. We will make this concrete in the following subsections.
But we can already state our main result, as follows:

\begin{theorem}\label{theorem:main}Let $\ded$ be a $\dedalus$
program. For each input distributed database instance $H$ for $\ded$,

\begin{enumerate}

\item[\emph{(i)}]for every fair run $\run$ of $\ded$ there is a
model $M$ of $\ded$ such that $\trace{\run}=\proj M{\toloct{\schof{\ded}}}$,
and

\item[\emph{(ii)}]for every model $M$ of $\ded$ there is a fair
run $\run$ of $\ded$ such that $\trace{\run}=\proj M{\toloct{\schof{\ded}}}$.

\end{enumerate}\qed\end{theorem}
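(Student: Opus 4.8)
The plan is to prove both inclusions by exhibiting explicit simulations between runs and models, treating the two directions as mirror images of one construction. This presupposes the formal operational semantics (configurations, transitions, the causal order $\caus$ of a run, and the definition of $\trace{\run}$) developed in the following subsections. Because $\pure{\ded}$ is in general not even locally stratifiable, I cannot appeal to the stratified semantics and must verify the stable-model property directly: for a candidate $M$ I will form the reduct $\grp M{\pure{\ded}}{\decl H}$ and argue that its least fixpoint is exactly $M$. Throughout I rely on the structural facts announced for Section~\ref{sub:partial-order-M}, namely that in any model the relation $\before$ denotes a strict partial order on location--timestamp pairs and that every step of every node receives only finitely many messages; these are precisely the properties that make a model ``schedulable'' as a fair run.

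For direction~\emph{(i)} (run to model) I would read the ingredients of a model off a fair run $\run$. The location--timestamped content of each node's state at each step yields the trace facts over $\toloct{\schof{\ded}}$; the causal order $\caus$ yields the $\before$ facts; each message sent and delivered in $\run$ contributes the matching $\cand_R$, $\chosen_R$ and $\other_R$ facts (one $\chosen_R$ per delivery, an $\other_R$ for each competing candidate arrival time); and the bookkeeping relations $\hassender$, $\issmaller$, $\hasmax$ are populated accordingly, with no $\rcvinf$ fact since fairness forces finitely many arrivals per step. Setting $M = \decl H \cup (\text{all these facts})$, I would then verify that $M$ satisfies every rule of the reduct and, conversely, that every fact of $M$ is \emph{founded}: this is an induction along the well-founded causal order of $\run$, with deductive facts justified within a step, inductive facts by the immediately preceding step, and delivered message facts by rule~(\ref{eq:deliv}), whose own foundedness is supplied by the \sacca--Zaniolo choice gadget selecting exactly the arrival time used in $\run$.

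For direction~\emph{(ii)} (model to run) I would invert the construction. Given a model $M$, the strict partial order denoted by $\before$, refined by the local successor order $\timesucc$, is linearized into a global schedule of ``activate node $x$ at its step $s$'' events; the finiteness result guarantees that each event is preceded by only finitely many others that must happen first, so a fair linearization exists in which every node is activated infinitely often and every message is delivered at the step recorded by its $\chosen_R$ fact. Executing the operational semantics along this schedule produces a run $\run$, and I would show by induction on the schedule that the state computed operationally at step $s$ of node $x$ agrees with $\proj M{\toloct{\schof{\ded}}}$ restricted to $(x,s)$ --- the deductive fixpoint, the inductive carry-over, and the arriving messages each matching the corresponding fragment of $M$ --- whence $\trace{\run} = \proj M{\toloct{\schof{\ded}}}$, and fairness of $\run$ follows since $\before$ forces every delivery and the schedule visits every step.

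The main obstacle, I expect, is twofold. In direction~\emph{(i)} the delicate point is foundedness in the non-stratified setting: the reduct's least fixpoint must neither fall short of $M$ nor overshoot it, which hinges on the choice gadget acting as a genuinely single-valued selection and on the negative $\before$-literal in rule~(\ref{eq:cand}) blocking precisely the acausal candidate timestamps. In direction~\emph{(ii)} the delicate point is realizing causality, fairness, and finiteness simultaneously by one schedule: I must turn the partial order $\before$ into an actual run without ever being compelled to deliver infinitely many messages at a single step or to send into the past, which is exactly where the finiteness relations $\rcvinf$ and $\hasmax$ and the strict-partial-order property earn their keep. Establishing those two structural facts about models is therefore the crux on which the entire equivalence rests.
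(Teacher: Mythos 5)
Your proposal matches the paper's proof in all essentials: for direction \emph{(i)} you build the model explicitly from the run's trace, its happens-before relation, and the choice/bookkeeping facts, then verify stability by the same closure-plus-foundedness double inclusion the paper carries out against the reduct; for direction \emph{(ii)} you extract the partial order from $\before$, use the finiteness machinery to obtain well-foundedness, linearize it into a schedule of node steps, and check trace equality and fairness, which is precisely the paper's construction via a well-founded total order extending $\cauM$. The two structural facts you identify as the crux (strict partial order and finitely many message arrivals per step) are exactly the claims the paper establishes in Section~\ref{sub:partial-order-M}.
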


First, Section~\ref{sub:operational-semantics} formalizes runs and
traces of runs. The proof of item \emph{(i)} of the theorem is described
in Section \ref{sub:run-to-model}. The proof of item \emph{(ii)},
which is the most difficult, is described in Section \ref{sub:model-to-run}.
We only describe the crucial reasoning steps of the proofs; the intricate
technical details can be found in the online appendix to the paper.

\subsection{Operational Semantics}

\label{sub:operational-semantics}

In this section, we give an operational semantics for $\dedalus$
that is in line with earlier formal work on declarative networking
\cite{deutsch_network,declnetw_opsem,grumbach_netlog,rtdn_pods,webdamlog}. 

Let $\ded$ be a $\dedalus$ program, and let $H$ be an input distributed
database instance for $\ded$, over a network $\nw$. The essence
of the operational semantics is as follows. Every node of $\nw$ runs
program $\ded$, and a node has access only to its own local state
and any received messages. The nodes are made active one by one in
some arbitrary order, and this continues an infinite number of times.
During each active moment of a node $x$, called a \emph{local (computation)
step}, node $x$ receives message facts and applies its deductive,
inductive and asynchronous rules. Concretely, the deductive rules,
forming a stratified $\datalogneg$ subprogram, are applied to the
incoming messages and the previous state of $x$. Next, the inductive
rules are applied to the output of the deductive subprogram, and these
allow $x$ to store facts in its memory: these facts become visible
in the next local step of $x$. Finally, the asynchronous rules are
also applied to the output of the deductive subprogram, and these
allow $x$ to send facts to the other nodes or to itself. These facts
become visible at the addressee after some arbitrary delay, which
represents asynchronous communication, as occurs for instance on the
Internet. We assume that all messages are eventually delivered (and
are thus never lost). We will refer to local steps simply as ``steps''.

We make the above sketch more concrete in the next subsections.

\subsubsection{Configurations}

\label{sub:configurations}

Let $\ded$, $H$, and $\nw$ be as above. A configuration describes
the network at a certain point in its evolution. Formally, a \emph{configuration}
of $\ded$ on $H$ is a pair $\cnf=(\cnfs,\cnfb)$ where 
\begin{itemize}
\item $\cnfs$ is a function mapping each node of $\nw$ to an instance
over $\schof{\ded}$; and,
\item $\cnfb$ is a function mapping each node of $\nw$ to a set of pairs
of the form $\pair i{\fc}$, where $i\in\Nat$ and $\fc$ is a fact
over $\idb{\ded}$.
\end{itemize}
We call $\cnfs$ and $\cnfb$ the \emph{state} and \emph{(message)
buffer} respectively. The state says for each node what facts it
has stored in its memory, and the message buffer $\cnfb$ says for
each node what messages have been sent to it but that are not yet
received. The reason for having numbers $i$, called \emph{send-tags},
attached to facts in the image of $\cnfb$ is merely a technical convenience:
these numbers help separate multiple instances of the same fact when
it is sent at different moments (to the same addressee), and these
send-tags will not be visible to the $\dedalus$ program. For example,
if the buffer of a node $x$ simultaneously contains pairs $\pair 3{\fc}$
and $\pair 7{\fc}$, this means that $\fc$ was sent to $x$ during
the operational network transitions with indices $3$ and $7$, and
that both particular instances of $\fc$ are not yet delivered to
$x$. This will become more concrete in Section~\ref{sub:transitions-and-runs}.

The \emph{start configuration of $\ded$ on input $H$}, denoted
$\cnfstart{\ded}H$, is the configuration $\cnf=(\cnfs,\cnfb)$ defined
by $\cnfs(x)=H(x)$ and $\cnfb(x)=\emptyset$ for each $x\in\nw$.
In words: for every node, the state is initialized with its local
input fragment in $H$, and there are no sent messages.

\subsubsection{Subprograms}

\label{sub:subprograms}

We look at the operations that are executed locally during each step
of a node. We have mentioned that the three types of $\dedalus$ rules
each have their own purpose in the operational semantics. For this
reason, we split the program $\ded$ into three subprograms, that
contain respectively the deductive, inductive and asynchronous rules.
In Section~\ref{sub:transitions-and-runs}, we describe how these
subprograms are used in the operational semantics.
\begin{itemize}
\item First, we define $\deduc{\ded}$ to be the $\datalogneg$ program
consisting of precisely all deductive rules of $\ded$. 
\item Secondly, we define $\induc{\ded}$ to be the $\datalogneg$ program
consisting of all inductive rules of $\ded$ after the annotation
\sq{$\sugind$} in their head is removed. 
\item Thirdly, we define $\async{\ded}$ to be the $\datalogneg$ program
consisting of precisely all rules
\[
T(\var y,\tvar u)\gets\simplebody{\tvar u,\var y}
\]
where 
\[
T(\tvar u)\sugas{\var y}\gets\simplebody{\tvar u,\var y}
\]
is an asynchronous rule of $\ded$. So, we basically put the variable
$\var y$ as the first component in the (extended) head atom. The
intuition for the generated head facts is that the first component
will represent the addressee. 
\end{itemize}
Note that the programs $\deduc{\ded}$, $\induc{\ded}$ and $\async{\ded}$
are just $\datalogneg$ programs over the schema $\schof{\ded}$,
or a subschema thereof. Moreover, $\deduc{\ded}$ is syntactically
stratifiable because the deductive rules in every $\dedalus$ program
must be syntactically stratifiable. It is possible however that $\induc{\ded}$
and $\async{\ded}$ are not syntactically stratifiable. Now we define
the semantics of each of these three subprograms.

Let $I$ be a database instance over $\schof{\ded}$. During each
step of a node, the intuition of the deductive rules is that they
``complete'' the available facts by adding all new facts that can
be logically derived from them. This calls for a fixpoint semantics,
and for this reason, we define the \emph{output of $\deduc{\ded}$
on input $I$}, denoted as $\deduc{\ded}(I)$, to be given by the
stratified semantics. This implies $I\subseteq\deduc{\ded}(I)$. Importantly,
$I$ is allowed to contain facts over $\idb{\ded}$, and the intuition
is that these facts were derived during a previous step (by inductive
rules) or received as messages (as sent by asynchronous rules). This
will become more explicit in Section~\ref{sub:transitions-and-runs}.

During each step of a node, the intuition behind the inductive rules
is that they store facts in the memory of the node, and these stored
facts will become visible during the next step. There is no notion
of a fixpoint here because facts that will become visible in the next
step are not available in the current step to derive more facts. For
this reason, we define the \emph{output of $\induc{\ded}$ on input
$I$} to be the set of facts derived by the rules of $\induc{\ded}$
for all possible satisfying valuations in $I$, in just one derivation
step. This output is denoted as $\induc{\ded}\mstep I$.

During each step of a node, the intuition behind the asynchronous
rules is that they generate message facts that are to be sent around
the network. The \emph{output for $\async{\ded}$ on input $I$}
is defined in the same way as for $\induc{\ded}$, except that we
now use the rules of $\async{\ded}$ instead of $\induc{\ded}$. This
output is denoted as $\async{\ded}\mstep I$. The intuition for not
requiring a fixpoint for $\async{\ded}$ is that a message fact will
arrive at another node, or at a later step of the sender node, and
can therefore not be read during sending.

Regarding data complexity \cite{vardi_comp}, for each subprogram
the output can be computed in $\ptime$ with respect to the size of
its input.

\subsubsection{Transitions and Runs}

\label{sub:transitions-and-runs}

Transitions formalize how to go from one configuration to another.
Here we use the subprograms of $\ded$. Transitions are chained to
form a \emph{run}. Regarding notation, for a set $m$ of pairs of
the form $\pair i{\fc}$, we define $\untag m=\{\fc\mid\exists i\in\Nat:\,\pair i{\fc}\in m\}$.

A \emph{transition with send-tag $i\in\Nat$} is a five-tuple $(\cnf_{a},x,m,i,\cnf_{b})$
such that $\cnf_{a}=(\cnfs_{a},\cnfb_{a})$ and $\cnf_{b}=(\cnfs_{b},\cnfb_{b})$
are configurations of $\ded$ on input $H$, $x\in\nw$, $m\subseteq\cnfb_{a}(x)$,
and, letting
\[
 \begin{array}{ll}
 & I=\cnfs_{a}(x)\cup\untag m,\\
 & D=\deduc{\ded}(I),\\
 & \sendto iy=\{\pair i{\, R(\bar{a})}\mid R(y,\bar{a})\in\async{\ded}(D)\}\text{ for each }y\in\nw,
 \end{array}
\]
for $x$ and each $y\in\nw\setminus\{x\}$ we have 
\[
 \begin{array}{ll}
    & \cnfs_{b}(x)=H(x)\cup\induc{\ded}(D),\\
    & \cnfb_{b}(x)=(\cnfb_{a}(x)\setminus m)\cup\sendto ix,
 \end{array}
 \qquad
 \begin{array}{ll}
    & \cnfs_{b}(y)=\cnfs_{a}(y),\\
    & \cnfb_{b}(y)=\cnfb_{a}(y)\cup\sendto iy.
 \end{array}
\]

We call $\cnf_{a}$ and $\cnf_{b}$ respectively the \emph{source}
and \emph{target} configuration, and say this transition is \emph{of}
the \emph{active} node $x$. Intuitively, the transition expresses
that $x$ reads its old state together with the received facts in
$\untag m$ (thus without the tags), and describes the subsequent
computation: subprogram $\deduc{\ded}$ completes the available information;
the new state of $x$ consists of the input facts of $x$ united with
all facts derived by subprogram $\induc{\ded}$; and, subprogram $\async{\ded}$
generates \emph{messages}, whose first component indicates the addressee.%
\footnote{Note, input facts are preserved by the transition. This aligns with
the design of $\dedalus$, where we do not allow facts to be retracted;
only negation as failure is permitted.%
} Note, $\induc{\ded}$ and $\async{\ded}$ do not influence each other,
and can be thought of as being executed in parallel. Also, for each
$y\in\nw$, the set $\sendto iy$ contains all messages addressed
to $y$, with send-tag $i$ attached. Messages with an addressee outside
the network are ignored. This way of defining local computation closely
corresponds to that of the language $\webdamlog$ \cite{webdamlog}.
If $m=\emptyset$, we call the transition a \emph{heartbeat}.

A \emph{run $\run$ of $\ded$ on input $H$} is an infinite sequence
of transitions, such that \emph{(i)} the source configuration of
the first transition is $\cnfstart{\ded}H$, \emph{(ii)} the target
configuration of each transition is the source configuration of the
next transition, and \emph{(iii)} the transition at ordinal $i$
of the sequence uses send-tag $i$. Ordinals start at $0$ for technical
convenience. The resulting transition system is highly non-deterministic
because in each transition we can choose the active node and also
what messages to deliver; the latter choice is represented by the
set $m$ from above.

\begin{remark}[Parallel transitions]Transitions as defined here can
simulate \emph{parallel} transitions in which multiple nodes are
active at the same time and receive messages from their respective
buffers. Indeed, if we would have multiple nodes active during a parallel
transition, they would receive messages from their buffers in isolation,
and this can be represented by a chain of transitions in which these
nodes receive one after the other precisely the messages that they
received in the parallel transition. For this reason, we limit our
attention to transitions with single active nodes.\end{remark}

\subsubsection{Fairness and Arrival Function}

\label{sub:paper-fairness-and-arrival}

In the literature on process models it is customary to require certain
fairness conditions on the execution of a system, for instance to
exclude some extreme situations that are expected not to happen in
reality~\cite{francez_fairness,apt_fairness,lamport_fairness}. 

Let $\run$ be a run of $\ded$ on $H$. For every transition $i\in\Nat$,
let $\cnf_{i}=(\cnfs_{i},\cnfb_{i})$ denote the source configuration
of transition $i$. Now, $\run$ is called \emph{fair} if:
\begin{itemize}
\item every node is the active node in an infinite number of transitions
of $\run$; and,
\item for every transition $i\in\Nat$, for every $y\in\nw$, for every
pair $\pair j{\fc}\in\cnfb_{i}(y)$, there is a transition $k$ with
$i\leq k$ in which $\pair j{\fc}$ is delivered to $y$.
\end{itemize}
Intuitively, the fairness conditions disallow starvation: every node
does an infinite number of local computation steps and every sent
message is eventually delivered. We consider only fair runs in this
paper. Note, a fair run exists for every input because heartbeats
remain possible even when there are no messages to deliver.

In the second condition about message deliveries, it is possible that
$k=i$, and in that case $\pair j{\fc}$ is delivered in the transition
immediately following configuration $\cnf_{i}$. Because the pair
$\pair j{\fc}$ can be in the message buffer of multiple nodes, this
$k$ is not unique for the pair $\pair j{\fc}$ by itself. But, when
we also consider the addressee $y$, it follows from the operational
semantics that this $k$ is unique for the triple $(j,y,\fc)$. 

This reasoning gives rise to a function $\arr$, called the \emph{arrival
function for $\run$}, that is defined as follows: for every transition
$i$, for every node $y$, for every message $\fc$ sent to addressee
$y$ during $i$, the function $\arr$ maps $(i,y,\fc)$ to the transition
ordinal $k$ in which $\pair i{\fc}$ is delivered to $y$. We always
have $\arr(i,y,\fc)>i$. Indeed, the delivery of a message can only
happen after it was sent. So, when the delivery of one message causes
another to be sent, then the second one is delivered in a later transition.
This is related to the topic of causality that we have introduced
in Section~\ref{sec:declarative}. This topic will also be further
discussed in Sections~\ref{sub:run-to-model} and \ref{sub:model-to-run}.

\subsubsection{Timestamps and Trace}

\label{sub:run-timestamps-and-trace}

For each transition $i$ of a run, we define the \emph{timestamp}
of the active node $x$ during $i$ to be the number of transitions
of $x$ that come strictly before $i$. This can be thought of as
the \emph{local} (zero-based) clock of $x$ during $i$, and is
denoted $\locR i$. For example, suppose we have the following sequence
of active nodes: $x$, $y$, $y$, $x$, $x$, etc. If we would write
the timestamps next to the nodes, we get this sequence: $(x,0)$,
$(y,0)$, $(y,1)$, $(x,1)$, $(x,2)$, etc.

As a counterpart to function $\locR{\cdot}$, for each $(x,s)\in\nwnat$
we define $\globR{x,s}$ to be the transition ordinal $i$ of $\run$
such that $x$ is the active node in transition $i$ and $\locR i=s$.
In words: we find the transition in which node $x$ does its local
computation step with timestamp $s$. It follows from the definition
of $\locR{\cdot}$ that $\globR{x,s}$ is uniquely defined.

Let $\run$ be a run of $\ded$ on input $H$. Recall that $H$ is
over network $\nw$. We now capture the computed data during $\run$
as a set of facts that we call the \emph{trace}. For each transition
$i\in\Nat$, let $x_{i}$ denote the active node, and let $D_{i}$
denote the output of subprogram $\deduc{\ded}$ during $i$. The operational
semantics implies that $D_{i}$ consists of $\romI$ the input $\mathit{edb}$-facts
at $x_{i}$; $\romII$ the inductively derived facts during the previous
step of $x_{i}$ (if $\locR i\geq1$); $\romIII$ the messages delivered
during transition $i$; and, $\romIV$ all facts deductively derived
from the previous ones. So, intuitively, $D_{i}$ contains \emph{all}
local facts over $\schof{\ded}$ that $x_{i}$ has during transition
$i$. 

Recall the notations of Section~\ref{sub:pure-P-notations-and-relations}.
Now, the trace of $\run$ is the following instance over $\toloct{\schof{\ded}}$:
\[
\trace{\run}=\bigcup_{i\in\Nat}\addlt{D_{i}}{x_{i}}{\,\locR i}.
\]
The trace shows in detail what happens in the run, in terms of what
facts are available on the nodes during which of their steps.

\subsection{Run to Model}

\label{sub:run-to-model}

Let $\ded$ be a $\dedalus$ program and let $H$ be an input distributed
database instance for $\ded$, over a network $\nw$. Let $\run$
be a fair run of $\ded$ on input $H$. We show there is a model $M$
of $\ded$ on $H$ such that $\trace{\run}=\proj M{\toloct{\schof{\ded}}}$.
The main idea is that we translate the transitions of $\run$ to facts
over the schema of $\pure{\ded}$. 

First, in Section~\ref{sub:run-happens-before}, we extract the happens-before
relation on nodes and timestamps from $\run$. Next, in Section~\ref{sub:run-to-model-construction},
we define the desired model $M$.

\subsubsection{Happens-before Relation}

\label{sub:run-happens-before}

In the operational semantics, we order the actions of the nodes on
a fine-grained global time axis, by ordering the transitions in the
runs. By contrast, we now define a partial order on $\nwnat$, saying
which steps of nodes must have come before which steps of (other)
nodes, without referring to the global ordering imposed by transitions. 

First, we extract from $\run$ the message sending and receiving events.
Formally, we define $\mesgev{\run}$ to be the set of all tuples $(x,s,y,t,\fc)$,
with $\fc$ a fact, and denoting $i=\globR{x,s}$ and $j=\globR{y,t}$,
such that $\arr(i,y,\fc)=j$, i.e., node $x$ during step $s$ sends
message $\fc$ to $y$ that arrives at the step $t$ of $y$, with
possibly $x=y$. In words: $\mesgev{\run}$ contains the direct relationships
between local steps of nodes that arise through message sending. 

From $\run$ we can now extract the \emph{happens-before} relation
\cite{attiyawelch_dcbook} on the set $\nwnat$, which is defined
as the smallest relation $\caus$ on $\nwnat$ that satisfies the
following three conditions:
\begin{itemize}
\item for each $(x,s)\in\nwnat$, we have $(x,s)\caus(x,s+1)$;
\item $(x,s)\caus(y,t)$ whenever for some fact $\fc$ we have $(x,s,y,t,\fc)\in\mesgev{\run}$;
\item $\caus$ is transitive, i.e., $(x,s)\caus(z,u)\caus(y,t)$ implies
$(x,s)\caus(y,t)$.
\end{itemize}
We call these three cases respectively \emph{local} edges, \emph{message}
edges and \emph{transitive} edges. Naturally, the first two cases
express a direct relationship, whereas the third case is more indirect.

Note, if two runs on the same input have the same happens-before relation,
they do not necessarily have the same trace. This is because relation
$\caus$ does not talk about the specific messages that arrive at
the nodes.

We will now show that $\caus$ is a strict partial order. Consider
first the following property:

\begin{lemma}\label{lem:caus-global-order}For every run $\run$,
for each $(x,s)\in\nwnat$ and $(y,t)\in\nwnat$, if $(x,s)\caus(y,t)$
then $\globR{x,s}<\globR{y,t}$.\end{lemma}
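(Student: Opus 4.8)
The plan is to prove this by structural induction on the definition of the happens-before relation $\caus$, following the three cases (local, message, transitive edges) by which $\caus$ is generated. The claim is precisely that $\caus$ is compatible with the global transition ordering via the $\globR{\cdot}$ map, so the natural strategy is to show each generating rule for $\caus$ preserves the strict inequality $\globR{x,s} < \globR{y,t}$.

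First I would handle the \emph{local} edges. For $(x,s) \caus (x,s+1)$, I need $\globR{x,s} < \globR{x,s+1}$. This follows directly from the definition of $\locR{\cdot}$ and $\globR{\cdot}$: since $\globR{x,s}$ is the unique transition where $x$ is active with local timestamp $s$, and timestamps count the number of prior transitions of $x$, the transition realizing local step $s+1$ of $x$ must come strictly after the one realizing local step $s$. Second, for \emph{message} edges, suppose $(x,s,y,t,\fc) \in \mesgev{\run}$. By definition this means that with $i = \globR{x,s}$ and $j = \globR{y,t}$, we have $\arr(i,y,\fc) = j$. I would then invoke the established property from Section~\ref{sub:paper-fairness-and-arrival} that $\arr(i,y,\fc) > i$, which immediately yields $j > i$, i.e.\ $\globR{y,t} > \globR{x,s}$.

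The \emph{transitive} case is the inductive step: assuming $\globR{x,s} < \globR{z,u}$ and $\globR{z,u} < \globR{y,t}$ already hold (as the induction hypotheses for the two sub-relations $(x,s)\caus(z,u)$ and $(z,u)\caus(y,t)$), transitivity of $<$ on $\Nat$ gives $\globR{x,s} < \globR{y,t}$. I would phrase the whole argument as: the relation $\{\,((x,s),(y,t)) : \globR{x,s} < \globR{y,t}\,\}$ contains all local and message edges and is transitive, hence contains $\caus$, since $\caus$ is by definition the \emph{smallest} relation closed under these three conditions.

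I do not expect a serious obstacle here; the lemma is essentially the statement that the locally-defined causal order embeds into the globally-imposed transition order, and every ingredient is already available. The only point requiring minor care is the message-edge case, where one must correctly unwind the definition of $\mesgev{\run}$ to see that the relevant arrival-function inequality $\arr(i,y,\fc) > i$ is exactly what is needed; this is where the physical content --- a message is delivered strictly after it is sent --- is used, and everything else is the bookkeeping of a minimal-relation induction.
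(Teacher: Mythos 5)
Your proof is correct, and its mathematical content coincides with the paper's: both reduce the claim to the same two base cases (local edges increase $\globR{\cdot}$ by the definition of timestamps; message edges increase it because $\arr(i,y,\fc)>i$, i.e., delivery strictly follows sending) and then propagate through transitivity of $<$ on $\Nat$. Where you differ is in how that propagation is organized. The paper argues by path decomposition: given $(x,s)\caus(y,t)$, it replaces every transitive edge by a subpath of non-transitive edges, obtaining a chain of local and message edges along which the individual inequalities compose. You instead invoke the universal property of $\caus$ as the \emph{smallest} relation closed under the three generating conditions: the relation $\{((x,s),(y,t)) \mid \globR{x,s}<\globR{y,t}\}$ contains all local and message edges and is transitive, hence contains $\caus$. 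Your formulation is arguably the tighter one: the paper's substitution step (``each transitive edge can be replaced by a subpath of non-transitive edges'') is itself a fact about the minimal relation that, strictly speaking, requires an induction on the derivation of the edge, whereas your closure argument obtains the same conclusion for free from minimality. Both proofs rest on the same physical ingredient in the message case, which you correctly isolate as the only non-bookkeeping step.
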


\begin{proof}We can consider a path from $(x,s)$ to $(y,t)$ in
$\caus$. We can substitute each transitive edge in this path with
a subpath of non-transitive edges. This results in a path of only
non-transitive edges: 
\[
(x_{1},s_{1})\caus(x_{2},s_{2})\caus\ldots\caus(x_{n},s_{n}),
\]
where $n\geq2$, $(x_{1},s_{1})=(x,s)$ and $(x_{n},s_{n})=(y,t)$.
Because there are no transitive edges, for each $i\in\{1,\ldots,n-1\}$,
the edge $(x_{i},s_{i})\caus(x_{i+1},s_{i+1})$ falls into one of
the following two cases:
\begin{itemize}
\item $x_{i}=x_{i+1}$ and $s_{i+1}=s_{i}+1$ (local edge);
\item $x_{i}$ during step $s_{i}$ sends a message to $x_{i+1}$ that arrives
in step $s_{i+1}$ of $x_{i+1}$ (message edge).
\end{itemize}
In the first case, it follows from the definition of $\locR{\cdot}$
that 
\[
\globR{x_{i},s_{i}}<\globR{x_{i+1},s_{i+1}}.
\]
For the second case, by our operational semantics, every message is
always delivered in a later transition than the one in which it was
sent. So, again we have 
\[
\globR{x_{i},s_{i}}<\globR{x_{i+1},s_{i+1}}.
\]
Since this property holds for all the above edges, by transitivity
we thus have $\globR{x,s}<\globR{y,t}$, as desired.\end{proof}

\begin{corollary}For every run $\run$, the relation $\caus$ is
a strict partial order on $\nwnat$.\end{corollary}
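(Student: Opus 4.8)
The plan is to verify the two defining properties of a strict partial order, namely irreflexivity and transitivity, and then to obtain asymmetry as an immediate consequence. Transitivity requires no work at all: it is literally one of the three conditions in the definition of $\caus$, so $\caus$ is transitive by construction. Hence the only thing to check is irreflexivity, and for this the heavy lifting has already been done by Lemma~\ref{lem:caus-global-order}.

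For irreflexivity I would argue by contradiction. Suppose $(x,s)\caus(x,s)$ for some $(x,s)\in\nwnat$. Applying Lemma~\ref{lem:caus-global-order} with $(y,t)=(x,s)$ yields $\globR{x,s}<\globR{x,s}$, which is impossible because $<$ is irreflexive on $\Nat$. Therefore $(x,s)\not\caus(x,s)$ for every $(x,s)\in\nwnat$. Asymmetry then follows in the standard order-theoretic way: if both $(x,s)\caus(y,t)$ and $(y,t)\caus(x,s)$ held, transitivity would give $(x,s)\caus(x,s)$, contradicting the irreflexivity just established. One could equally read asymmetry directly off the lemma, since the two hypotheses would force the chain $\globR{x,s}<\globR{y,t}<\globR{x,s}$, again a contradiction.

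I do not expect any genuine obstacle in this corollary; the substantive content lives entirely in Lemma~\ref{lem:caus-global-order}, which embeds $\caus$ into the strict total order of transition ordinals given by $\globR{\cdot}$. Once that order-preserving embedding into $(\Nat,<)$ is available, every order property of $\caus$ is simply inherited from $<$ on $\Nat$, and the corollary is a two-line consequence of the preceding lemma.
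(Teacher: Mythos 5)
Your proposal is correct and follows exactly the paper's own argument: transitivity is immediate from the definition of $\caus$, and irreflexivity follows from Lemma~\ref{lem:caus-global-order} since a self-edge would force $\globR{x,s}<\globR{x,s}$. The extra remark on asymmetry is fine but redundant, as it is a standard consequence of irreflexivity plus transitivity.
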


\begin{proof}From its definition, we immediately have that $\caus$
is transitive. Secondly, irreflexivity for $\caus$ follows from Lemma~\ref{lem:caus-global-order}.\end{proof}

\subsubsection{Definition of $M$}

\label{sub:run-to-model-construction}

Now we define the model $M$: 
\[
M=\decl H\cup\bigcup_{i\in\Nat}\slice i,
\]
where $\slice i$ for each $i\in\Nat$ is an instance over the schema
of $\pure{\ded}$ that describes transition $i$ of $\run$.%
\footnote{Note, $M$ must include the input $\decl H$ by definition of stable
model (see Section~\ref{sub:stable-model-semantics}).%
} Let $i\in\Nat$. We define $\slice i$ as
\[
\slice i=\slicecaus i\cup\slicefin i\cup\sliceduc i\cup\slicesnd i,
\]
where each of these sets focuses on different aspects of transition
$i$, and they are defined next. Regarding notation, let $\caus$
be the happens-before relation as defined in the preceding subsection;
let $\locR{\cdot}$, $\globR{\cdot}$, and $\arr$ be as defined in
Section~\ref{sub:operational-semantics}; let $x_{i}$ denote the
active node of transition $i$; and, let us abbreviate $s_{i}=\locR i$.

\paragraph*{Causality}

We define $\slicecaus i$ to consist of all facts $\before(x,s,x_{i},s_{i})$
for which $(x,s)\in\nwnat$ and $(x,s)\caus(x_{i},s_{i})$. Intuitively,
$\slicecaus i$ represents the joint result of rules (\ref{eq:before-step}),
(\ref{eq:before-tr}), and (\ref{eq:before-send}), corresponding
to respectively the local edges, transitive edges, and message edges
of $\caus$.

\paragraph*{Finite Messages}

We define $\slicefin i$ to represent that only a finite number of
messages are delivered in transition $i$, thus at step $s_{i}$ of
node $x_{i}$. We proceed as follows. First, let $\senders i$ be
the set of all pairs $(x,s)\in\nwnat$ such that, denoting $j=\globR{x,s}$,
for some fact $\fc$ we have $\arr(j,x_{i},\fc)=i$, i.e., the node
$x$ during its step $s$ sends a message to $x_{i}$ with arrival
timestamp $s_{i}$. It follows from the operational semantics that
for each $(x,s)\in\senders i$ we have $\globR{x,s}<i$. Now, we define
$\slicefin i$ to consist of the following facts:
\begin{itemize}
\item the fact $\hassender(x_{i},s_{i},x,s)$ for each $(x,s)\in\senders i$,
representing the result of rule (\ref{eq:has-sender});
\item the fact $\issmaller(x_{i},s_{i},x,s)$ for each $(x,s)\in\senders i$
and $(x,s')\in\senders i$ with $s<s'$, representing the result of
rule (\ref{eq:is-smaller}); and,
\item the fact $\hasmax(x_{i},s_{i},x)$ for each sender-node $x$ mentioned
in $\senders i$, representing the result of rule (\ref{eq:has-max}).
\end{itemize}
We know that in $\run$ only a finite number of messages arrive at
step $s_{i}$ of $x_{i}$. Hence, we add no fact $\rcvinf(x_{i},s_{i})$
to $\slicefin i$.  This also explains why the specification of the
$\hasmax$-facts above is relatively simple: there is always a maximum
send-timestamp for each sender-node.

\paragraph*{Deductive}

Let $D_{i}$ denote the output of subprogram $\deduc{\ded}$ during
transition $i$. We define $\sliceduc i$ to consist of the facts
$\addlt{D_{i}}{x_{i}}{s_{i}}$. Intuitively, $\sliceduc i$ represents
all facts over $\schof{\ded}$ that are available at $x_{i}$ during
step $s_{i}$, i.e., the joint result of rules in $\pure{\ded}$ of
the form (\ref{eq:pure-duc}), (\ref{eq:pure-ind}) and (\ref{eq:deliv}).

\paragraph*{Sending}

We define $\slicesnd i$ to represent the sending of messages during
transition $i$. We proceed as follows. Let $\mesg i$ denote the
output of subprogram $\async{\ded}$ during transition $i$, restricted
to the facts having their addressee-component in the network. Now,
we define $\slicesnd i$ to consist of the following facts:
\begin{itemize}
\item all facts $\cand_{R}(x_{i},s_{i},y,t,\bar{a})$ for which $R(y,\bar{a})\in\mesg i$
and $t\in\Nat$ such that $(y,t)\not\caus(x_{i},s_{i})$, representing
the result of rule~(\ref{eq:cand});
\item all facts $\chosen_{R}(x_{i},s_{i},y,t,\bar{a})$ for which $R(y,\bar{a})\in\mesg i$
and $t=\locR j$ with $j=\arr(i,\, y,\, R(\bar{a}))$, representing
the result of rule~(\ref{eq:chosen}); and,
\item all facts $\other_{R}(x_{i},s_{i},y,u,\bar{a})$ for which $R(y,\bar{a})\in\mesg i$,
$u\in\Nat$, $(y,u)\not\caus(x_{i},s_{i})$ and $u\neq\locR j$ with
$j=\arr(i,\, y,\, R(\bar{a}))$, representing the result of rule~(\ref{eq:other}).
\end{itemize}

\paragraph*{Conclusion}

We can show that $M$ is indeed a model
of $\ded$ on input $H$; this proof can be found in \ref{app:proof-dir-1} of the online appendix to the paper.
By construction of $M$, we have, as desired:
\[
\proj M{\toloct{\schof{\ded}}}=\bigcup_{i\in\Nat}\sliceduc i=\bigcup_{i\in\Nat}\addlt{D_{i}}{x_{i}}{s_{i}}=\trace{\run}.
\]

\subsection{Model to Run}

\label{sub:model-to-run}

Let $\ded$ be a $\dedalus$ program and let $H$ be an input distributed
database instance for $\ded$, over some network $\nw$. Let $M$
be a model of $\ded$ on input $H$. We show there is a fair run $\run$
of $\ded$ on input $H$ such that $\trace{\run}=\proj M{\toloct{\schof{\ded}}}$. 

The direction shown in Section~\ref{sub:run-to-model} is perhaps
the most intuitive direction because we only have to show that a concrete
set of facts is actually a stable model. In this section we do not
yet understand what $M$ can contain. So, a first important step is
to show that $M$ has some desirable properties which allow us to
construct a run from it.

Using the notation from Section~\ref{sub:stable-model-semantics},
let $\grded$ abbreviate the ground program $\grp MCI$ where $C=\pure{\ded}$
and $I=\decl H$. By definition of $M$ as a stable model, we have
$M=\grded(I)$. 

First, it is important to know that in $M$ we find location specifiers
where we expect location specifiers and we find timestamps where we
expect timestamps. Formally, we call $M$ \emph{well-formed} if:
\begin{itemize}
\item for each $R(x,s,\bar{a})\in\proj M{\toloct{\schof{\ded}}}$ we have
$x\in\nw$ and $s\in\Nat$;
\item for each $\before(x,s,y,t)\in M$, we have $x,y\in\nw$ and $s,t\in\Nat$;
\item for each fact $\cand_{R}(x,s,y,t,\bar{a})$, $\chosen_{R}(x,s,y,t,\bar{a})$
and $\other_{R}(x,s,y,t,\bar{a})$ in $M$, we have $x,y\in\nw$ and
$s,t\in\Nat$;
\item for each fact $\hassender(x,s,y,t)$, $\issmaller(x,s,y,t)$, $\hasmax(x,s,y)$
and $\rcvinf(x,s)$ in $M$, we have $x,y\in\nw$ and $s,t\in\Nat$.
\end{itemize}
It can be shown by induction on the fixpoint computation of $\grded$
that $M$ is always well-formed. We omit the details.

The rest of this subsection is organized as follows. In Section~\ref{sub:partial-order-M},
we extract a happens-before relation $\cauM$ from $M$. Next, in
Section~\ref{sub:construction-of-run}, we construct a run $\run$:
we use $\cauM$ to establish a total order on $\nwnat$ that tells
us which are the active nodes in the transitions of $\run$. Finally,
we show in Section~\ref{sub:fair-run} that $\run$ is fair.

\subsubsection{Partial Order}

\label{sub:partial-order-M}

\newcommand{\cauMx}[1]{\prec_{M}^{#1}}

We define the following relation $\cauM$ on $\nwnat$: for each $(x,s)\in\nwnat$
and $(y,t)\in\nwnat$, we write $(x,s)\cauM(y,t)$ if and only if
$\before(x,s,y,t)\in M$. The rest of this section is dedicated to
showing that $\cauM$ is a well-founded strict partial order on $\nwnat$.

Let $\grded$ abbreviate the ground program $\grp MCI$ where $C=\pure{\ded}$
and $I=\decl H$. Regarding terminology, an edge $(x,s)\cauM(y,t)$
is called a \emph{local edge}, a \emph{message edge} or a \emph{transitive
edge} if the fact $\before(x,s,y,t)\in M$ can be derived by a ground
rule in $\grded$ of respectively the form (\ref{eq:before-step}),
the form (\ref{eq:before-send}), or the form (\ref{eq:before-tr}).%
\footnote{The body of such a ground rule has to be in $M$.%
}  It is possible that an edge is of two or even three types at the
same time.

Consider the following claim:

\begin{claim}\label{claim:M-partial-order}Relation $\cauM$ is a
strict partial order on $\nwnat$.\end{claim}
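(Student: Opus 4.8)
The plan is to verify the two defining conditions of a strict partial order, since asymmetry then follows from them. Transitivity is essentially immediate. Rule~(\ref{eq:before-tr}) is a positive rule, so all of its ground instances survive the Gelfond--Lifschitz reduct and appear unchanged in $\grded=\grp MCI$; because $M=\grded(I)$ is a fixpoint of $\grded$, it is closed under these instances. Hence whenever $\before(x,s,z,u)\in M$ and $\before(z,u,y,t)\in M$, the matching instance of~(\ref{eq:before-tr}) forces $\before(x,s,y,t)\in M$, so $(x,s)\cauM(z,u)\cauM(y,t)$ implies $(x,s)\cauM(y,t)$. The real content is irreflexivity.

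The key preparatory step is a decomposition lemma: \emph{every} $\before$-fact in $M$ is witnessed by a nonempty path of local and message edges. I would prove this by induction on the stage at which the fact first enters the fixpoint $M=\bigcup_j I_j$. A fact produced at its minimal stage by an instance of~(\ref{eq:before-step}) or~(\ref{eq:before-send}) is a single direct edge; a fact produced by~(\ref{eq:before-tr}) has two premises in $M$ of strictly smaller stage, whose witnessing paths concatenate by the induction hypothesis. Two features of direct edges will be used in the sequel. First, a local edge needs $\timesucc(\var s,\var t)\in M$, so it always has the form $(x,s)\cauM(x,s+1)$: it fixes the node and strictly increases the timestamp. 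Second, a message edge $(x,s)\cauM(y,t)$ arises from some $\chosen_R(x,s,y,t,\bar w)\in M$, which via rule~(\ref{eq:chosen}) requires $\cand_R(x,s,y,t,\bar w)\in M$; since the only rules with a $\cand_R$-head are instances of~(\ref{eq:cand}), and the one deriving this fact must have survived the reduct, we get $\before(y,t,x,s)\notin M$, i.e.\ $(y,t)\not\cauM(x,s)$.

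For irreflexivity, suppose toward a contradiction that $(x,s)\cauM(x,s)$. By the decomposition lemma this yields a cycle $(x,s)=(z_0,u_0)\to\cdots\to(z_k,u_k)=(x,s)$ with $k\geq1$ of local and message edges. The cycle cannot consist of local edges only, since those fix the node and strictly increase the timestamp, so $u_0$ could never recur; hence some edge $(z_i,u_i)\to(z_{i+1},u_{i+1})$ is a message edge. Following the remaining edges around the cycle gives a path from $(z_{i+1},u_{i+1})$ back to $(z_i,u_i)$; when $k\geq2$ this path is nonempty and transitivity gives $\before(z_{i+1},u_{i+1},z_i,u_i)\in M$, while the degenerate case $k=1$ is a message self-loop for which this fact is literally the message edge itself. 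In either case $(z_{i+1},u_{i+1})\cauM(z_i,u_i)$, contradicting the constraint $(z_{i+1},u_{i+1})\not\cauM(z_i,u_i)$ derived above from the message edge.

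The main obstacle is precisely the interaction exploited in the last step. The causality constraint $\neg\before(\var y,\var t,\var x,\var s)$ in rule~(\ref{eq:cand}) is a \emph{local} statement about one would-be message edge, whereas a cycle in $\cauM$ may be realized largely through transitively derived $\before$-facts that carry no such guard. The decomposition lemma is what bridges this gap, converting an abstract cycle into concrete local and message edges so that the local ``no send into the past'' constraint can be played against the globally derived return path. The points requiring care are that transitivity (but never irreflexivity, which we are proving) is used to build the return path, and that the self-loop case $k=1$, where the return path is empty, is handled directly.
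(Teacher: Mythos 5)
Your proof is correct and takes essentially the same route as the paper's: transitivity from the positivity of rule~(\ref{eq:before-tr}) and stability of $M$, a fixpoint-stage induction showing every $\before$-fact decomposes into a path of local and message edges, and irreflexivity by playing the negative $\before$-guard of rule~(\ref{eq:cand}) (reached through (\ref{eq:before-send}) and (\ref{eq:chosen})) against the transitively derived return path around the cycle. The only cosmetic difference is organizational: the paper first rules out cycles among local/message edges and then lifts via the decomposition, whereas you decompose first; your explicit treatment of the $k=1$ self-loop case is a nice touch but matches what the paper's argument implicitly covers.
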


\begin{proof}
We show that $\cauM$ is transitive and irreflexive.

\paragraph*{Transitive}

First, we show that $\cauM$ is transitive. Suppose we have $(x,s)\cauM(z,u)$
and $(z,u)\cauM(y,t)$. We have to show that $(x,s)\cauM(y,t)$. By
definition of $\cauM$, we have $\before(x,s,z,u)\in M$ and $\before(z,u,y,t)\in M$.
Because rule~(\ref{eq:before-tr}) is positive, we have the following
ground rule in $\grded$:
\[
\before(x,s,y,t)\gets\before(x,s,z,u),\,\before(z,u,y,t).
\]
Because $M$ is a stable model and the body of the previous ground
rule is in $M$, we obtain $\before(x,s,y,t)\in M$. Hence, $(x,s)\cauM(y,t)$,
as desired.

\paragraph*{Irreflexive}

Because an edge $(x,s)\cauM(x,s)$ for any $(x,s)\in\nwnat$ would
form a cycle of length one, it is sufficient to show that there are
no cycles in $\cauM$ at all. This gives us irreflexivity, as desired.

First, let $\cauM'$ denote the restriction of $\cauM$ to the edges
that are local or message edges. Note that this definition allows
some edges in $\cauM'$ to also be transitive. The edges that are
missing from $\cauM'$ with respect to $\cauM$ are only derivable
by ground rules of the form (\ref{eq:before-tr}); we call these the
\emph{pure} transitive edges. We start by showing that $\cauM'$
contains no cycles. We show this with a proof by contradiction. So,
suppose that there is a cycle in $\nwnat$ through the edges of $\cauM'$:
\[
(x_{1},s_{1})\cauM(x_{2},s_{2})\cauM\ldots\cauM(x_{n},s_{n})
\]
with $n\geq2$ and $(x_{1},s_{1})=(x_{n},s_{n})$. We have $\before(x_{i},s_{i},x_{i+1},s_{i+1})\in M$
for each $i\in\{1,\ldots,n-1\}$. Based on these $\before$-facts,
ground rules in $\grded$ of the form (\ref{eq:before-tr}) will have
derived $\before(x_{i},s_{i},x_{j},s_{j})\in M$ for each $i,j\in\{1,\ldots,n\}$. 

If each edge on the above cycle would be only local,  then for each
$i,j\in\{1,\ldots,n\}$ with $i<j$ we have $x_{i}=x_{j}$ and $s_{i}<s_{j}$,
and hence $s_{1}\neq s_{n}$, which is false. So, there has to be
some $k\in\{1,\ldots,n-1\}$ such that $(x_{k},s_{k})\cauM(x_{k+1},s_{k+1})$
is a message edge, derived by a ground rule of the form (\ref{eq:before-send}):
\[
\before(x_{k},s_{k},x_{k+1},s_{k+1})\gets\chosen_{R}(x_{k},s_{k},x_{k+1},s_{k+1},\bar{a}).
\]
Therefore $\chosen_{R}(x_{k},s_{k},x_{k+1},s_{k+1},\bar{a})\in M$.
This $\chosen_{R}$-fact must be derived by a ground rule of the form
(\ref{eq:chosen}) in $\grded$, which implies that 
\[
\cand_{R}(x_{k},s_{k},x_{k+1},s_{k+1},\bar{a})\in M.
\]
This $\cand_{R}$-fact must in turn be derived by a ground rule $\grl$
of the form (\ref{eq:cand}). Because rules of the form (\ref{eq:cand})
in $\pure{\ded}$ contain a negative $\before$-atom in their body,
the presence of $\grl$ in $\grded$ requires that $\before(x_{k+1},s_{k+1},x_{k},s_{k})\notin M$.
But that is a contradiction, because $\before(x_{i},s_{i},x_{j},s_{j})\in M$
for each $i,j\in\{1,\ldots,n\}$ (see above).

Now we show there are no cycles in the entire relation $\cauM$. Since
$M=\grded(\decl H)$, we have $M=\bigcup_{i\in\Nat}M_{i}$ where $M_{0}=\decl H$
and $M_{i}=T(M_{i-1})$ for each $i\geq1$ where $T$ is the immediate
consequence operator of $\grded$. By induction on $i$, we show that
an edge $\before(x,s,y,t)\in M_{i}$ either is a local or message
edge, or it can be replaced by a path of local or message edges in
$M_{i}$. Then any cycle in $\cauM$ would imply there is a cycle
in $\cauM'$, which is impossible. So, $\cauM$ can not contain cycles.
Now, this induction property is satisfied for the base case because
$M_{0}$ does not contain $\before$-facts. For the induction hypothesis,
assume the property holds for $M_{i-1}$, where $i\geq1$. For the
inductive step, let $\before(x,s,y,t)\in M_{i}\setminus M_{i-1}$.
If this fact is derived by a ground rule of the form (\ref{eq:before-step})
or (\ref{eq:before-send}) then the property is satisfied. Now suppose
the fact is derived by a ground rule of the form (\ref{eq:before-tr}):
\[
\before(x,s,y,t)\gets\before(x,s,z,u),\,\before(z,u,y,t).
\]
Both body facts are in $M_{i-1}$, implying $M_{i-1}$ contains a
path of local or message edges from $(x,s)$ to $(z,u)$ and from
$(z,u)$ to $(y,t)$. Hence, using $M_{i-1}\subseteq M_{i}$, the
edge $\before(x,s,y,t)\in M_{i}$ can be replaced by a path of local
or message edges in $M_{i}$. \end{proof}

In Section~\ref{sub:causal-finite-transformation} we have added
extra rules to $\pure{\ded}$ to enforce that every node only receives
a finite number of messages during each step. We now verify that this
works correctly:

\begin{claim}\label{claim:finite-message-edges}For each $(y,t)\in\nwnat$
there are only a finite number of pairs $(x,s)\in\nwnat$ such that
$(x,s)\cauM(y,t)$ is a message edge.\end{claim}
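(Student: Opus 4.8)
The plan is to fix a target step $(y,t)\in\nwnat$ and analyze the structure of the message edges pointing into it. Each message edge $(x,s)\cauM(y,t)$ is, by definition, derivable by a ground rule of the form (\ref{eq:before-send}), which means $\chosen_R(x,s,y,t,\bar a)\in M$ for some relation $R$ and tuple $\bar a$. So it suffices to bound the number of distinct sender-pairs $(x,s)$ for which some such $\chosen_R$-fact lies in $M$. I would phrase the claim in terms of these senders and then invoke the finiteness machinery of Section~\ref{sub:causal-finite-transformation}, which is exactly what those rules were designed to control.

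The key step is a proof by contradiction. Suppose there were infinitely many pairs $(x,s)$ with a $\chosen_R$-fact into $(y,t)$. Since the network $\nw$ is finite, by pigeonhole there must be a single sender node $x\in\nw$ for which infinitely many timestamps $s$ yield $\chosen_R(x,s,y,t,\bar a)\in M$. The first move is then to trace what each such $\chosen_R$-fact forces. First I would argue that no fact $\rcvinf(y,t)$ can be in $M$: if it were, the hard part would be reconciling it with the derivation rules, but I claim the derivation of $\rcvinf(y,t)$ via rule~(\ref{eq:rcv-inf}) is itself what we are trying to rule out, so I should instead show that $\rcvinf(y,t)\notin M$ directly from the finiteness structure, or alternatively show that whether or not $\rcvinf(y,t)\in M$, we reach a contradiction. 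The cleaner route: since $\rcvinf(y,t)$ is the only thing that blocks rule~(\ref{eq:has-sender}), I split on whether $\rcvinf(y,t)\in M$.

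If $\rcvinf(y,t)\notin M$, then for each of the infinitely many $(x,s)$ the rule~(\ref{eq:has-sender}) fires, giving $\hassender(y,t,x,s)\in M$ for infinitely many $s$ (all with the same $x$). Now I would show, using rule~(\ref{eq:is-smaller}), that $\issmaller(y,t,x,s)\in M$ for every such $s$: because there are infinitely many send-timestamps from $x$, for any given $s$ there is a strictly larger $s'$ also appearing, so the positive body of~(\ref{eq:is-smaller}) is satisfied. Consequently the negative literal $\neg\issmaller(y,t,x,s)$ in rule~(\ref{eq:has-max}) is never satisfied, so $\hasmax(y,t,x)\notin M$. Then rule~(\ref{eq:rcv-inf}), instantiated with this $x$ and any witnessing $s$, has both its positive body atom $\hassender(y,t,x,s)$ present and its negative literal $\neg\hasmax(y,t,x)$ satisfied, forcing $\rcvinf(y,t)\in M$ --- contradicting our case assumption. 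The remaining case $\rcvinf(y,t)\in M$ must then be excluded separately: here I would argue that $\rcvinf(y,t)$ can only be derived through rule~(\ref{eq:rcv-inf}), whose body requires a $\hassender(y,t,x,s)$-fact that (again by rule~(\ref{eq:has-sender})) requires $\neg\rcvinf(y,t)$, so no ground rule producing $\rcvinf(y,t)$ survives in $\grded$, whence $\rcvinf(y,t)\notin M$ after all.

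The main obstacle I anticipate is handling the interdependence between $\rcvinf$ and $\hassender$ cleanly: since $\rcvinf(y,t)$ appears negatively in the body that produces $\hassender$, and $\hassender$ is needed to produce $\rcvinf$, the argument has the flavor of a self-referential negation and must be conducted carefully at the level of the ground program $\grded=\grp MCI$ and its fixpoint $M=\grded(\decl H)$. I would make this rigorous by reasoning about the reduct: the rule producing $\rcvinf(y,t)$ is removed from $\grded$ precisely when $\rcvinf(y,t)\in M$ via the $\neg\rcvinf$ literal in~(\ref{eq:has-sender}), so I must check consistency of the stable model against both possibilities and show exactly one survives. Once this bookkeeping is pinned down, the pigeonhole-plus-no-maximum argument delivers the contradiction, establishing that only finitely many message edges can point into $(y,t)$.
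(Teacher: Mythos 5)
Your proposal is correct and takes essentially the same route as the paper's proof: pigeonhole on the finite network $\nw$ to isolate a single sender $x$ with infinitely many send-timestamps, the self-referential-negation argument showing $\rcvinf(y,t)\notin M$ (since deriving it requires a $\hassender$-fact whose generating rule~(\ref{eq:has-sender}) survives in $\grded$ only when $\rcvinf(y,t)\notin M$), the $\issmaller$/$\hasmax$ argument showing no maximum send-timestamp can exist, and the firing of rule~(\ref{eq:rcv-inf}) for the final contradiction. The only difference is organizational --- you handle $\rcvinf(y,t)\in M$ as a case split at the end, whereas the paper disposes of it once at the start --- and one phrase (``no ground rule producing $\rcvinf(y,t)$ survives'') is imprecise, since the (\ref{eq:rcv-inf}) ground rules do survive the reduct but simply cannot fire; the underlying reasoning is identical.
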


\begin{proof}

We start by noting that $M$ does not contain the fact $\rcvinf(y,t)$.
Indeed, in order to derive this fact, we need a ground rule in $\grded$
of the form (\ref{eq:rcv-inf}), which has a body fact of the form
$\hassender(y,t,x,s)$. Such $\hassender$-facts must be generated
by ground rules in $\grded$ of the form (\ref{eq:has-sender}). The
rule (\ref{eq:has-sender}) negatively depends on relation $\rcvinf$.
Thus, specifically, if we want a ground rule in $\grded$ that can
derive $\hassender(y,t,x,s)$, we should require the absence of $\rcvinf(y,t)$
from $M$. So $\rcvinf(y,t)\in M$ requires $\rcvinf(y,t)\notin M$,
which is impossible.

The rest of the proof works towards a contradiction. So, suppose that
$(y,t)$ has an infinite number of incoming message edges. Because
there are only a finite number of nodes in $\nw$, there has to be
a node $x$ that has an infinite number of timestamps $s$ such that
$\before(x,s,y,t)\in M$ is a message edge. Since it is a message
edge, such a fact $\before(x,s,y,t)$ can be generated by a ground
rule in $\grded$ of the form (\ref{eq:before-send}), which implies
that there is a relation $R$ in $\idb{\ded}$ and a tuple $\bar{a}$
such that $\chosen_{R}(x,s,y,t,\bar{a})\in M$. Because $\rcvinf(y,t)\notin M$
(see above), for \emph{each} of these $\chosen_{R}$-facts, there
is a ground rule of the form (\ref{eq:has-sender}) in $M$ that derives
$\hassender(y,t,x,s)\in M$. 

Rule (\ref{eq:rcv-inf}) has a negative $\hasmax$-atom in its body.
If we can show that $\hasmax(y,t,x)\notin M$, then there will be
a ground rule in $\grded$ of the form (\ref{eq:rcv-inf}), where
$\hassender(y,t,x,s)\in M$:
\[
\rcvinf(y,t)\gets\hassender(y,t,x,s).
\]
This then causes $\rcvinf(y,t)\in M$, giving the desired contradiction.

Also towards a proof by contradiction, suppose that $\hasmax(y,t,x)\in M$.
This means that there is a ground rule $\grl$ in $\grded$ of the
form (\ref{eq:has-max}):
\[
\hasmax(y,t,x)\gets\hassender(y,t,x,s).
\]
Because the rule (\ref{eq:has-max}) contains a negative $\issmaller$-atom
in the body, and because $\grl\in\grded$, we know that $\issmaller(y,t,x,s)\notin M$.
But because there are infinitely many facts of the form $\hassender(y,t,x,s')\in M$,
there is at least one fact $\hassender(y,t,x,s')\in M$ with $s<s'$.
Moreover, the rule (\ref{eq:is-smaller}) is positive, and therefore
the following ground rule is always in $\grded$:
\[
\issmaller(y,t,x,s)\gets\hassender(y,t,x,s),\,\hassender(y,t,x,s'),\, s<s'.
\]
Since the body of this ground rule is in $M$, the rule derives $\issmaller(y,t,x,s)\in M$,
which gives the desired contradiction. \end{proof}

An ordering $\prec$ on a set $A$ is called \emph{well-founded}
if for each $a\in A$, there are only a finite number of elements
$b\in A$ such that $b\prec a$. We now use Claim~\ref{claim:finite-message-edges}
to show:

\begin{claim}Relation $\cauM$ on $\nwnat$ is well-founded.\end{claim}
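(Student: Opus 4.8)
The plan is to establish the definition of well-foundedness directly: I will show that for every $(y,t)\in\nwnat$ the set $P=\{(x,s)\in\nwnat\mid (x,s)\cauM(y,t)\}$ is finite, arguing by contradiction. The two finiteness resources to exploit are that $\nw$ is finite (so $P$ is spread over finitely many nodes) and that, by Claim~\ref{claim:finite-message-edges}, each step receives only finitely many message edges. First I would observe that for a fixed node $w$ the set of timestamps $P_w=\{s\mid (w,s)\cauM(y,t)\}$ is downward closed in $\Nat$: the local edges produced by rule~(\ref{eq:before-step}), together with transitivity, give $(w,s')\cauM(w,s)$ whenever $s'<s$, so $s\in P_w$ and $s'<s$ force $s'\in P_w$. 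Hence each $P_w$ is either a bounded initial segment (finite) or all of $\Nat$. If $P$ is infinite, then since $\nw$ is finite some node is \emph{fully ancestral}, i.e.\ $P_w=\Nat$; moreover $w\neq y$, since irreflexivity (Claim~\ref{claim:M-partial-order}) forces $P_y\subseteq\{0,\dots,t-1\}$. Let $F\subseteq\nw$ be the resulting set of fully ancestral nodes; it is finite, nonempty, and $y\notin F$.

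The second step is to bound the ``exits'' from $F$. Using the path-decomposition established inside the proof of Claim~\ref{claim:M-partial-order}, every relation $(x,s)\cauM(y,t)$ is realized by a path using only local and message edges. Since a local edge never changes the node, any such path from a vertex on an $F$-node to $(y,t)$ (which lives on the non-$F$ node $y$) must contain a message edge $(a,p)\cauM(b,q)$ with $a\in F$ and $b\notin F$; I call these \emph{exit edges}. For $b\notin F$ the set $P_b$ is finite, so the targets of exit edges all lie in the finite vertex set $\bigcup_{b\notin F}\{(b,q)\mid q\in P_b\}\cup\{(y,t)\}$. By Claim~\ref{claim:finite-message-edges} each of these finitely many targets has only finitely many incoming message edges, so there are only finitely many exit edges; let $S_{\max}$ be the finite maximum of the source-timestamps occurring among them.

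The crux, and the step I expect to be the main obstacle, is to convert these bounds into a cycle in $\cauM$, contradicting irreflexivity. For any $z\in F$ and any $s>S_{\max}$ I would take a local/message path from $(z,s)$ to $(y,t)$ and let $(a,p)$ be the source of its first exit edge; then $a\in F$, $p\le S_{\max}<s$, and $(z,s)\cauM(a,p)$. Since $a\in F\subseteq\nw$, the local edges of rule~(\ref{eq:before-step}) chained by transitivity give $(a,p)\cauM(a,s)$, whence $(z,s)\cauM(a,s)$. Thus for every $z\in F$ and every fixed $s>S_{\max}$ there is some $a\in F$ with $(z,s)\cauM(a,s)$. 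Starting from any $z_0\in F$ and iterating yields a $\cauM$-chain $(z_0,s)\cauM(z_1,s)\cauM\cdots$ all of whose vertices lie in the finite set $\{(b,s)\mid b\in F\}$ and, being strictly ordered, are pairwise distinct; but the chain can always be extended by one more vertex, so it would eventually exceed $|F|$ distinct vertices, forcing a repetition and hence a cycle---contradicting the partial-order property of Claim~\ref{claim:M-partial-order}. Therefore $P$ is finite and $\cauM$ is well-founded.

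I expect the difficulty to lie precisely in this last passage. The elementary facts only give \emph{finite in-degree} (one local predecessor plus finitely many message predecessors per step), and finite in-degree by itself does not bound the number of ancestors. The real work is in using the finiteness of $\nw$ twice---first to extract a fully ancestral node and second to bound the exit edges by a uniform $S_{\max}$---so that an unbounded ancestor set can be ``reflected down'' to a single timestamp $s$ and closed into a cycle. A secondary care point is to keep invoking the path-decomposition of Claim~\ref{claim:M-partial-order}, so that all reasoning stays within genuine local and message edges rather than the transitive closure.
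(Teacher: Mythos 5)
Your proof is correct, but it takes a genuinely different route from the paper's. The paper argues via finite in-degree plus an infinite-backward-path (K\"onig-style) argument: each vertex has at most one incoming local edge and, by Claim~\ref{claim:finite-message-edges}, finitely many incoming message edges; an infinite ancestor set would therefore yield an infinite backward path of local/message edges; pigeonhole on the finite network $\nw$ gives a node occurring infinitely often on that path, whose timestamps must strictly decrease (any repetition or increase closes a cycle, contradicting Claim~\ref{claim:M-partial-order}), which is impossible since timestamps cannot decrease below $0$. You instead avoid constructing any infinite path: you observe that each per-node ancestor set $P_w$ is downward closed in $\Nat$ (hence finite or all of $\Nat$), extract the nonempty set $F$ of fully ancestral nodes, bound the ``exit'' message edges out of $F$ by applying Claim~\ref{claim:finite-message-edges} only at their finitely many possible targets, and then, for any $s$ above the uniform bound $S_{\max}$, reflect every $(z,s)$ with $z\in F$ back into the finite level $F\times\{s\}$, where transitivity, irreflexivity, and pigeonhole force a cycle. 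Both arguments rest on the same ingredients --- the decomposition of $\cauM$-edges into paths of genuine local and message edges from the proof of Claim~\ref{claim:M-partial-order}, acyclicity, Claim~\ref{claim:finite-message-edges}, and finiteness of $\nw$ --- but yours trades the implicit appeal to K\"onig's lemma for a downward-closure/uniform-bound argument that stays finitary throughout, at the price of the extra bookkeeping around exit edges (including the small check, worth making explicit, that the prefix of the path from $(z,s)$ to the first exit source $(a,p)$ is nonempty because $p\le S_{\max}<s$); the paper's version is shorter and applies Claim~\ref{claim:finite-message-edges} pointwise, at the cost of the somewhat delicate extraction of the infinite backward path and of the strictly decreasing timestamps along it.
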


\begin{proof}

Let $(x,s)\in\nwnat$. We have to show that there are only a finite
number of pairs $(y,t)\in\nwnat$ such that $(y,t)\cauM(x,s)$. Technically,
we can limit our attention to paths in $\cauM$ consisting of local
edges and message edges, because if we can show that there are only
a finite number of predecessors of $(x,s)$ on such paths, then there
are only a finite number of predecessors when we include the transitive
edges as well. First we show that every pair $(y,t)\in\nwnat$ has
only a finite number of incoming local and message edges. If $t>0$,
we can immediately see that $(y,t)$ has precisely one incoming local
edge, as created by a ground rule of the form (\ref{eq:before-step}),
and if $t=0$ then $(y,t)$ has no incoming local edge. Also, Claim~\ref{claim:finite-message-edges}
tells us that $(y,t)$ has only a finite number of incoming message
edges. So, the number of incoming local and message edges in $(y,t)$
is finite.

Let $(y,t)\in\nwnat$ be a pair such that $(y,t)\cauM(x,s)$ is a
local edge or a message edge. Starting in $(x,s)$, we can follow
this edge backwards so that we reach $(y,t)$. If $(y,t)$ itself
has incoming local or message edges, from $(y,t)$ we can again follow
an edge backwards. This way we can incrementally construct backward
paths starting from $(x,s)$. Because at each pair of $\nwnat$ there
are only a finite number of incoming local or message edges (shown
above), if $(x,s)$ would have an infinite number of predecessors,
we must be able to construct a backward path of infinite length. 
We now show that the existence of such an infinite path leads to a
contradiction.  So, suppose that there is a backward path of infinite
length. Because there are only a finite number of nodes in the network
$\nw$, there must be a node $y$ that occurs infinitely often on
this path. We will now show that, as we progress further along the
backward path, we must see the local timestamps of $y$ strictly decrease.
Hence, we must eventually reach timestamp $0$ of $y$, after which
we cannot decrement the timestamps of $y$ anymore, and thus it is
impossible that $y$ occurs infinitely often along the path. Suppose
that the timestamps of $y$ do not strictly decrease. There are two
cases. First, if the same pair $(y,t)$ would occur twice on the path,
we would have a cycle in $\cauM$, which is not possible by Claim~\ref{claim:M-partial-order}.
Secondly, suppose that there are two timestamps $t$ and $t'$ of
$y$ such that $t<t'$ and $(y,t)$ occurs before $(y,t')$ on the
backward path, meaning that $(y,t)$ lies closer to $(x,s)$. Because
the edges were followed in reverse, we have 
\[
(y,t')\cauM\ldots\cauM(y,t).
\]
But since $t<t'$, by means of local edges, we always have 
\[
(y,t)\cauM(y,t+1)\cauM\ldots\cauM(y,t').
\]
So, there would be a cycle between $(y,t')$ and $(y,t)$. But that is again impossible
by Claim~\ref{claim:M-partial-order}. \end{proof}

\subsubsection{Construction of Run}

\label{sub:construction-of-run}

Let $\cauM$ be the well-founded strict partial order on $\nwnat$
as defined in the preceding subsection. The relation $\cauM$ has
the intuition of a happens-before relation of a run (Section~\ref{sub:run-happens-before}),
but the novelty is that it comes from a purely declarative model $M$.
We will now use $\cauM$ to construct a run $\run$ such that $\trace{\run}=\proj M{\toloct{\schof{\ded}}}$.

\paragraph*{Total order}

It is well-known that a well-founded strict partial order can be extended
to a well-founded strict total order. So, let $\totM$ be a well-founded
strict total order on $\nwnat$ that extends $\cauM$, i.e., for each
$(x,s)\in\nwnat$ and $(y,t)\in\nwnat$, if $(x,s)\cauM(y,t)$ then
$(x,s)\totM(y,t)$, but the reverse does not have to hold. 

Ordering the set $\nwnat$ according to $\totM$ gives us a sequence
of pairs that will form the transitions in the constructed run $\run$.
Concretely, we obtain a sequence of nodes by taking the node-component
from each pair. This will form our sequence of active nodes. Similarly,
by taking the timestamp-component from each pair of $\nwnat$, we
obtain a sequence of timestamps. These are the local clocks of the
active nodes during their transitions. 

We introduce some extra notations to help us reason about the ordering
of time that is implied by $\totM$. For each $(x,s)\in\nwnat$, let
$\globM{x,s}\in\Nat$ denote the ordinal of $(x,s)$ as implied by
$\totM$, which is well-defined because $\totM$ is well-founded.
For technical convenience, we let ordinals start at $0$. Note, $\globM{\cdot}$
is an injective function. For any $i\in\Nat$, we define $(x_{i},s_{i})$
to be the unique pair in $\nwnat$ such that $\globM{x_{i},s_{i}}=i$.

As a counterpart to function $\globM{\cdot}$, for each $i\in\Nat$
and each $x\in\nw$, let $\locM{i,x}$ denote the \emph{size} of
the set 
\[
\{s\in\Nat\mid\globM{x,s}<i\}.
\]
Intuitively, if $i$ is regarded to be the ordinal of a transition
in a run, $\locM{i,x}$ is the number of local steps of $x$ that
came before transition $i$, i.e., the number of transitions before
$i$ in which $x$ was the active node. If $x=x_{i}$ (the active
node) then $\locM{i,x}$ is effectively the timestamp of $x$ \emph{during}
transition $i$, and if $x\neq x_{i}$ then $\locM{i,x}$ is the next
timestamp of $x$ that still has to come \emph{after} transition
$i$. Note, the functions $\globM{\cdot}$ and $\locM{\cdot}$ closely
resemble the functions $\globR{\cdot}$ and $\locR{\cdot}$ of Section~\ref{sub:run-timestamps-and-trace}.

\paragraph*{Configurations}

We will now define the desired run $\run$ of $\ded$ on $H$. First
we define an infinite sequence of configurations $\cnf_{0}$, $\cnf_{1}$,
$\cnf_{2}$, etc. In a second step we will connect each pair of subsequent
configurations by a transition. Recall from Section~\ref{sub:configurations}
that a configuration describes for each node what facts it has stored
locally (state), and also what messages have been sent to this node
but that are not yet received (message buffer). The facts that are
stored on a node are either input $\mathit{edb}$-facts, or facts
derived by inductive rules in a previous step of the node. The first
kind of facts can be easily obtained from $M$ by keeping only the
facts over schema $\toloct{\edb{\ded}}$, which gives a subset of
$\decl H$. 

For the second kind of state facts, we look at the inductively derived
facts in $M$. Rules in $\pure{\ded}$ that represent inductive rules
of $\ded$ are recognizable as rules of the form (\ref{eq:pure-ind}):
they have a head atom over $\toloct{\schof{\ded}}$ and they have
a (positive) $\timesucc$-atom in their body. No other kind of rule
in $\pure{\ded}$ has this form. Hence, the ground rules in $\grded$
that are based on rules of the form (\ref{eq:pure-ind}) are also
easily recognizable, and we will call these \emph{inductive ground
rules}. A ground rule $\grl\in\grded$ is called \emph{active}
on $M$ if $\bpos{\grl}\subseteq M$, which implies $\head{\grl}\in M$
because $M$ is stable. Let $\Mind$ denote all head atoms of inductive
ground rules in $\grded$ that are active on $M$. Note that $\Mind\subseteq M$.
Regarding notation, for an instance $I$ over $\toloct{\schof{\ded}}$,
we write $\droplt I$ to denote the set $\{R(\bar{a})\mid\exists x,s:\, R(x,s,\bar{a})\in I\}$,
and we write $\projlt Ixs$ to denote the set $\{R(y,t,\bar{a})\in I\mid y=x,\, t=s\}$. 

Now, for each $i\in\Nat$, for each node $x\in\nw$, denoting $s=\locM{i,x}$,
in configuration $\cnf_{i}=(\cnfs_{i},\cnfb_{i})$, the state $\cnfs_{i}(x)$
is defined as
\[
\droplt{\left(\shprojlt{\proj M{\toloct{\edb{\ded}}}}xs\cup\projlt{\Mind}xs\right)}.
\]
We remove the location specifier and timestamp because we have to
obtain facts over the schema of $\ded$, not over the schema of $\pure{\ded}$.

Now we define the message buffers in the configurations. Recall that
the message buffer of a node always contains pairs of the form $\pair j{\fc}$,
where $j\in\Nat$ is the transition in which fact $\fc$ was sent.
For each $i\in\Nat$, for each node $x\in\nw$, in configuration $\cnf_{i}=(\cnfs_{i},\cnfb_{i})$,
the message buffer $\cnfb_{i}(x)$ is defined as
\[
\begin{array}{ll}
\{\pair{\globM{y,t}}{\, R(\bar{a})}\mid
    & \exists u:\,\chosen_{R}(y,t,x,u,\bar{a})\in M,\\&\globM{y,t}<i\leq\globM{x,u}\}.
\end{array}
\]
Note the use of addressee $x$ in this definition. The definition
of $\cnfb_{i}(x)$ reflects the operational semantics, in that the
messages in the buffer of node $x$ must be sent in a previous transition,
as expressed by the constraint $\globM{y,t}<i$. Moreover, the constraint
$i\leq\globM{x,u}$ says that $\cnfb_{i}(x)$ contains only messages
that will be delivered in transitions of $x$ that come after configuration
$\cnf_{i}$. Possibly $i=\globM{x,u}$, and in that case the message
will be delivered in the transition immediately after configuration
$\cnf_{i}$, which is transition $i$ (see also below).

\paragraph*{Transitions}

So far we have obtained a sequence of configurations $\cnf_{0}$,
$\cnf_{1}$, $\cnf_{2}$, etc. Now we define a sequence of tuples,
one tuple per ordinal $i\in\Nat$, that represents the transition
$i$. Let $i\in\Nat$. Recall from above that $(x_{i},s_{i})$ is
the unique pair in $\nwnat$ such that $\globM{x_{i},s_{i}}=i$. The
tuple $\tup_{i}$ is defined as $(\cnf_{i},x_{i},m_{i},i,\cnf_{i+1})$,
where 
\[
m_{i}=\{\pair{\globM{y,t}}{\, R(\bar{a})}\mid\chosen_{R}(y,t,z,u,\bar{a})\in M,\,\globM{z,u}=i\}.
\]
Intuitively, $m_{i}$ selects all messages that arrive in transition
$i$. And since $\globM{z,u}=i$ implies $z=x_{i}$ and $u=s_{i}$,
we thus select all messages destined for step $s_{i}$ of node $x_{i}$.

\paragraph*{Trace}

We can show that sequence $\run$ is indeed
a legal run of $\ded$ on input $H$ such that $\trace{\run}=\proj M{\toloct{\schof{\ded}}}$; this proof can be found in \ref{app:proof-dir-2} of the online appendix to the paper.
In the following subsection we show that $\run$ is also fair.

\subsubsection{Fair Run}

\label{sub:fair-run}

Let $\run$ be the run as constructed in the previous subsection.
We now show that $\run$ is fair. For each transition index $i\in\Nat$,
let $\cnf_{i}=(\cnfs_{i},\cnfb_{i})$ denote the source configuration
of transition $i$. Recall from Section~\ref{sub:paper-fairness-and-arrival}
that we have to check two fairness conditions:
\begin{enumerate}
\item every node is the active node in an infinite number of transitions;
and,
\item for every transition $i\in\Nat$, for every $y\in\nw$, for every
pair $\pair j{\fc}\in\cnfb_{i}(y)$, there is a transition $k$ with
$i\leq k$ in which $\pair j{\fc}$ is delivered to $y$.
\end{enumerate}
We show that $\run$ satisfies the first fairness condition. Let $x\in\nw$
be a node, and let $s\in\Nat$ be a timestamp of $x$. Consider transition
$i=\globM{x,s}$. This transition has active node $x_{i}=x$. We can
find such a transition with active node $x$ for every timestamp $s\in\Nat$
of $x$, and these transitions are all unique because function $\globM{\cdot}$
is injective. So, there are an infinite number of transitions in $\run$
with active node $x$.

We show that $\run$ satisfies the second fairness condition. Let
$i\in\Nat$, $y\in\nw$, and $\pair j{\fc}\in\cnfb_{i}(y)$. Denote
$\fc=R(\bar{a})$. From its construction, the pair $\pair j{\fc}\in\cnfb_{i}(y)$
implies there are values $x\in\nw$, $s\in\Nat$ and $t\in\Nat$ such
that $\chosen_{R}(x,s,y,t,\bar{a})\in M$ and $j=\globM{x,s}<i\leq\globM{y,t}$.
Denote $k=\globM{y,t}$.  Hence, $i\leq k$ and $\pair j{\fc}\in m_{k}$
by definition of $m_{k}$. Thus $\pair j{\fc}$ is delivered to $x_{k}=y$
in transition $k$.

\section{Discussion}

\label{sec:discussion}

We have represented distributed programs in $\datalog$ under the
stable model semantics. Moreover, we have shown that the stable models
represent the desired behavior of the distributed program, as found
in a realistic operational semantics. We now discuss some points for
future work. 

As mentioned, many $\datalog$-inspired languages have been proposed
to implement distributed applications \cite{decl_netw_cacm,declnetw_opsem,grumbach_netlog,webdamlog},
and they contain several powerful features such as aggregation and
non-determinism (choice). Our current framework already represents
the essential features that all these languages possess: reasoning
about distributed state and representing message sending. Nonetheless,
we have probably not yet explored the full power of stable models.
We therefore expect that this work can be extended to languages that
incorporate more powerful language constructs such as the ones mentioned
above. It might also be possible to remove the syntactic stratification
condition that we have used for the deductive rules.

More related to multi-agent systems \cite{leite_minerva,nigam_agents,leite_evolp},
it might be interesting to allow logic programs used in declarative
networking to dynamically modify their rules. The question would be
how (and if) this can be represented in our model-based semantics.

The effect of variants of the model-based semantics can studied. For
example, messages can be sent into the past when the causality rules
are removed. Then, one might ask which (classes of) programs
still work ``correctly'' under such a non-causal semantics; some
preliminary results are in~\cite{ameloot_noncausality2014}.

Lastly, we can think about the output of distributed $\datalog$ programs.
\citeNS{dedalus_consistency} define the output with
\emph{ultimate} facts, which are facts that will eventually always
be present on the network. This way, the output of a run (or equivalently
stable model) can be defined. Then, a \emph{consistent} program
is required to produce the same output in every run. For consistent
programs, the output on an input distributed database instance can
thus be defined as the output of any run. We can now consider the
following decision problem: for a consistent program, an input distributed
database instance for that program, and a fact, decide if this fact
is output by the program on that input. We think that decidability
depends on the semantics of the message buffers. In this paper, we
have represented per addressee duplicate messages in its message buffer.
This is a realistic representation, since in a real network, the same
message can be sent multiple times, and hence, multiple instances
of the same message can be in transmission simultaneously. If we would
forbid duplicate messages in the buffers, then the decision problem
becomes decidable because only a finite number of configurations would
be possible by finiteness of the input domain. But when duplicates
are preserved, the number of configurations is not limited, and we
expect that the problem will be undecidable in general. However, we
might want to investigate whether decidability can be obtained in
particular (syntactically defined) cases. If so, it might be interesting
for those cases to find finite representations of the stable models.
This could serve as a more intuitive programmer abstraction, or it
could perhaps be used to more efficiently simulate the behavior of
the network for testing purposes.

\section*{Acknowledgment}

The second author thanks Serge~Abiteboul for a number of interesting
discussions.

\bibliographystyle{acmtrans}

\section*{Appendix}

\begin{appendix}

\newcommand{\strat}[2]{#1^{\to#2}}

\section*{General Remarks}

Let $\ded$ be a $\dedalus$ program. Recall from Section~\ref{sub:subprograms}
that $\deduc{\ded}\subseteq\ded$ is the subset of all (unmodified)
deductive rules. The semantics of $\deduc{\ded}$ is given by the
stratified semantics. Although the semantics of $\deduc{\ded}$ does
not depend on the chosen syntactic stratification, for technical convenience
in the proofs, we will fix an arbitrary syntactic stratification for
$\deduc{\ded}$. Whenever we refer to the stratum number of an \emph{idb}
relation, we implicitly use this fixed syntactic stratification. Stratum
numbers start at $1$.

\section{Run to Model: Proof Details}

\label{app:proof-dir-1}

In the context of Section~\ref{sub:run-to-model-construction}, we
show that $M$ is a model of $\ded$ on input $H$. Let $\grded$
abbreviate the ground program $\grp MCI$, where $C=\pure{\ded}$
and $I=\decl H$. To show that $M$ is a stable model, we have to
show $M=N$ where $N=\grded(\decl H)$. The inclusions $M\subseteq N$
and $N\subseteq M$ are shown respectively in Sections~\ref{sub:M-in-N}
and \ref{sub:N-in-M}. We use the notations of Section~\ref{sub:run-to-model-construction}.

\subsection{Inclusion $M\subseteq N$}

\label{sub:M-in-N}

By definition,
\[
M=\decl H\cup\bigcup_{i\in\Nat}\slice i.
\]
We immediately have $\decl H\subseteq N$ by the semantics of $\grded$.
Next, we define for uniformity the set $\slice{-1}=\emptyset$. We
will show by induction on $i=-1,$ $0$, $1$, $\ldots$, that $\slice i\subseteq N$.
The base case ($i=-1$) is clear. For the induction hypothesis, let
$i\geq0$, and assume for all $j\in\{-1,0,\ldots,i-1\}$ that $\slice j\subseteq N$.
We show that $\slice i\subseteq N$. By definition, 
\[
\slice i=\slicecaus i\cup\slicefin i\cup\sliceduc i\cup\slicesnd i.
\]
We show inclusion of these four sets in $N$ below. Auxiliary claims
can be found in Section~\ref{sub:run-to-model-first-dir--claims}.

\subsubsection{Causality}

We show that $\slicecaus i\subseteq N$. Concretely, let $(x,s)\in\nwnat$
such that $(x,s)\caus(x_{i},s_{i})$. We show $\before(x,s,x_{i},s_{i})\in N$.
We distinguish between the following cases.

\paragraph*{Local edge}

Suppose $(x,s)\caus(x_{i},s_{i})$ is a local edge, i.e., $x=x_{i}$
and $s_{i}=s+1$. Because rule~(\ref{eq:before-step}) is positive,
the following ground rule is always in $\grded$:
\[
\before(x,s,x,s+1)\gets\relall(x),\,\timesucc(s,s+1).
\]
The body facts of this ground rule are in $\decl H\subseteq N$; hence,
the rule derives $\before(x,s,x,s+1)=\before(x,s,x_{i},s_{i})\in N$.

\paragraph*{Message edge}

Suppose $(x,s)\caus(x_{i},s_{i})$ is a message edge, i.e., there
is an earlier transition $j<i$ with $j=\globR{x,s}$, in which $x$
sends a message $\fc$ to $x_{i}$ such that $\arr(j,x_{i},\fc)=i$.
Denote $\fc=R(\bar{a})$. Because rules of the form (\ref{eq:before-send})
in $\pure{\ded}$ are positive, the following ground rule is always
in $\grded$:
\[
\before(x,s,x_{i},s_{i})\gets\chosen_{R}(x,s,x_{i},s_{i},\bar{a}).
\]
We show $\chosen_{R}(x,s,x_{i},s_{i},\bar{a})\in N$, so that $\before(x,s,x_{i},s_{i})\in N$,
as desired. Since $j=\globR{x,s}$, we have $x_{j}=x$ and $s_{j}=s$.
Also using $s_{i}=\locR i$, we have
\[
\chosen_{R}(x,s,x_{i},s_{i},\bar{a})\in\slicesnd j\subseteq\slice j.
\]
Lastly, we have $\slice j\subseteq N$ by applying the induction hypothesis.

\paragraph*{Transitive edge}

Suppose $(x,s)\caus(x_{i},s_{i})$ is not a local edge nor a message
edge. Then we can choose a pair $(z,u)\in\nwnat$ such that $(x,s)\caus(z,u)$
and $(z,u)\caus(x_{i},s_{i})$, but also such that $(z,u)\caus(x_{i},s_{i})$
is a local edge or a message edge. Because rule (\ref{eq:before-tr})
is positive, the following ground rule is always in $\grded$:
\[
\before(x,s,x_{i},s_{i})\gets\before(x,s,z,u),\,\before(z,u,x_{i},s_{i}).
\]
We now show that the body of this rule is in $N$, so that $\before(x,s,x_{i},s_{i})\in N$,
as desired. Denote $j=\globR{z,u}$. First, because $(x,s)\caus(z,u)$,
we have $\before(x,s,z,u)\in\slicecaus j$. Next, because $(z,u)\caus(x_{i},s_{i})$,
we have $j<i$ by Lemma~\ref{lem:caus-global-order}. So, by applying
the induction hypothesis to $j$, we have $\before(x,s,z,u)\in N$.
Secondly, because $(z,u)\caus(x_{i},s_{i})$ is a local edge or a
message edge, we have $\before(z,u,x_{i},s_{i})\in N$ as shown in
the preceding two cases.

\subsubsection{Finite Messages}

We show that $\slicefin i\subseteq N$. Let $\senders i$ be as defined
in Section~\ref{sub:run-to-model-construction}. For each of the
different kinds of facts in $\slicefin i$, we show inclusion in $N$.

\paragraph*{Senders}

Let $\hassender(x_{i},s_{i},x,s)\in\slicefin i$. We have $(x,s)\in\senders i$,
which means that $x$ during step $s$ sends some message fact $R(\bar{a})$
that arrives in step $s_{i}$ of $x_{i}$. Rules in $\pure{\ded}$
of the form (\ref{eq:has-sender}) have a negative $\rcvinf$-atom
in their body. But since we have not added any $\rcvinf$-facts to
$M$, including $\rcvinf(x_{i},s_{i})$, the following rule is in
$\grded$:
\[
\hassender(x_{i},s_{i},x,s)\gets\chosen_{R}(x,s,x_{i},s_{i},\bar{a}).
\]
We are left to show that $\chosen_{R}(x,s,x_{i},s_{i},\bar{a})\in N$.
Denote $j=\globR{x,s}$. Using that $x=x_{j}$ and $s=s_{j}$, we
have $\chosen_{R}(x,s,x_{i},s_{i},\bar{a})\in\slicesnd j$. Because
$j<i$ by the operational semantics, we can apply the induction hypothesis
to $j$ to know $\slicesnd j\subseteq N$.

\paragraph*{Comparison of timestamps}

Let $\issmaller(x_{i},s_{i},x,s)\in\slicefin i$. We have $(x,s)\in\senders i$
and there is a timestamp $s'\in\Nat$ so that $(x,s')\in\senders i$
and $s<s'$. Rule (\ref{eq:is-smaller}) is positive and therefore
the following ground rule is always in $\grded$:
\begin{eqnarray*}
\issmaller(x_{i},s_{i},x,s) & \gets & \hassender(x_{i},s_{i},x,s),\,\hassender(x_{i},s_{i},x,s'),\\
 &  & s<s'.
\end{eqnarray*}
We immediately have $(s<s')\in\decl H\subseteq N$. By construction
of $\slicefin i$, we also have $\hassender(x_{i},s_{i},x,s)\in\slicefin i$
and $\hassender(x_{i},s_{i},x,s')\in\slicefin i$, and thus both facts
are also in $N$ as shown above. Hence the previous ground rule derives
$\issmaller(x_{i},s_{i},x,s)\in N$.

\paragraph*{Maximum timestamp}

Let $\hasmax(x_{i},s_{i},x)\in\slicefin i$. Thus $x$ is a sender-node
mentioned in $\senders i$. Let $s$ be the maximum send-timestamp
of $x$ in $\senders i$, which surely exists because $\senders i$
is finite. We have not added $\issmaller(x_{i},s_{i},x,s)$ to $\slicefin i$,
and thus also not to $M$. Although rule (\ref{eq:has-max}) contains
a negated $\issmaller$-atom, $\issmaller(x_{i},s_{i},x,s)\notin M$
implies that the following ground rule is in $\grded$:
\[
\hasmax(x_{i},s_{i},x)\gets\hassender(x_{i},s_{i},x,s).
\]
Moreover, $(x,s)\in\senders i$ implies $\hassender(x_{i},s_{i},x,s)\in N$,
and thus the previous ground rule derives $\hasmax(x_{i},s_{i},x)\in N$,
as desired.

\subsubsection{Deductive}

We show that $\sliceduc i\subseteq N$. By definition, $\sliceduc i=\addlt{D_{i}}{x_{i}}{s_{i}}$,
where $D_{i}$ is the output of subprogram $\deduc{\ded}$ during
transition $i$. Recall from Section~\ref{sub:transitions-and-runs}
that $\deduc{\ded}$ is given the following input during transition
$i$:
\[
\cnfs_{i}(x_{i})\cup\untag{m_{i}},
\]
where $\cnfs_{i}$ denotes the state at the beginning of transition
$i$, and $m_{i}$ is the set of (tagged) messages delivered during
transition $i$. If we can show that $\addlt{(\cnfs_{i}(x_{i})\cup\untag{m_{i}})}{x_{i}}{s_{i}}\subseteq N$,
then we can apply Claim~\ref{claim:duc-in-stable} to know that $\addlt{D_{i}}{x_{i}}{s_{i}}\subseteq N$,
as desired.

\paragraph*{State}

We first show $\addlt{\cnfs_{i}(x_{i})}{x_{i}}{s_{i}}\subseteq N$.
There are two cases:
\begin{itemize}
\item Suppose $s_{i}=0$, i.e., $i$ is the first transition of $\run$
with active node $x_{i}$. Then $\cnfs_{i}(x_{i})=H(x_{i})$ by the
operational semantics, which gives $\addlt{\cnfs_{i}(x_{i})}{x_{i}}{s_{i}}\subseteq\decl H\subseteq N$
by definition of $\decl H$.
\item Suppose $s_{i}>0$. Then we can consider the last transition $j$
of $x_{i}$ that came before $i$. By the operational semantics, we
have $\cnfs_{i}(x_{i})=\cnfs_{j+1}(x_{i})$, where $\cnfs_{j+1}$
is the state resulting from transition $j$. More concretely, $\cnfs_{i}(x_{i})=H(x_{i})\cup\induc{\ded}(D_{j})$,
with $D_{j}$ the output of $\deduc{\ded}$ during transition $j$.
As in the previous case, we already know $\addlt{H(x_{i})}{x_{i}}{s_{i}}\subseteq\decl H$.
Now, by applying the induction hypothesis to $j$, we have $\sliceduc j\subseteq\slice j\subseteq N$.
Next, by applying Claim~\ref{claim:ind-in-stable}, and by using
$s_{i}=s_{j}+1$, we obtain
\begin{eqnarray*}
\addlt{\cnfs_{i}(x_{i})}{x_{i}}{s_{i}} & = & \addlt{H(x_{i})}{x_{i}}{s_{i}}\cup\addlt{\induc{\ded}(D_{j})}{x_{i}}{s_{j}+1}\\
 & \subseteq & N.
\end{eqnarray*}

\end{itemize}

\paragraph*{Messages}

Now we show $\addlt{\untag{m_{i}}}{x_{i}}{s_{i}}\subseteq N$. Let
$\fc\in\untag{m_{i}}$. We have to show that $\addlt{\fc}{x_{i}}{s_{i}}\in N$.
First, because $\fc\in\untag{m_{i}}$, there is a transition $k$
with $k<i$ such that $\pair k{\fc}\in m_{i}$, i.e., the fact $\fc$
was sent to $x_{i}$ during transition $k$ (by node $x_{k}$). Denote
$\fc=R(\bar{a})$. So, there must be an asynchronous rule with head-predicate
$R$ in $\ded$, which has a corresponding rule in $\pure{\ded}$
of the form (\ref{eq:deliv}). Rules of the form (\ref{eq:deliv})
are positive and thus the following ground rule is always in $\grded$:
\[
R(x_{i},s_{i},\bar{a})\gets\chosen_{R}(x_{k},s_{k},x_{i},s_{i},\bar{a}).
\]
We show $\chosen_{R}(x_{k},s_{k},x_{i},s_{i},\bar{a})\in N$, so that
the rule derives $\addlt{\fc}{x_{i}}{s_{i}}\in N$, as desired. Because
$x_{k}$ sends $\fc$ to $x_{i}$ during transition $k$, and $i$
is the transition in which this message is delivered to $x_{i}$,
we have $\chosen_{R}(x_{k},s_{k},x_{i},s_{i},\bar{a})\in\slicesnd k\subseteq\slice k$.
By applying the induction hypothesis to $k$, we have $\slicesnd k\subseteq N$.

\subsubsection{Sending}

We show that $\slicesnd i\subseteq N$. For each kind of fact in $\slicesnd i$
we show inclusion in $N$.

\paragraph*{Candidates}

Let $\cand_{R}(x_{i},s_{i},y,t,\bar{a})\in\slicesnd i$. We have $R(y,\bar{a})\in\mesg i$,
$t\in\Nat$ and $(y,t)\not\caus(x_{i},s_{i})$. Since $\addlt{D_{i}}{x_{i}}{s_{i}}\subseteq N$
(see above), we can use Claim~\ref{claim:cand-in-stable} to obtain
$\cand_{R}(x_{i},s_{i},y,t,\bar{a})\in N$, as desired.

\paragraph*{Chosen}

Let $\chosen_{R}(x_{i},s_{i},y,t,\bar{a})\in\slicesnd i$. We have
$R(y,\bar{a})\in\mesg i$ and $t=\locR j$ with $j=\arr(i,y,R(\bar{a}))$.
Because $R(y,\bar{a})\in\mesg i$, this fact was produced by $\async{\ded}$,
and thus there is an asynchronous rule in $\ded$ with head-predicate
$R$. This asynchronous rule has a corresponding rule in $\pure{\ded}$
of the form (\ref{eq:chosen}), that contains a negated $\other_{R}$-atom
in the body. But by construction of $\slicesnd i$, we have not added
$\other_{R}(x_{i},s_{i},y,t,\bar{a})$ to $\slicesnd i$, and thus
also not to $M$. Therefore the following ground rule of the form
(\ref{eq:chosen}) is in $\grded$:
\[
\chosen_{R}(x_{i},s_{i},y,t,\bar{a})\gets\cand_{R}(x_{i},s_{i},y,t,\bar{a}).
\]
Because $j>i$ by the operational semantics, we have $(y,t)\not\caus(x_{i},s_{i})$
by Lemma~\ref{lem:caus-global-order}. Thus, by construction of $\slicesnd i$,
we have $\cand_{R}(x_{i},s_{i},y,t,\bar{a})\in\slicesnd i$, in which
case $\cand_{R}(x_{i},s_{i},y,t,\bar{a})\in N$ (shown above). Hence,
the previous ground rule derives $\chosen_{R}(x_{i},s_{i},y,t,\bar{a})\in N$,
as desired.

\paragraph*{Other}

Let $R(y,\bar{a})$ and $t$ be from above. Let $\other_{R}(x_{i},s_{i},y,u,\bar{a})\in\slicesnd i$.
We have $u\in\Nat$, $(y,u)\not\caus(x_{i},s_{i})$ and $u\neq t$.
Because rule (\ref{eq:other}) is positive, the following ground rule
is in $\grded$:
\begin{eqnarray*}
\other_{R}(x_{i},s_{i},y,u,\bar{a}) & \gets & \cand_{R}(x_{i},s_{i},y,u,\bar{a}),\,\chosen_{R}(x_{i},s_{i},y,t,\bar{a}),\\
 &  & u\neq t.
\end{eqnarray*}
We immediately have $(u\neq t)\in\decl H\subseteq N$. Now we show
that the other body facts are in $N$, so the rule derives $\other_{R}(x_{i},s_{i},y,u,\bar{a})\in N$,
as desired. Because $(y,u)\not\caus(x_{i},s_{i})$, by construction
of $\slicesnd i$, we have $\cand_{R}(x_{i},s_{i},y,u,\bar{a})\in\slicesnd i$
and thus $\cand_{R}(x_{i},s_{i},y,u,\bar{a})\in N$ (shown above).
Moreover, it was shown above that $\chosen_{R}(x_{i},s_{i},y,t,\bar{a})\in N$.

\subsubsection{Subclaims}

\label{sub:run-to-model-first-dir--claims}

\begin{claim}\label{claim:duc-in-stable}Let $i$ be a transition
of $\run$. If $\addlt{(\cnfs_{i}(x_{i})\cup\untag{m_{i}})}{x_{i}}{s_{i}}\subseteq N$,
then $\addlt{D_{i}}{x_{i}}{s_{i}}\subseteq N$.\end{claim}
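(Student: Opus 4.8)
\emph{Plan.} Write $I=\cnfs_i(x_i)\cup\untag{m_i}$, so that $D_i=\deduc{\ded}(I)$ is the stratified fixpoint of the deductive subprogram. Fix the syntactic stratification $P_1,\dots,P_k$ of $\deduc{\ded}$ from the appendix preamble, so $D_i=P_k(\dots P_1(I)\dots)$. Each deductive rule $\rl$ of the shape $R(\tvar u)\gets\simplebody{\tvar u,\tvar v}$ gives rise in $\pure{\ded}$ to exactly one companion rule of the form (\ref{eq:pure-duc}), namely $R(\var x,\var s,\tvar u)\gets\addlt{\simplebody{\tvar u,\tvar v}}{\var x}{\var s}$. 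The idea is to mirror every deductive derivation on the untagged facts by a ground instance of (\ref{eq:pure-duc}) in which the two fresh slots are uniformly instantiated to $x_i$ and $s_i$. I would prove $\addlt{D_i}{x_i}{s_i}\subseteq N$ by induction on the stratum number $\ell$, and within stratum $\ell$ by induction on the stages of the immediate-consequence operator; the base case is precisely the hypothesis $\addlt{I}{x_i}{s_i}\subseteq N$.

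\emph{Key observation.} Before the induction I would record the exact-match identity: for every relation $R$ in $\schof{\ded}$ and every tuple $\bar a$, we have $R(\bar a)\in D_i$ if and only if $R(x_i,s_i,\bar a)\in M$. This is immediate from the construction of $M$. The only contributions to $M$ whose predicate lies in $\toloct{\schof{\ded}}$ come from $\decl H$ (the $\edb{\ded}$-relations) and from the sets $\sliceduc j$; the sets $\slicecaus j$, $\slicefin j$, and $\slicesnd j$ mention only auxiliary relation names. Since the map $j\mapsto(x_j,s_j)$ is injective, the $\toloct{\schof{\ded}}$-facts tagged with $(x_i,s_i)$ in $M$ are exactly $\sliceduc i=\addlt{D_i}{x_i}{s_i}$, the $\edb{\ded}$-facts of $\decl H$ at $(x_i,s_i)$ being $\addlt{H(x_i)}{x_i}{s_i}$, which already lie inside $\addlt{D_i}{x_i}{s_i}$ because $H(x_i)\subseteq I\subseteq D_i$ and no fact of $\untag{m_i}$ or of the inductive part of $\cnfs_i(x_i)$ is over $\edb{\ded}$. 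This identity is what lets negation be transferred faithfully from $D_i$ to $M$.

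\emph{Inductive step.} Suppose $R(\bar a)$ is newly derived at stratum $\ell$ by a satisfying valuation $V$ of a deductive rule $\rl$, firing against the facts $J$ available at that stage (the completed lower strata together with the $\ell$-facts derived so far). Put $\hat V=V\cup\{\var x\mapsto x_i,\;\var s\mapsto s_i\}$ and let $\grl$ be the ground instance of the companion rule (\ref{eq:pure-duc}) obtained from $\hat V$. Its positive body is $\addlt{V(\bpos{\rl})}{x_i}{s_i}$; every such tagged fact belongs to $D_i$ and hence to $N$, either by the base hypothesis (input facts) or by the induction hypothesis (facts derived at an earlier stage or stratum). For a negative body atom, if $S(\bar b)\in V(\bneg{\rl})$ then, because $\rl$ fired, $S(\bar b)\notin D_i$: a negative $\idb{\ded}$-atom sits in a stratum strictly below $\ell$ and is therefore already at its final value in $J$, and a negative $\edb{\ded}$-atom is tested against $I$ and is never produced deductively. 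By the exact-match identity this gives $S(x_i,s_i,\bar b)\notin M$. Hence the reduction defining $\grded$ retains $\grl$ and deletes its negative atoms, so $\grl\in\grded$; since its positive body lies in $N$ and $N=\grded(\decl H)$ is closed under $\grded$, we obtain $R(x_i,s_i,\bar a)\in N$.

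\emph{Main obstacle.} The delicate point is the handling of negation: one must be sure that each tagged negative body atom is genuinely absent from all of $M$, not merely from the partial stratified computation. This is exactly what the exact-match identity together with the stratification ordering secures — negative $\idb{\ded}$-atoms live in strictly lower strata and are thus fully evaluated, while negative $\edb{\ded}$-atoms are pinned down by $\decl H$. A secondary check is that the uniform instantiation of the two fresh slots to $(x_i,s_i)$ is legitimate: every variable of a companion rule (\ref{eq:pure-duc}) other than $\var x,\var s$ already occurs in $\rl$, so $\hat V$ is a well-defined valuation whose image lies in $\adom{\decl H}$ (with $x_i\in\nw$ and $s_i\in\Nat$), ensuring that $\grl$ is a genuine ground rule in $\gr{\pure{\ded}}{\decl H}$.
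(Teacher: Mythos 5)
Your proof is correct and follows essentially the same route as the paper's: the same double induction (outer on the strata of $\deduc{\ded}$, inner on the stages of the immediate consequence operator), with each operational derivation mirrored by a ground instance of a rule of the form (\ref{eq:pure-duc}) whose fresh location and timestamp variables are instantiated to $(x_i,s_i)$, and with the conclusion obtained from the closure of $N$ under $\grded$. Your ``exact-match identity'' is precisely the paper's auxiliary Claim~\ref{claim:not-in-duc-not-in-M} (together with its trivial converse $\sliceduc i\subseteq M$), used in exactly the same way: negation is pushed down to lower strata or to facts never derivable deductively, so the tagged negative atoms are absent from $M$ and the ground rule survives the reduct.
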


\begin{proof}

Abbreviate $I_{i}=\cnfs_{i}(x_{i})\cup\untag{m_{i}}$. Recall that
$D_{i}=\deduc{\ded}(I_{i})$, which is computed with the stratified
semantics.

For $k\in\Nat$, we write $\strat{D_{i}}k$ to denote the set obtained
by adding to $I_{i}$ all facts derived in stratum $1$ up to stratum
$k$ during the computation of $D_{i}$. For the largest stratum number
$n$ of $\deduc{\ded}$, we have $\strat{D_{i}}n=D_{i}$. Also, because
stratum numbers start at $1$, we have $\strat{D_{i}}0=I_{i}$. We
show by induction on $k=0$, $1$, $2$, $\ldots$, $n$, that $\shaddlt{\strat{D_{i}}k}{x_{i}}{s_{i}}\subseteq N$.

\paragraph*{Base case}

For the base case, $k=0$, the property holds by the given assumption
$\addlt{I_{i}}{x_{i}}{s_{i}}\subseteq N$.

\paragraph*{Induction hypothesis}

For the induction hypothesis, assume for some stratum number $k$
with $k\geq1$ that $\shaddlt{\strat{D_{i}}{k-1}}{x_{i}}{s_{i}}\subseteq N$.

\paragraph*{Inductive step}

For the inductive step, we show that $\shaddlt{\strat{D_{i}}k}{x_{i}}{s_{i}}\subseteq N$.
Recall that the input of stratum $k$ in $\deduc{\ded}$ is the set
$\strat{D_{i}}{k-1}$, and the semantics is given by the fixpoint
semantics of semi-positive $\datalogneg$ (see Section \ref{sub:stratified-semantics}).
So, we can consider $\strat{D_{i}}k$ to be a fixpoint, i.e., as the
set $\bigcup_{l\in\Nat}A_{l}$ with $A_{0}=\strat{D_{i}}{k-1}$ and
$A_{l}=T(A_{l-1})$ for each $l\geq1$, where $T$ is the immediate
consequence operator of stratum $k$. We show by inner induction on
$l=0$, $1$, etc, that 
\[
\shaddlt{A_{l}}{x_{i}}{s_{i}}\subseteq N.
\]
For the base case ($l=0$), we have $A_{0}=\strat{D_{i}}{k-1}$, for
which we can apply the outer induction hypothesis to know that $\shaddlt{\strat{D_{i}}{k-1}}{x_{i}}{s_{i}}=\shaddlt{A_{0}}{x_{i}}{s_{i}}\subseteq N$,
as desired. For the inner induction hypothesis, we assume for some
$l\geq1$ that $\shaddlt{A_{l-1}}{x_{i}}{s_{i}}\subseteq N$. For
the inner inductive step, we show that $\shaddlt{A_{l}}{x_{i}}{s_{i}}\subseteq N$.
Let $\fc\in A_{l}\setminus A_{l-1}$. Let $\rl\in\deduc{\ded}$ and
$V$ be a rule from stratum $k$ and valuation respectively that have
derived $\fc$. Let $\rl'$ be the rule in $\pure{\ded}$ obtained
by applying the transformation (\ref{eq:pure-duc}) to $\rl$. Let
$V'$ be $V$ extended to assign $x_{i}$ and $s_{i}$ to the new
variables in $\rl'$ that represent the location and timestamp respectively.
Note in particular that $V'(\bpos{\rl'})=\addlt{V(\bpos{\rl})}{x_{i}}{s_{i}}$
and $V'(\bneg{\rl'})=\addlt{V(\bneg{\rl})}{x_{i}}{s_{i}}$. Let $\grl$
be the \emph{positive} ground rule obtained by applying $V'$ to
$\rl'$ and by subsequently removing all negative (ground) body atoms.
We show that $\grl\in\grded$ and that its body is in $N$, so that
$\grl$ derives $\head{\grl}=\addlt{\fc}{x_{i}}{s_{i}}\in N$, as
desired.
\begin{itemize}
\item In order for $\grl$ to be in $\grded$, it is required that $V'(\bneg{\rl'})\cap M=\emptyset$.
Because $V$ is satisfying for $\rl$, and negation in $\rl$ is only
applied to lower strata, we have $V(\bneg{\rl})\cap\strat{D_{i}}{k-1}=\emptyset$.
Moreover, since a relation is computed in only one stratum of $\deduc{\ded}$,
we overall have $V(\bneg{\rl})\cap D_{i}=\emptyset$. Then by Claim~\ref{claim:not-in-duc-not-in-M}
we have $\addlt{V(\bneg{\rl})}{x_{i}}{s_{i}}\cap M=\emptyset$. Hence,
\[
V'(\bneg{\rl'})\cap M=\emptyset.
\]

\item Now we show that $\bpos{\grl}\subseteq N$. Because $V$ is satisfying
for $\rl$, we have $V(\bpos{\rl})\subseteq A_{l-1}$, and by applying
the inner induction hypothesis we have $\addlt{V(\bpos{\rl})}{x_{i}}{s_{i}}\subseteq N$.
Therefore, $\bpos{\grl}=V'(\bpos{\rl'})\subseteq N$.
\end{itemize}
\end{proof}

\tline

\begin{claim}\label{claim:not-in-duc-not-in-M}Let $i$ be a transition
of $\run$. Let $I$ be a set of facts over $\schof{\ded}$. If $I\cap D_{i}=\emptyset$
then $\addlt I{x_{i}}{s_{i}}\cap M=\emptyset$.\end{claim}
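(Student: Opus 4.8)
The plan is to argue by contraposition: rather than deriving $\addlt{I}{x_{i}}{s_{i}}\cap M=\emptyset$ from $I\cap D_{i}=\emptyset$ directly, I would show that any fact of $I$ whose $\mathrm{LT}$-tagged version $\addlt{\fc}{x_{i}}{s_{i}}$ lies in $M$ must already belong to $D_{i}$, which witnesses $I\cap D_{i}\neq\emptyset$. The single lever I would use is the identity established in the Conclusion of Section~\ref{sub:run-to-model-construction},
\[
\proj{M}{\toloct{\schof{\ded}}}=\bigcup_{j\in\Nat}\sliceduc j=\bigcup_{j\in\Nat}\addlt{D_{j}}{x_{j}}{s_{j}},
\]
which says that, restricted to the user schema, $M$ consists of exactly the deductive slices and nothing else.

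Concretely, suppose $\fc=R(\bar a)\in I$ with $\addlt{\fc}{x_{i}}{s_{i}}=R(x_{i},s_{i},\bar a)\in M$. Since $R\in\schof{\ded}$, this tagged fact is over $\toloct{\schof{\ded}}$, so it lies in $\proj{M}{\toloct{\schof{\ded}}}$ and hence, by the displayed identity, in $\addlt{D_{j}}{x_{j}}{s_{j}}$ for some $j\in\Nat$. By the definition of the operation $\addlt{\cdot}{x_{j}}{s_{j}}$, the two leading components of any of its facts are precisely the location specifier $x_{j}$ and the timestamp $s_{j}$; comparing with $R(x_{i},s_{i},\bar a)$ yields $x_{i}=x_{j}$, $s_{i}=s_{j}$, and $\fc=R(\bar a)\in D_{j}$. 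It then remains to collapse $j$ to $i$: since $\globR{x_{i},s_{i}}=i$ and $\globR{x_{j},s_{j}}=j$ while these two pairs coincide, the uniqueness of $\globR{\cdot}$ from Section~\ref{sub:run-timestamps-and-trace} gives $i=j$. Therefore $\fc\in D_{i}$, contradicting $I\cap D_{i}=\emptyset$.

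I expect no serious obstacle here: the whole argument reduces to reading off the $\toloct{\schof{\ded}}$-fragment of $M$ and matching tags. The one point deserving care is the justification that $M$ contributes no \emph{stray} facts over $\toloct{\schof{\ded}}$ beyond the deductive slices, i.e.\ that the summands $\slicecaus j$, $\slicefin j$ and $\slicesnd j$ never produce facts on the user schema. This is exactly what the displayed identity encodes, and it holds because those three summands are built solely from the auxiliary relations of Table~\ref{tab:relation-names} (namely $\before$, $\hassender$, $\issmaller$, $\hasmax$, $\cand_{R}$, $\chosen_{R}$, $\other_{R}$), none of which belongs to $\toloct{\schof{\ded}}$. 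Once that is in hand, the tag-matching together with the injectivity of $\globR{\cdot}$ finishes the claim.
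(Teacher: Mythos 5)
Your proof is correct and follows essentially the same route as the paper's: both arguments rest on the observation that the only facts of $M$ over $\toloct{\schof{\ded}}$ are those in the deductive slices $\sliceduc j=\addlt{D_j}{x_j}{s_j}$, together with tag-matching and the uniqueness of the transition attached to the pair $(x_i,s_i)$ to pin the slice down to $j=i$. The paper phrases this pointwise (its conditions \emph{(i)}--\emph{(iii)}) rather than via contraposition and the global identity stated in the Conclusion of Section~\ref{sub:run-to-model-construction}, but the underlying content is identical.
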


\begin{proof} If a fact $\fc\in M$ is over schema $\toloct{\schof{\ded}}$
and has location specifier $x_{i}$ and timestamp $s_{i}$ then $\fc\in\sliceduc i$
because \emph{(i)} for any transition $j$ there are no facts over
$\toloct{\schof{\ded}}$ in $\slicecaus j$, $\slicefin j$ or $\slicesnd j$;
\emph{(ii)} we only add facts with location specifier $x_{i}$ to
$\sliceduc j$ if $j$ is a transition of node $x_{i}$; and, \emph{(iii)}
for every transition $j$ of node $x_{i}$, if $i\neq j$ then $\locR j\neq s_{i}$.

Hence, it suffices to show $\addlt I{x_{i}}{s_{i}}\cap\sliceduc i=\emptyset$.
But this is immediate from $I\cap D_{i}=\emptyset$ because $\sliceduc i$
equals $\addlt{D_{i}}{x_{i}}{s_{i}}$ by definition.\end{proof}

\tline

\begin{claim}\label{claim:ind-in-stable}Let $j$ be a transition
of $\run$. Let $D_{j}$ be the output of $\deduc{\ded}$ during transition
$j$. Suppose $\sliceduc j\subseteq N$. We have $\addlt{\induc{\ded}\mstep{D_{j}}}{x_{j}}{s_{j}+1}\subseteq N$.\end{claim}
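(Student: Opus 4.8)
The plan is to reuse the single-fact argument from the end of the proof of Claim~\ref{claim:duc-in-stable}, but now \emph{without} any fixpoint induction, since $\induc{\ded}\mstep{D_{j}}$ is produced in a single derivation step rather than as a fixpoint. So I would fix an arbitrary fact $\fc\in\induc{\ded}\mstep{D_{j}}$ and show $\addlt{\fc}{x_{j}}{s_{j}+1}\in N$; as $\fc$ is arbitrary this yields $\addlt{\induc{\ded}\mstep{D_{j}}}{x_{j}}{s_{j}+1}\subseteq N$. Throughout I would use that $N=\grded(\decl H)$ is the least fixpoint of the \emph{positive} program $\grded$ on input $\decl H$, so that it suffices to exhibit a ground rule $\grl\in\grded$ whose positive body lies in $N$ and whose head is $\addlt{\fc}{x_{j}}{s_{j}+1}$.

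First I would unpack the derivation of $\fc$. By the single-step semantics of $\induc{\ded}$, there is an inductive rule of $\ded$, giving a rule $\rl\in\induc{\ded}$ (its $\datalogneg$ version with the annotation dropped) together with a valuation $V$ that is satisfying for $\rl$ on $D_{j}$ with $\fc=V(\head{\rl})$; thus $V(\bpos{\rl})\subseteq D_{j}$ and $V(\bneg{\rl})\cap D_{j}=\emptyset$. Let $\rl'$ be the rule of $\pure{\ded}$ obtained from this inductive rule by transformation~(\ref{eq:pure-ind}), so that $\rl'$ has head $R(\var x,\var t,\tvar u)$, body literals tagged with location $\var x$ and timestamp $\var s$, and the extra positive atom $\timesucc(\var s,\var t)$. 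I would extend $V$ to $V'$ by setting $\var x\mapsto x_{j}$, $\var s\mapsto s_{j}$ and $\var t\mapsto s_{j}+1$. The point to get right is that the body inherits the \emph{current} timestamp $s_{j}$ while the head lands at the \emph{next} timestamp $s_{j}+1$: concretely $V'(\bpos{\rl'})=\addlt{V(\bpos{\rl})}{x_{j}}{s_{j}}\cup\{\timesucc(s_{j},s_{j}+1)\}$, $V'(\bneg{\rl'})=\addlt{V(\bneg{\rl})}{x_{j}}{s_{j}}$, and $V'(\head{\rl'})=\addlt{\fc}{x_{j}}{s_{j}+1}$.

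Next I would form the positive ground rule $\grl$ by applying $V'$ to $\rl'$ and deleting the negative ground body atoms, and check that $\grl\in\grded$ with $\bpos{\grl}\subseteq N$; then $\grl$ derives $\head{\grl}=\addlt{\fc}{x_{j}}{s_{j}+1}\in N$, as wanted. For $\grl\in\grded$ I need $V'(\bneg{\rl'})\cap M=\emptyset$. Since $V$ is satisfying on $D_{j}$ we have $V(\bneg{\rl})\cap D_{j}=\emptyset$, and Claim~\ref{claim:not-in-duc-not-in-M} then gives $\addlt{V(\bneg{\rl})}{x_{j}}{s_{j}}\cap M=\emptyset$, which is exactly $V'(\bneg{\rl'})\cap M=\emptyset$. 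For $\bpos{\grl}\subseteq N$ I would note that $\addlt{V(\bpos{\rl})}{x_{j}}{s_{j}}\subseteq\addlt{D_{j}}{x_{j}}{s_{j}}=\sliceduc j\subseteq N$ by the hypothesis of the claim, while $\timesucc(s_{j},s_{j}+1)\in\Itime\subseteq\decl H\subseteq N$.

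I expect the only delicate point to be the timestamp bookkeeping: the negative body atoms must be evaluated at $(x_{j},s_{j})$ so that Claim~\ref{claim:not-in-duc-not-in-M} (which speaks about $D_{j}$ and timestamp $s_{j}$) applies to them, even though the derived fact lands at timestamp $s_{j}+1$. Everything else is routine, and, in contrast to Claim~\ref{claim:duc-in-stable}, there is no inner fixpoint induction to carry out, because inductive rules fire only once.
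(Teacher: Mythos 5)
Your proposal is correct and follows essentially the same route as the paper's own proof: the same lifting of the inductive rule via transformation~(\ref{eq:pure-ind}), the same extension of $V$ to $V'$ with body timestamp $s_{j}$ and head timestamp $s_{j}+1$, the use of Claim~\ref{claim:not-in-duc-not-in-M} to establish $V'(\bneg{\rl'})\cap M=\emptyset$, and the hypothesis $\sliceduc j\subseteq N$ together with $\timesucc(s_{j},s_{j}+1)\in\decl H$ for the positive body. Your closing observation that no inner fixpoint induction is needed (unlike in Claim~\ref{claim:duc-in-stable}) matches the paper exactly.
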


\begin{proof}

Let $\fc\in\induc{\ded}\mstep{D_{j}}$. Let $\rl\in\induc{\ded}$
and $V$ respectively be a rule and valuation that have derived $\fc$.
Let $\rl'$ be the rule in $\pure{\ded}$ that is obtained after applying
transformation (\ref{eq:pure-ind}) to $\rl$. Thus, besides the additional
location variable, the rule $\rl'$ has two timestamp variables, one
in the body and one in the head. Moreover, the body contains an additional
positive $\timesucc$-atom. Let $V'$ be $V$ extended to assign $x_{j}$
to the location variable, and to assign timestamps $s_{j}$ and $s_{j}+1$
to the body and head timestamp variables respectively. Let $\grl$
be the \emph{positive} ground rule obtained from $\rl'$ by applying
valuation $V'$ and by subsequently removing all negative (ground)
body atoms. We show that $\grl\in\grded$ and that its body is in
$N$, so that $\grl$ derives $\head{\grl}=\addlt{\fc}{x_{j}}{s_{j}+1}\in N$,
as desired.
\begin{itemize}
\item For $\grl$ to be in $\grded$, we require $V'(\bneg{\rl'})\cap M=\emptyset$.
Since $V'(\bneg{\rl'})=\addlt{V(\bneg{\rl})}{x_{j}}{s_{j}}$, it suffices
to show $\addlt{V(\bneg{\rl})}{x_{j}}{s_{j}}\cap M=\emptyset$. Because
$V$ is satisfying for $\rl$, we have $V(\bneg{\rl})\cap D_{j}=\emptyset$.
Then, by Claim~\ref{claim:not-in-duc-not-in-M} we have $\addlt{V(\bneg{\rl})}{x_{j}}{s_{j}}\cap M=\emptyset$.
\item Now we show $V'(\bpos{\rl'})\subseteq N$. The set $V'(\bpos{\rl'})$
consists of the facts $\addlt{V(\bpos{\rl})}{x_{j}}{s_{j}}$ and the
fact $\timesucc(s_{j},s_{j}+1)$. The latter fact is in $\decl H$
and thus in $N$. For the other facts, because $V$ is satisfying
for $\rl$, we have $V(\bpos{\rl})\subseteq D_{j}$ and thus $\addlt{V(\bpos{\rl})}{x_{j}}{s_{j}}\subseteq\addlt{D_{j}}{x_{j}}{s_{j}}=\sliceduc j$.
And by using the given assumption $\sliceduc j\subseteq N$, we obtain
the inclusion in $N$.
\end{itemize}
\end{proof}

\tline

\begin{claim}\label{claim:cand-in-stable}Let $i$ be a transition
of $\run$. Suppose $\addlt{D_{i}}{x_{i}}{s_{i}}\subseteq N$. For
each $R(y,\bar{a})\in\mesg i$ and timestamp $t\in\Nat$ with $(y,t)\not\caus(x_{i},s_{i})$
we have 
\[
\cand_{R}(x_{i},s_{i},y,t,\bar{a})\in N.
\]
\end{claim}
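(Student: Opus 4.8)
The plan is to exhibit a single ground rule of the form~(\ref{eq:cand}) that survives into $\grded$ and whose positive body is already contained in $N$; firing it then forces $\cand_{R}(x_{i},s_{i},y,t,\bar{a})$ into $N$, since $N=\grded(\decl H)$ is a fixpoint of the immediate-consequence operator of $\grded$. First I would unpack the hypothesis $R(y,\bar{a})\in\mesg i$. Since $\mesg i$ is the output of $\async{\ded}$ on $D_{i}$ restricted to network addressees, there is a satisfying valuation $V$ for the corresponding $\async{\ded}$ rule $R(\var y,\tvar u)\gets\simplebody{\tvar u,\tvar v,\var y}$ with $V(\var y)=y\in\nw$ and $V(\tvar u)=\bar{a}$. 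Splitting the body literals $\simplebody{\tvar u,\tvar v,\var y}$ into a positive part $\beta^{+}$ and a negative part $\beta^{-}$, satisfaction of $V$ on $D_{i}$ yields $V(\beta^{+})\subseteq D_{i}$ and $V(\beta^{-})\cap D_{i}=\emptyset$.

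Next I would build the candidate ground rule $\grl$ by applying to rule~(\ref{eq:cand}) the valuation $V'$ that extends $V$ with $\var x\mapsto x_{i}$, $\var s\mapsto s_{i}$, and $\var t\mapsto t$. Its positive body atoms are then $\addlt{V(\beta^{+})}{x_{i}}{s_{i}}$, $\relall(y)$, and $\reltime(t)$, while its negative body atoms are $\addlt{V(\beta^{-})}{x_{i}}{s_{i}}$ together with the causality-guard atom $\before(y,t,x_{i},s_{i})$. To place $\grl$ in $\grded$ I must show all these negative atoms are absent from $M$. For the atoms stemming from $\beta^{-}$, I would combine $V(\beta^{-})\cap D_{i}=\emptyset$ with Claim~\ref{claim:not-in-duc-not-in-M} to conclude $\addlt{V(\beta^{-})}{x_{i}}{s_{i}}\cap M=\emptyset$.

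The key step, which I expect to be the crux, is showing $\before(y,t,x_{i},s_{i})\notin M$; this is exactly where the hypothesis $(y,t)\not\caus(x_{i},s_{i})$ is consumed. Here I would lean on the observation, read directly off the construction of $M$, that the only source of $\before$-facts in $M$ is the family of sets $\slicecaus j$, and that $\slicecaus j$ contributes precisely the facts $\before(x,s,x_{j},s_{j})$ with $(x,s)\caus(x_{j},s_{j})$. Because every pair $(y,t)\in\nwnat$ equals $(x_{j},s_{j})$ for the unique transition $j=\globR{y,t}$, this gives the clean characterization that $\before(a,b,c,d)\in M$ holds if and only if $(a,b)\caus(c,d)$. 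Applied to the guard atom, $(y,t)\not\caus(x_{i},s_{i})$ delivers $\before(y,t,x_{i},s_{i})\notin M$, so $\grl$ is not deleted by the reduct and hence $\grl\in\grded$.

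Finally I would check that the positive body of (the reduct of) $\grl$ lies in $N$. We have $\addlt{V(\beta^{+})}{x_{i}}{s_{i}}\subseteq\addlt{D_{i}}{x_{i}}{s_{i}}\subseteq N$ by the standing assumption of the claim, while $\relall(y)\in N$ since $y\in\nw$ and $\reltime(t)\in N$ since $t\in\Nat$, both because $\decl H\subseteq N$. With $\bpos{\grl}\subseteq N$ and $\grl\in\grded$, the fixpoint $N$ must contain $\head{\grl}=\cand_{R}(x_{i},s_{i},y,t,\bar{a})$, which is the desired conclusion. Aside from the characterization of $\before$-facts, every ingredient is either a direct membership in $\decl H$ or an appeal to Claim~\ref{claim:not-in-duc-not-in-M}, so I anticipate no further obstacle.
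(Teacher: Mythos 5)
Your proposal is correct and follows essentially the same route as the paper's proof: instantiate the rule of form~(\ref{eq:cand}) with the valuation witnessing $R(y,\bar{a})\in\async{\ded}\mstep{D_{i}}$ extended by $x_{i},s_{i},t$, kill the negative body atoms via Claim~\ref{claim:not-in-duc-not-in-M} together with the observation that $\before$-facts ending in $(x_{i},s_{i})$ can only enter $M$ through $\slicecaus i$ (your global ``iff'' characterization of $\before$-facts versus $\caus$ is just a uniform restatement of this), and then place the positive body in $N$ using $\addlt{D_{i}}{x_{i}}{s_{i}}\subseteq N$ and $\decl H\subseteq N$.
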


\begin{proof}

By definition of $\mesg i$, we have $R(y,\bar{a})\in\async{\ded}\mstep{D_{i}}$.
Let $\rl\in\async{\ded}$ and $V$ be a rule and valuation that have
produced $R(y,\bar{a})$. Let $\rl'\in\ded$ be the original asynchronous
rule on which $\rl$ is based. Let $\rl''\in\pure{\ded}$ be the rule
obtained from $\rl'$ by applying transformation (\ref{eq:cand}).
Let $V''$ be valuation $V$ extended to assign $x_{i}$ and $s_{i}$
to respectively the sender location and sender timestamp of $\rl''$,
and to assign $y$ and $t$ respectively to the addressee location
and addressee arrival timestamp. Let $\grl$ denote the \emph{positive}
ground rule that is obtained from $\rl''$ by applying valuation $V''$
and by subsequently removing all negative (ground) body atoms. We
show that $\grl\in\grded$ and that its body is in $N$, so that $\grl$
derives $\head{\grl}=\cand_{R}(x_{i},s_{i},y,t,\bar{a})\in N$, as
desired.
\begin{itemize}
\item For $\grl$ to be in $\grded$, we require $V''(\bneg{\rl''})\cap M=\emptyset$.
By construction of $\rl''$, the set $V''(\bneg{\rl''})$ consists
of the facts $\addlt{V(\bneg{\rl})}{x_{i}}{s_{i}}$ and the fact $\before(y,t,x_{i},s_{i})$.
First, because $V$ is satisfying for $\rl$, we have $V(\bneg{\rl})\cap D_{i}=\emptyset$,
and thus $\addlt{V(\bneg{\rl})}{x_{i}}{s_{i}}\cap M=\emptyset$ by
Claim~\ref{claim:not-in-duc-not-in-M}. Moreover, we are given that
$(y,t)\not\caus(x_{i},s_{i})$, and thus we have not added $\before(y,t,x_{i},s_{i})$
to $\slicecaus i$, and by extension also not to $M$ (since $\slicecaus i$
is the only part of $M$ where we add $\before$-facts with last two
components $x_{i}$ and $s_{i}$). Thus overall $V''(\bneg{\rl''})\cap M=\emptyset$,
as desired.
\item Now we show $V''(\bpos{\rl''})\subseteq N$. By construction of $\rl''$,
the set $V''(\bpos{\rl''})$ consists of the facts $\addlt{V(\bpos{\rl})}{x_{i}}{s_{i}}$,
$\relall(y)$ and $\reltime(t)$. First, we immediately have $\reltime(t)\in\decl H\subseteq N$.
Also, by definition of $\mesg i$, $y$ is a valid addressee and thus
$\relall(y)\in\decl H\subseteq N$. Finally, because $V$ is satisfying
for $\rl$, we have $V(\bpos{\rl})\subseteq D_{i}$. Thus $\addlt{V(\bpos{\rl})}{x_{i}}{s_{i}}\subseteq\addlt{D_{i}}{x_{i}}{s_{i}}$,
and we are given that $\addlt{D_{i}}{x_{i}}{s_{i}}\subseteq N$. Thus
overall $V''(\bpos{\rl''})\subseteq N$.
\end{itemize}
\end{proof}

\subsection{Inclusion $N\subseteq M$}

\label{sub:N-in-M}

\newcommand{\prevN}{N_{l-1}}

\newcommand{\diffN}{N_{l}\setminus N_{l-1}}

In this section we show that $N\subseteq M$. By definition, $N=\grded(\decl H)$.
 Following the semantics of positive $\datalogneg$ programs in Section
\ref{sub:positive-and-semi-positive}, we can view $N$ as a fixpoint,
i.e., $N=\bigcup_{l\in\Nat}N_{l}$, where $N_{0}=\decl H$, and for
each $l\geq1$ the set $N_{l}$ is obtained by applying the immediate
consequence operator of $\grded$ to $N_{l-1}$. This implies $N_{l-1}\subseteq N_{l}$
for each $l\geq1$. We show by induction on $l=0$, $1$, $\ldots$,
that $N_{l}\subseteq M$. For the base case ($l=0$), we immediately
have $N_{0}=\decl H\subseteq M$. For the induction hypothesis, we
assume for some $l\geq1$ that $N_{l-1}\subseteq M$. For the inductive
step, we show that $N_{l}\subseteq N$. Specifically, we divide the
facts of $N_{l}\setminus N_{l-1}$ into groups based on their predicate,
and for each group we show inclusion in $M$. As for terminology,
we call a ground rule $\grl\in G$ \emph{active} on $\prevN$ if
$\bpos{\grl}\subseteq\prevN$. The numbered claims we will refer to
can be found in Section~\ref{sub:run-to-model-second-dir--claims}.

\subsubsection{Causality}

Let $\before(x,s,y,t)\in\diffN$. It is sufficient to show that $(x,s)\caus(y,t)$
because then $\before(x,s,y,t)\in\slicecaus i\subseteq M$ where $i=\globR{y,t}$.
We have the following cases:

\paragraph*{Local edge}

The $\before$-fact was derived by a ground rule in $\grded$ of the
form (\ref{eq:before-step}) (local edge). This implies $x=y$ and
$t=s+1$. Then $(x,s)\caus(y,t)$ by definition of $\caus$.

\paragraph*{Message edge}

The $\before$-fact was derived by a ground rule in $\grded$ of the
form (\ref{eq:before-send}) (message edge):
\[
\before(x,s,y,t)\gets\chosen_{R}(x,s,y,t,\bar{a}).
\]
Since this rule is active on $\prevN$, we have $\chosen_{R}(x,s,y,t,\bar{a})\in\prevN$.
By applying the induction hypothesis, we have $\chosen_{R}(x,s,y,t,\bar{a})\in M$.
Denoting $j=\globR{x,s}$, the set $\slicesnd j$ is the only part
of $M$ where we could have added this fact. This implies that $x$
during its step $s$ sends a message to $y$, and this message arrives
at local step $t$ of $y$. Then $(x,s)\caus(y,t)$ by definition
of $\caus$.

\paragraph*{Transitive edge}

The $\before$-fact was derived by a ground rule in $\grded$ of the
form (\ref{eq:before-tr}) (transitive edge):
\[
\before(x,s,y,t)\gets\before(x,s,z,u),\,\before(z,u,y,t).
\]
Since this rule is active on $\prevN$, its body facts are in $\prevN$.
By applying the induction hypothesis, we have $\before(x,s,z,u)\in M$
and $\before(z,u,y,t)\in M$. The only places we could have added
these facts to $M$ are in the sets $\slicecaus j$ and $\slicecaus k$
respectively, where $j=\globR{z,u}$ and $k=\globR{y,t}$. By construction
of the sets $\slicecaus j$ and $\slicecaus k$ we respectively have
that $(x,s)\caus(z,u)$ and $(z,u)\caus(y,t)$, and thus by transitivity
$(x,s)\caus(y,t)$, as desired.

\subsubsection{Finite Messages}

\paragraph*{Senders}

Let $\hassender(x,s,y,t)\in\diffN$. This fact can only have been
derived by a ground rule in $\grded$ of the form (\ref{eq:has-sender}):
\[
\hassender(x,s,y,t)\gets\chosen_{R}(y,t,x,s,\bar{a}).
\]
Since this rule is active on $\prevN$, we have $\chosen_{R}(y,t,x,s,\bar{a})\in\prevN$.
By applying the induction hypothesis, we have $\chosen_{R}(y,t,x,s,\bar{a})\in M$.
We can only have added this fact in the set $\slicesnd i$ with $i=\globR{y,t}$.
This means that $y$ during its step $t$ sends a message $R(\bar{a})$
to $x$, and this message arrives during step $s$ of $x$. Hence,
denoting $j=\globR{x,s}$, we have $(y,t)\in\senders j$ (with $\senders j$
as defined in Section~\ref{sub:run-to-model-construction}). Thus
we have added the fact $\hassender(x,s,y,t)\in\slicefin j\subseteq M$,
as desired.

\paragraph*{Comparison of timestamps}

Let $\issmaller(x,s,y,t)\in\diffN$. This fact can only have been
derived by a ground rule in $\grded$ of the form (\ref{eq:is-smaller}):
\begin{eqnarray*}
\issmaller(x,s,y,t) & \gets & \hassender(x,s,y,t),\,\hassender(x,s,y,t'),\\
 &  & t<t'.
\end{eqnarray*}
Since this rule is active on $\prevN$, its body facts are in $\prevN$.
By applying the induction hypothesis, we have $\hassender(x,s,y,t)\in M$
and $\hassender(x,s,y,t')\in M$. The only part of $M$ where we could
have added these facts is the set $\slicefin i$ with $i=\globR{x,s}$.
By construction of the set $\slicefin i$, this implies that $(y,t)\in\senders i$
and $(y,t')\in\senders i$. Because $(t<t')\in\prevN$, we more specifically
know that $(t<t')\in\decl H$, which implies $t<t'$. Thus we have
added $\issmaller(x,s,y,t)\in\slicefin i$, as desired.

\paragraph*{Maximum timestamp}

Let $\hasmax(x,s,y)\in\diffN$. This fact can only have been derived
by a ground rule in $\grded$ of the form (\ref{eq:has-max}):
\[
\hasmax(x,s,y)\gets\hassender(x,s,y,t).
\]
Since this rule is active on $\prevN$, we have $\hassender(x,s,y,t)\in\prevN$.
By applying the induction hypothesis, we have $\hassender(x,s,y,t)\in M$.
The only part of $M$ where we could have added this fact, is the
set $\slicefin i$ with $i=\globR{x,s}$. Thus $(y,t)\in\senders i$,
and $y$ is a sender-node mentioned in $\senders i$. Hence, we have
added $\hasmax(x,s,y)\in\slicefin i\subseteq M$, as desired.

\paragraph*{Receive infinite}

Let $\rcvinf(x,s)\in\diffN$. This fact can only have been derived
by a ground rule in $\grded$ of the form (\ref{eq:rcv-inf}):
\[
\rcvinf(x,s)\gets\hassender(x,s,y,t).
\]
Since this rule is active on $\prevN$, we have $\hassender(x,s,y,t)\in\prevN$.
By applying the induction hypothesis, we have $\hassender(x,s,y,t)\in M$.
The only part of $M$ where we could have added this fact, is the
set $\slicefin i$ with $i=\globR{x,s}$. Thus $(y,t)\in\senders i$.
Moreover, because the rule (\ref{eq:rcv-inf}) contains a negative
$\hasmax$-atom in the body, and the above ground rule is in $\grded$,
it must be that $\hasmax(x,s,y)\notin M$, and thus $\hasmax(x,s,y)\notin\slicefin i$.
 But since $y$ is a sender-node mentioned in $\senders i$, the
absence of $\hasmax(x,s,y)$ from $\slicefin i$ is impossible. Therefore
this case can not occur.

\subsubsection{Regular Facts}

Let $R(x,s,\bar{a})\in\proj{(\diffN)}{\toloct{\schof{\ded}}}$. The
fact $R(x,s,\bar{a})$ has been derived by a ground rule $\grl\in\grded$
that is active on $\prevN$. Because $\grl\in\grded$, there is a
rule $\rl\in\pure{\ded}$ and valuation $V$ such that $\grl$ is
obtained from $\rl$ by applying $V$ and by subsequently removing
the negative (ground) body atoms, and such that $V(\bneg{\rl})\cap M=\emptyset$.
We have the following cases:

\paragraph*{Deductive}

Rule $\rl$ is of the form (\ref{eq:pure-duc}). Let $\rl'\in\deduc{\ded}$
be the original deductive rule corresponding to $\rl$. By construction
of $\rl$ out of $\rl'$, we can apply valuation $V$ to $\rl'$ as
well. Denote $i=\globR{x,s}$. We will show now that $V$ is satisfying
for $\rl'$ during transition $i$, which causes $V(\head{\rl'})=R(\bar{a})\in D_{i}$
to be derived, and we obtain as desired: 
\[
R(x,s,\bar{a})\in\addlt{D_{i}}xs=\addlt{D_{i}}{x_{i}}{s_{i}}=\sliceduc i\subseteq M.
\]
By definition of syntactic stratification, relations mentioned in
$\bpos{\rl'}$ are never computed in a stratum higher than $R$, and
relations mentioned in $\bneg{\rl'}$ are computed in a strictly lower
stratum than $R$. Thus, it is sufficient to show that $V(\bpos{\rl'})\subseteq D_{i}$
and $V(\bneg{\rl'})\cap D_{i}=\emptyset$. 

First we show $V(\bpos{\rl'})\subseteq D_{i}$. Because $\rl$ is
of the form (\ref{eq:pure-duc}), all facts in $V(\bpos{\rl})$ are
over $\toloct{\schof{\ded}}$ and have location specifier $x$ and
timestamp $s$. Moreover, since $\grl$ is active on $\prevN$, we
have $\bpos{\grl}=V(\bpos{\rl})\subseteq\prevN$. By applying the
induction hypothesis, we have $V(\bpos{\rl})\subseteq M$, and thus
$\droplt{V(\bpos{\rl})}\subseteq D_{i}$ by Claim~\ref{claim:in-M-in-duc}.
We thus obtain $V(\bpos{\rl'})\subseteq D_{i}$ since $\droplt{V(\bpos{\rl})}=V(\bpos{\rl'})$.

Next we show $V(\bneg{\rl'})\cap D_{i}=\emptyset$. Because $\rl$
is of the form (\ref{eq:pure-duc}), all facts in $V(\bneg{\rl})$
are over $\toloct{\schof{\ded}}$ and have location specifier $x$
and timestamp $s$. Moreover, by choice of $\rl$ and $V$, we have
$V(\bneg{\rl})\cap M=\emptyset$, and thus $\droplt{V(\bneg{\rl})}\cap D_{i}=\emptyset$
by Claim~\ref{claim:not-in-M-not-in-duc}. We thus obtain $V(\bneg{\rl'})\cap D_{i}=\emptyset$
since $\droplt{V(\bneg{\rl})}=V(\bneg{\rl'})$.

\paragraph*{Inductive}

Rule $\rl$ is of the form (\ref{eq:pure-ind}). Let $\rl'\in\induc{\ded}$
be the rule corresponding to $\rl$. First, $\grl$ contains in its
body a fact of the form $\timesucc(r,s)$. Since $\grl$ is active
on $\prevN$, we have $\timesucc(r,s)\in\prevN$ and more specifically,
$\timesucc(r,s)\in\decl H$. This implies that $s=r+1$. Denote $i=\globR{x,r}$
and $j=\globR{x,s}$. Since $s=r+1$, there are no transitions of
node $x$ between $i$ and $j$. By the relationship between $\rl$
and $\rl'$, we can apply $V$ to $\rl'$, and we will now show that
$V$ is satisfying for $\rl'$ during transition $i$. This results
in $V(\head{\rl'})=R(\bar{a})\in\induc{\ded}\mstep{D_{i}}\subseteq\cnfs_{i+1}(x)$,
and since $\cnfs_{i+1}(x)=\cnfs_{j}(x)\subseteq D_{j}$, we obtain
$R(x,s,\bar{a})\in\addlt{D_{j}}xs=\sliceduc j\subseteq M$, as desired.

First we show $V(\bpos{\rl'})\subseteq D_{i}$. Denote $I=\proj{V(\bpos{\rl})}{\toloct{\schof{\ded}}}$,
which allows us to exclude the extra $\timesucc$-fact in the body.
All facts in $I$ have location specifier $x$ and timestamp $r$.
Because $\grl$ is active on $\prevN$, we have $I\subseteq\bpos{\grl}\subseteq\prevN$,
and by applying the induction hypothesis, we have $I\subseteq M$.
Thus $\droplt I\subseteq D_{i}$ by Claim~\ref{claim:in-M-in-duc}.
Hence, $V(\bpos{\rl'})=\droplt I\subseteq D_{i}$. 

Secondly, showing that $V(\bneg{\rl'})\cap D_{i}=\emptyset$ is like
in the previous case, where $\rl$ is deductive.

\paragraph*{Delivery}

Rule $\rl$ is of the form (\ref{eq:deliv}). Then $\grl$ concretely
looks as follows, where $(y,t)\in\nwnat$:
\[
R(x,s,\bar{a})\gets\chosen_{R}(y,t,x,s,\bar{a}).
\]
Since $\grl$ is active on $\prevN$, we have $\chosen_{R}(y,t,x,s,\bar{a})\in\prevN$,
and by applying the induction hypothesis, we have $\chosen_{R}(y,t,x,s,\bar{a})\in M$.
The only part of $M$ where we could have added this fact, is $\slicesnd i$
with $i=\globR{y,t}$. This implies that $x$ will receive $R(\bar{a})$
during its local step $s$, thus during transition $j=\globR{x,s}$.
Then, by the operational semantics, we have $R(\bar{a})\in\untag{m_{j}}\subseteq D_{j}$.
Hence, $R(x,s,\bar{a})\in\addlt{D_{j}}xs=\sliceduc j\subseteq M$.

\subsubsection{Sending}

For a transition $i$ of $\run$, let $D_{i}$ denote the output of
subprogram $\deduc{\ded}$ during transition $i$.

\paragraph*{Candidates}

Let $\cand_{R}(x,s,y,t,\bar{a})\in\diffN$. The fact $\cand_{R}(x,s,y,t,\bar{a})$
is derived by a ground rule $\grl\in\grded$ of the form (\ref{eq:cand})
that is active on $\prevN$. Because $\grl\in\grded$, there is a
rule $\rl\in\pure{\ded}$ and a valuation $V$ such that $\grl$ is
obtained from $\rl$ by applying valuation $V$ and by subsequently
removing the negative (ground) body atoms, and so that $V(\bneg{\rl})\cap M=\emptyset$.
Denote $i=\globR{x,s}$. It is sufficient to show that $R(y,\bar{a})\in\mesg i$
and $(y,t)\not\caus(x,s)$, because then $\cand_{R}(x,s,y,t,\bar{a})\in\slicesnd i\subseteq M$,
as desired. 

First, we show $(y,t)\not\caus(x,s)$. Because there is a negative
$\before$-atom in $\rl$, the existence of $\grl$ in $\grded$ implies
that $\before(y,t,x,s)\notin M$. Hence, $\before(y,t,x,s)\notin\slicecaus i$.
Then by construction of $\slicecaus i$ we obtain $(y,t)\not\caus(x,s)$.

Secondly, we show $R(y,\bar{a})\in\mesg i$. Let $\rl'\in\ded$ be
the original asynchronous rule on which $\rl$ is based. Let $\rl''\in\async{\ded}$
be the rule corresponding to $\rl'$. It follows from the constructions
of $\rl$ out of $\rl'$ and $\rl''$ out of $\rl'$ that valuation
$V$ can be applied to $\rl''$. Note, $V(\head{\rl''})=R(y,\bar{a})$.
We show that $V$ is satisfying for $\rl''$ during transition $i$
on $D_{i}$, which gives $R(y,\bar{a})\in\async{\ded}\mstep{D_{i}}$.
Moreover, the body of $\grl$ contains the fact $\relall(y)\in\decl H$,
and thus $y\in\nw$, making $y$ a valid addressee. Hence, $R(y,\bar{a})\in\mesg i$,
as desired.

We have to show $V(\bpos{\rl''})\subseteq D_{i}$ and $V(\bneg{\rl''})\cap D_{i}=\emptyset$.
Abbreviate $I_{1}=\proj{V(\bpos{\rl})}{\toloct{\schof{\ded}}}$ and
$I_{2}=\proj{V(\bneg{\rl})}{\toloct{\schof{\ded}}}$. Note, $\droplt{I_{1}}=V(\bpos{\rl''})$
and $\droplt{I_{2}}=V(\bneg{\rl''})$. All facts in $I_{1}\cup I_{2}$
have location specifier $x$ and timestamp $s$.
\begin{itemize}
\item Because $\grl$ is active on $\prevN$, we have $I_{1}\subseteq\bpos{\grl}\subseteq\prevN$,
and thus $I_{1}\subseteq M$ by the induction hypothesis. Then $V(\bpos{\rl''})=\droplt{I_{1}}\subseteq D_{i}$
by Claim~\ref{claim:in-M-in-duc}.
\item By choice of $\rl$ and $V$, we have $I_{2}\cap M=\emptyset$. Then
$\droplt{I_{2}}\cap D_{i}=\emptyset$ by Claim~\ref{claim:not-in-M-not-in-duc},
giving $V(\bneg{\rl''})\cap D_{i}=\emptyset$.
\end{itemize}

\paragraph*{Chosen}

Let $\chosen_{R}(x,s,y,t,\bar{a})\in\diffN$. This fact is derived
by a ground rule $\grl$ in $\grded$ of the form (\ref{eq:chosen}):
\[
\chosen_{R}(x,s,y,t,\bar{a})\gets\cand_{R}(x,s,y,t,\bar{a}).
\]
Denote $i=\globR{x,s}$. We show that $R(y,\bar{a})\in\mesg i$ and
that $t$ is the actual arrival timestamp of this message at $y$.
Then $\chosen_{R}(x,s,y,t,\bar{a})\in\slicesnd i\subseteq M$, as
desired.

First, since $\grl$ is active on $\prevN$, we have $\cand_{R}(x,s,y,t,\bar{a})\in\prevN$,
and thus $\cand_{R}(x,s,y,t,\bar{a})\in M$ by the induction hypothesis.
The set $\slicesnd i$ is the only part of $M$ where we could have
added this fact, which implies $R(y,\bar{a})\in\mesg i$ and $(y,t)\not\caus(x,s)$.

We are left to show that $t$ is the actual arrival timestamp of the
message. Because $\grl\in\grded$, there is a rule $\rl\in\pure{\ded}$
and valuation $V$ such that $\grl$ is obtained from $\rl$ by applying
$V$ and by subsequently removing the negative (ground) body atoms,
and so that $V(\bneg{\rl})\cap M=\emptyset$. Now, because rule $\rl$
contains a negative $\other_{R}$-atom in its body, we have $\other_{R}(x,s,y,t,\bar{a})\notin M$
and thus $\other_{R}(x,s,y,t,\bar{a})\notin\slicesnd i$. Since $R(y,\bar{a})\in\mesg i$
and $(y,t)\not\caus(x,s)$ (see above), the absence of this $\other_{R}$-fact
from $\slicesnd i$ can only be explained by the following: $t=\locR j$
with $j=\arr(i,y,R(\bar{a}))$, as desired.

\paragraph*{Other}

Let $\other_{R}(x,s,y,t,\bar{a})\in\diffN$. This fact is derived
by a ground rule $\grl$ of the form (\ref{eq:other}):
\begin{eqnarray*}
\other_{R}(x,s,y,t,\bar{a}) & \gets & \cand_{R}(x,s,y,t,\bar{a}),\,\chosen_{R}(x,s,y,t',\bar{a}),\\
 &  & t\neq t'.
\end{eqnarray*}
We have $\cand_{R}(x,s,y,t,\bar{a})\in\prevN$ and $\chosen_{R}(x,s,y,t',\bar{a})\in\prevN$
since $\grl$ is active on $\prevN$, and these facts are thus also
in $M$ by the induction hypothesis. Denote $i=\globR{x,s}$. The
only part of $M$ where we could have added these $\cand_{R}$- and
$\chosen_{R}$-facts to $M$, is the set $\slicesnd i$. First, $\cand_{R}(x,s,y,t,\bar{a})\in\slicesnd i$
implies that $R(y,\bar{a})\in\mesg i$ and $(y,t)\not\caus(x,s)$.
Second, $\chosen_{R}(x,s,y,t',\bar{a})\in\slicesnd i$ implies that
$t'$ is the real arrival timestamp of the message $R(\bar{a})$ at
$y$. Finally, since $\grl$ is active, we have $(t\neq t')\in\decl H$,
and thus $t\neq t'$. Therefore we have added $\other_{R}(x,s,y,t,\bar{a})$
to $\slicesnd i\subseteq M$, as desired.

\subsubsection{Subclaims}

\label{sub:run-to-model-second-dir--claims}

\begin{claim}\label{claim:in-M-in-duc}Let $I$ be a set of facts
over $\toloct{\schof{\ded}}$, all having the same location specifier
$x\in\nw$ and timestamp $s\in\Nat$. Denote $i=\globR{x,s}$. If
$I\subseteq M$ then $\droplt I\subseteq D_{i}$, where $D_{i}$ denotes
the output of subprogram $\deduc{\ded}$ during transition $i$ of
$\run$.\end{claim}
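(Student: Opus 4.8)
The plan is to argue fact by fact: for an arbitrary $R(x,s,\bar a)\in I$ I would determine which part of $M=\decl H\cup\bigcup_{j}\slice j$ it can belong to, and show that in each case membership forces $R(\bar a)\in D_i$. The observation that drives everything is that facts over $\toloct{\schof{\ded}}$ occur in only two places inside $M$: in the deductive slices $\sliceduc j=\addlt{D_j}{x_j}{s_j}$, and among the input edb-facts of $\decl H$ (those over $\toloct{\edb{\ded}}$, a subschema of $\toloct{\schof{\ded}}$). The remaining pieces $\slicecaus j$, $\slicefin j$, $\slicesnd j$ are built entirely from the auxiliary relations $\before$, $\hassender$, $\issmaller$, $\hasmax$, $\rcvinf$, $\cand_R$, $\chosen_R$, $\other_R$, which by assumption do not belong to $\schof{\ded}$ and hence not to $\toloct{\schof{\ded}}$; the remaining facts of $\decl H$ are over $\relall$ and $\schtime$, which are likewise not in $\toloct{\schof{\ded}}$.

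With that in hand, first I would treat the case where $R(x,s,\bar a)$ comes from a deductive slice $\sliceduc j$. Since every fact of $\sliceduc j$ carries location specifier $x_j$ and timestamp $s_j=\locR j$, matching the location $x$ and timestamp $s$ forces $x_j=x$ and $\locR j=s$; by the very definition of $\globR{\cdot}$ this pins down $j=\globR{x,s}=i$. Then $R(x,s,\bar a)\in\addlt{D_i}{x_i}{s_i}$ gives $R(\bar a)\in D_i$ immediately.

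Next I would treat the case where $R(x,s,\bar a)$ comes from $\decl H$. Then $R$ is an edb relation and $R(\bar a)\in H(x)$. Because $i=\globR{x,s}$ has active node $x_i=x$, and $D_i$ contains the input edb-facts at $x_i$ (as recorded in the description of the trace in Section~\ref{sub:run-timestamps-and-trace}, and because the operational semantics preserves the input fragment in every state while the deductive fixpoint only adds facts, so $H(x_i)\subseteq D_i$), we get $R(\bar a)\in H(x_i)\subseteq D_i$. Since the two cases are exhaustive and $R(x,s,\bar a)\in I$ was arbitrary, I conclude $\droplt I\subseteq D_i$.

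The step that needs the most care is not any calculation but the exhaustiveness of the case split: I must verify that I have enumerated every relation name under which a $\toloct{\schof{\ded}}$-fact can surface in $M$. This reduces to the disjointness of the auxiliary relation names from $\schof{\ded}$ (explicitly assumed in Section~\ref{sub:pure-P-notations-and-relations}) together with the uniqueness of the transition $\globR{x,s}$ determined by a location--timestamp pair. Everything else is immediate from the definition of $\sliceduc i$ and the inclusion $H(x_i)\subseteq D_i$.
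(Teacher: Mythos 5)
Your proof is correct and takes essentially the same approach as the paper's: the paper's proof is the one-line observation that the only part of $M$ contributing facts over $\toloct{\schof{\ded}}$ with location specifier $x$ and timestamp $s$ is $\sliceduc i$, whence $I\subseteq\sliceduc i=\addlt{D_i}{x}{s}$ and $\droplt I\subseteq D_i$. If anything, your version is slightly more careful, since you explicitly treat the edb-facts of $\decl H$ (which are over $\toloct{\edb{\ded}}\subseteq\toloct{\schof{\ded}}$ and thus also candidates) as a separate case and discharge it via $H(x_i)\subseteq\cnfs_i(x_i)\subseteq D_i$, a point the paper's one-liner glosses over because those facts happen to reappear inside $\sliceduc i$ anyway.
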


\begin{proof}The only part of $M$ where we add facts over $\toloct{\schof{\ded}}$
with location specifier $x$ and timestamp $s$ is $\sliceduc i$.
Hence $I\subseteq\sliceduc i=\addlt{D_{i}}xs$ and thus $\droplt I\subseteq D_{i}$.\end{proof}

\tline

\begin{claim}\label{claim:not-in-M-not-in-duc}Let $I$ be a set
of facts over $\toloct{\schof{\ded}}$, all having the same location
specifier $x\in\nw$ and timestamp $s\in\Nat$. Denote $i=\globR{x,s}$.
If $I\cap M=\emptyset$ then $\droplt I\cap D_{i}=\emptyset$, where
$D_{i}$ denotes the output of subprogram $\deduc{\ded}$ during transition
$i$ of $\run$.\end{claim}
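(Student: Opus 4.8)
The plan is to mirror the one-line argument used for the companion Claim~\ref{claim:in-M-in-duc}, exploiting the fact that $M$ records facts over $\toloct{\schof{\ded}}$ only inside the deductive slices $\sliceduc{\cdot}$. First I would unwind the definitions: since $i=\globR{x,s}$, the active node of transition $i$ is $x$ and its local clock $s_{i}=\locR i$ equals $s$, so $\sliceduc i=\addlt{D_i}{x}{s}$. The key structural observation—already invoked in the proof of Claim~\ref{claim:in-M-in-duc}, and following from the definition of $M$ in Section~\ref{sub:run-to-model-construction}—is that among the four pieces $\slicecaus j$, $\slicefin j$, $\sliceduc j$, $\slicesnd j$ that build up $M$, only $\sliceduc i$ contains facts over $\toloct{\schof{\ded}}$ carrying location specifier $x$ and timestamp $s$: the causality, finite-message, and sending slices contain no facts over $\toloct{\schof{\ded}}$ at all, and a deductive slice $\sliceduc j$ carries the prefix $(x_{j},\locR j)$, which equals $(x,s)$ exactly when $j=i$.

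Then I would argue directly. Because every fact of $I$ has location specifier $x$ and timestamp $s$, we may write $I=\addlt{(\droplt I)}{x}{s}$, and the observation above gives $I\cap M=I\cap\sliceduc i$. Since prepending the fixed pair $(x,s)$ to facts is injective, intersection commutes with this operation, so
\[ I\cap M=\addlt{(\droplt I)}{x}{s}\cap\addlt{D_i}{x}{s}=\addlt{(\droplt I\cap D_i)}{x}{s}. \]
Hence $I\cap M=\emptyset$ immediately forces $\droplt I\cap D_i=\emptyset$, as required. Equivalently, one can run the contrapositive: any fact witnessing $\droplt I\cap D_i\neq\emptyset$ has the form $R(\bar a)$ with $R(x,s,\bar a)\in I$ and, since $R(\bar a)\in D_i$, also $R(x,s,\bar a)\in\addlt{D_i}{x}{s}=\sliceduc i\subseteq M$, contradicting $I\cap M=\emptyset$.

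The only point requiring care—and the sole place this differs from Claim~\ref{claim:in-M-in-duc}—is justifying that $\sliceduc i$ really is the unique part of $M$ where facts over $\toloct{\schof{\ded}}$ with prefix $(x,s)$ can occur, together with the injectivity of the prefixing operation. I expect no genuine obstacle here: once that bookkeeping about the definition of $M$ is stated, the conclusion is a one-line set-theoretic consequence, so this claim is strictly easier than its positive counterpart and can be proved in essentially the same breath.
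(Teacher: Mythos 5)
Your proposal is correct, but your primary argument is heavier than what the paper uses, and the paper's route is exactly the ``equivalently'' contrapositive you mention at the end. The paper's proof needs only the single inclusion $\sliceduc i\subseteq M$: since $I\cap M=\emptyset$, a fortiori $I\cap\sliceduc i=\emptyset$; then $\sliceduc i=\addlt{D_{i}}xs$ and the injectivity of prefixing the fixed pair $(x,s)$ give $\droplt I\cap D_{i}=\emptyset$. In contrast, your main line establishes the stronger identity $I\cap M=I\cap\sliceduc i$, which requires the uniqueness bookkeeping (that $\sliceduc i$ is the \emph{only} place in $M$ where facts over $\toloct{\schof{\ded}}$ with prefix $(x,s)$ occur). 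That bookkeeping is what the \emph{positive} counterpart, Claim~\ref{claim:in-M-in-duc}, genuinely needs, but it is superfluous here --- which is precisely why this claim is the easier of the two; your closing remark identifying the uniqueness argument as ``the only point requiring care'' gets this backwards, since for a disjointness statement any subset of $M$ will do. A minor further point: your justification of uniqueness omits $\decl H$, which does contribute facts over $\toloct{\edb{\ded}}\subseteq\toloct{\schof{\ded}}$ with prefix $(x,s)$; the conclusion survives because those facts also lie in $\sliceduc i$ (as $H(x)\subseteq D_{i}$), but the paper's one-inclusion argument sidesteps the issue entirely.
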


\begin{proof}First, $I\cap M=\emptyset$ implies $I\cap\sliceduc i=\emptyset$
because $\sliceduc i\subseteq M$. And since $\sliceduc i=\addlt{D_{i}}xs$,
we have $I\cap\addlt{D_{i}}xs=\emptyset$. Finally, since the facts
in $I\cup\addlt{D_{i}}xs$ all have the same location specifier $x$
and timestamp $s$, we obtain $\droplt I\cap D_{i}=\emptyset$.\end{proof}

\section{Model to Run: Proof Details}

\label{app:proof-dir-2}

Consider the definitions and notations from Section~\ref{sub:model-to-run}.
In this section we show that $\run$ is a run of $\ded$ on input
$H$, and that $\trace{\run}=\proj M{\toloct{\schof{\ded}}}$. We
do this in several parts, where each part is placed in its own subsection: 
\begin{itemize}
\item in Section~\ref{sub:model-to-run--valid-start} we show $\cnf_{0}=\cnfstart{\ded}H$;
\item in Section~\ref{sub:model-to-run--valid-transition} we show that
every transition of $\run$ is valid; and,
\item in Section~\ref{sub:model-to-run--trace} we show $\trace{\run}=\proj M{\toloct{\schof{\ded}}}$.
\end{itemize}
Before we start, the next subsection gives definitions and notations.
The numbered claims we will refer to can be found in Section~\ref{sub:model-to-run--claims}.

\subsection{Definitions and Notations}

\label{sub:model-to-run-defs-notations}

\newcommand{\up}[1]{#1^{\blacktriangle}}

\newcommand{\duc}[2]{#1^{\mathrm{duc},#2}}

\newcommand{\ind}[1]{#1^{\mathrm{ind}}}

\newcommand{\deliv}[1]{#1^{\mathrm{deliv}}}

Using notations of Section~\ref{sub:stable-model-semantics},
let $\grded$ be the ground program $\grp MCI$ where $C=\pure{\ded}$
and $I=\decl H$. By definition of $M$ as a stable model, we have
$M=\grded(I)$.

Let $\rl\in\pure{\ded}$ be a rule having its head atom over $\toloct{\schof{\ded}}$.
From the construction of $\pure{\ded}$, we know that $\rl$ belongs
to exactly one of the following three cases:
\begin{itemize}
\item $\rl$ is of the form (\ref{eq:pure-duc}), i.e., \emph{deductive},
recognizable as a rule in which only atoms over $\toloct{\schof{\ded}}$
are used, and in which the location and timestamp variable in the
head are the same as in the body;
\item $\rl$ is of the form (\ref{eq:pure-ind}), i.e., \emph{inductive},
recognizable as a rule with a head atom over $\toloct{\schof{\ded}}$
and a $\timesucc$-atom in the body;
\item $\rl$ is of the form (\ref{eq:deliv}), i.e., a \emph{delivery},
recognizable as a rule with a head atom over $\toloct{\schof{\ded}}$
and a $\chosen_{R}$-fact in the body (with $R$ the head-predicate).
\end{itemize}
The same classification of deductive, inductive and delivery rules
can also be applied to the (positive) ground rules in $\grded$ that
have a ground head atom over $\toloct{\schof{\ded}}$. 

Recall from the general remarks at the beginning of the appendix that
we are working with a fixed (but arbitrary) syntactic stratification
for the deductive rules. Stratum numbers start at $1$. If $\rl\in\pure{\ded}$
is deductive, we can uniquely identify its stratum number as the stratum
number of the original deductive rule in $\ded$ on which $\rl$ is
based. Similarly, for deductive ground rules, we can also uniquely
identify the stratum number as the stratum number of a corresponding
non-ground rule in $\pure{\ded}$.%
\footnote{We say \emph{a} rather than \emph{the} corresponding rule because
there could be more than one. Indeed, multiple original deductive
rules in $\pure{\ded}$ could be mapped to the same positive ground
rule after applying a valuation and removing their negative ground
body atoms. But in any case, these non-ground rules will have the
same head predicate. Hence, they have the same stratum.%
}

We call a ground rule $\grl\in\grded$ \emph{active} if $\bpos{\grl}\subseteq M$,
which implies that $\head{\grl}\in M$ because $M$ is stable. Now
we define the following subsets of $M$:
\begin{itemize}
\item $\duc Mk$: the head facts of all active deductive rules in $\grded$
with stratum number less than or equal to $k$;
\item $\ind M$: the head facts of all active inductive rules in $\grded$;
\item $\deliv M$: the head facts of all active delivery rules in $\grded$.
\end{itemize}
This allows us to classify the facts in $\proj M{\toloct{\schof{\ded}}}$
as being derived in a deductive manner, an inductive manner or being
message deliveries. We also define:
\[
\up M=\proj M{\toloct{\edb{\ded}}}\cup\ind M\cup\deliv M.
\]
For $(x,s)\in\nwnat$, we write $\projlt Ixs$ to abbreviate $\projlt{(\proj I{\toloct{\schof{\ded}}})}xs$.
So intuitively, when we select the facts with location specifier $x$
and timestamp $s$, we are only interested in facts that provide these
two components, which are the facts over $\toloct{\schof{\ded}}$.

Intuitively, for $i\in\Nat$, the set $\projlt{(\up M)}{x_{i}}{s_{i}}$
is the input for the deductive rules during local step $s_{i}$ of
node $x_{i}$, consisting of $\romI$ the \emph{edb}-facts; $\romII$
the facts derived by inductive rules during a previous step (if any)
of $x_{i}$; and, $\romIII$ the delivered messages. The deductive
rules then complete this information by deriving some new facts, that
are visible within step $s_{i}$ of $x_{i}$.

For a transition number $i$ of $\run$, $\romI$ we denote the source-configuration
of transition $i$ as $\cnf_{i}=(\cnfs_{i},\cnfb_{i})$; $\romII$
we denote the set of (tagged) messages delivered in transition $i$
as $m_{i}$; and, $\romIII$ we denote $D_{i}=\deduc{\ded}(\cnfs_{i}(x_{i})\cup\untag{m_{i}})$.
For a number $k\in\Nat$, we write $\strat{D_{i}}k$ to denote the
set of facts obtained by adding to $\cnfs_{i}(x_{i})\cup\untag{m_{i}}$
all facts derived in stratum $1$ up to stratum $k$ during the computation
of $D_{i}$. To mirror this notation, we write $\strat Mk$ to denote
the set $\up M\cup\duc Mk$. For uniformity in the proofs, we will
consider the case $k=0$, which is an invalid stratum number, and
this gives $\strat{D_{i}}0=\cnfs_{i}(x_{i})\cup\untag{m_{i}}$ and
$\strat M0=\up M$.

\subsection{Valid Start}

\label{sub:model-to-run--valid-start}

We show that $\cnf_{0}=\cnfstart{\ded}H$. Denote $\cnf_{0}=(\cnfs_{0},\cnfb_{0})$.
Let $x\in\nw$. First we show $\cnfs_{0}(x)=H(x)$. By definition,
\[
\cnfs_{0}(x)=\droplt{\left(\shprojlt{\proj M{\toloct{\edb{\ded}}}}xs\cup\projlt{\Mind}xs\right)}
\]
with $s=\locM{0,x}$. Note, $s=0$ because no elements of $\nwnat$
with first component $x$ have an ordinal strictly less than $0$
in the total order $\totM$. Now, there can be no ground inductive
rules in $\grded$ that derive facts with head timestamp $0$ because
it follows from the construction of $\decl H$ that the second component
of a $\timesucc$-fact is always strictly larger than $0$. Therefore
$\projlt{\Mind}xs=\emptyset$, and thus $\cnfs_{0}(x)=\droplt{\left(\shprojlt{\proj M{\toloct{\edb{\ded}}}}xs\right)}$.
Then by Claim~\ref{claim:edb-in-M} we have $\cnfs_{0}(x)=\droplt{(\addlt{H(x)}xs)}=H(x)$,
as desired.

Now we show $\cnfb_{0}(x)=\emptyset$. By definition, $\cnfb_{0}(x)$
is
\[
\begin{array}{ll}
\{\pair{\globM{y,t}}{\, R(\bar{a})}\mid
    & \exists u:\,\chosen_{R}(y,t,x,u,\bar{a})\in M,\\&\globM{y,t}<0\leq\globM{x,u}\}.
\end{array}
\]
By definition of function $\globM{\cdot}$, all facts of the form
$\chosen_{R}(y,t,x,u,\bar{a})\in M$ satisfy $\globM{y,t}\geq0$.
Hence, $\cnfb_{0}(x)=\emptyset$.

We conclude that $\cnf_{0}=\cnfstart{\ded}H$.

\subsection{Valid Transition}

\label{sub:model-to-run--valid-transition}

Let $i\in\Nat$. We show that $(\cnf_{i},x_{i},m_{i},i,\cnf_{i+1})$
is a valid transition. Denote $\cnf_{i}=(\cnfs_{i},\cnfb_{i})$ and
$\cnf_{i+1}=(\cnfs_{i+1},\cnfb_{i+1})$.

We start by showing $m_{i}\subseteq\cnfb_{i}(x_{i})$. Let $\pair j{\fc}\in m_{i}$.
By definition of $m_{i}$, there is a fact of the form $\chosen_{R}(y,t,z,u,\bar{a})\in M$
with $\globM{z,u}=i$ such that $j=\globM{y,t}$ and $\fc=R(\bar{a})$.
Note, $\globM{z,u}=i$ implies $z=x_{i}$ and $u=s_{i}$. Now, because
rules in $\pure{\ded}$ of the form (\ref{eq:before-send}) are always
positive, the following ground rule is in $\grded$, which is of the
form (\ref{eq:before-send}): 
\[
\before(y,t,x_{i},s_{i})\gets\chosen_{R}(y,t,x_{i},s_{i},\bar{a}).
\]
Since its body is in $M$, this rule derives $\before(y,t,x_{i},s_{i})\in M$.
Hence $(y,t)\cauM(x_{i},s_{i})$ by definition of $\cauM$. Moreover,
$\totM$ respects $\cauM$, and thus $(y,t)\totM(x_{i},s_{i})$, which
implies $\globM{y,t}<\globM{x_{i},s_{i}}$. And since $\globM{x_{i},s_{i}}=i$,
we overall have 
\[
\globM{y,t}<i\leq\globM{x_{i},s_{i}}.
\]
Therefore $\pair j{\fc}\in\cnfb_{i}(x_{i})$. 

Now, because $m_{i}\subseteq\cnfb_{i}(x_{i})$, and because transitions
are deterministic once the active node and delivered messages are
fixed, we can consider the unique result configuration $\cnf=(\cnfs,\cnfb)$
such that $(\cnf_{i},x_{i},m_{i},i,\cnf)$ is a valid transition.
We are left to show $\cnf_{i+1}=\cnf$. We divide the work in two
parts: for each $x\in\nw$, we show that $\romI$ $\cnfs_{i+1}(x)=\cnfs(x)$,
and $\romII$ $\cnfb_{i+1}(x)=\cnfb(x)$.

\subsubsection{State}

Let $x\in\nw$. We show $\cnfs_{i+1}(x)=\cnfs(x)$. Denote $s=\locM{i+1,x}$.
By definition, 
\[
\cnfs_{i+1}(x)=\droplt{\left(\shprojlt{\proj M{\toloct{\edb{\ded}}}}xs\cup\projlt{\Mind}xs\right)}.
\]

\paragraph{Case $x\protect\neq x_{i}$. }

By definition, $\cnfs(x)=\cnfs_{i}(x)$. Hence, it suffices to show
$\cnfs_{i+1}(x)=\cnfs_{i}(x)$. Since $x\neq x_{i}$, the number of
pairs from $\nwnat$ containing node $x$ that come strictly before
ordinal $i+1$ is the same as the number of pairs containing node
$x$ that come strictly before ordinal $i$. Formally: $s=\locM{i+1,x}=\locM{i,x}$.
Thus the right-hand side in the previous equation equals $\cnfs_{i}(x)$,
and the result is obtained.

\paragraph{Case $x=x_{i}$.}

By definition, $\cnfs(x)=H(x)\cup\induc{\ded}\mstep{D_{i}}$. Referring
to the definition of $\cnfs_{i+1}(x)$ from above, by Claim~\ref{claim:edb-in-M}
we have 
\[
\shprojlt{\proj M{\toloct{\edb{\ded}}}}xs=\addlt{H(x)}xs.
\]
If we can also show $\projlt{\Mind}xs=\addlt{\induc{\ded}\mstep{D_{i}}}xs$,
then we overall have, as desired: 
\begin{eqnarray*}
\cnfs_{i+1}(x) & = & \droplt{\left(\shprojlt{\proj M{\toloct{\edb{\ded}}}}xs\cup\projlt{\Mind}xs\right)}\\
 & = & H(x)\cup\induc{\ded}\mstep{D_{i}}\\
 & = & \cnfs(x).
\end{eqnarray*}
Since $x=x_{i}$, we have $s=\locM{i+1,x_{i}}=\locM{i,x_{i}}+1$,
and using that $\locM{i,x_{i}}=s_{i}$ (Claim~\ref{claim:localM}),
we have $s=s_{i}+1$. Now, Claim~\ref{claim:ind-M-in-induc} and
Claim~\ref{claim:ind-induc-in-M} together show $\projlt{\Mind}{x_{i}}{s_{i}+1}=\addlt{\induc{\ded}\mstep{D_{i}}}{x_{i}}{s_{i}+1}$.

\subsubsection{Buffer}

Let $x\in\nw$. We show $\cnfb_{i+1}(x)=\cnfb(x)$. Denote
\[
\sendto ix=\{\pair i{R(\bar{a})}\mid R(x,\bar{a})\in\async{\ded}\mstep{D_{i}}\}.
\]
Like in the operational semantics, $\sendto ix$ denotes the (tagged)
messages that are sent to $x$ during transition $i$.

\paragraph*{Case $x\protect\neq x_{i}$. }

By definition, $\cnfb(x)=\cnfb_{i}(x)\cup\sendto ix$. We start by
showing $\cnfb(x)\subseteq\cnfb_{i+1}(x)$. Let $\pair j{\fc}\in\cnfb(x)$.
Denote $\fc=R(\bar{a})$. 
\begin{itemize}
\item Suppose $\pair j{\fc}\in\cnfb_{i}(x)$. By definition of $\cnfb_{i}(x)$,
there are values $y\in\nw$, $t\in\Nat$ and $u\in\Nat$ such that
$\chosen_{R}(y,t,x,u,\bar{a})\in M$ and $j=\globM{y,t}<i\leq\globM{x,u}$.
Now, since $x\neq x_{i}$, we more specifically have $i<\globM{x,u}$
and thus $i+1\leq\globM{x,u}$. Therefore $\pair j{\fc}\in\cnfb_{i+1}(x)$,
as desired.
\item Suppose $\pair j{\fc}\in\sendto ix$. By definition of $\sendto ix$,
this implies $j=i$ and $R(x,\bar{a})\in\async{\ded}\mstep{D_{i}}$.
Then $\pair j{\fc}=\pair i{R(\bar{a})}\in\cnfb_{i+1}(x)$ by Claim~\ref{claim:async-in-bufM},
as desired.
\end{itemize}
Secondly, we show $\cnfb_{i+1}(x)\subseteq\cnfb(x)$. Let $\pair j{\fc}\in\cnfb_{i+1}(x)$.
Denote $\fc=R(\bar{a})$. By definition of $\cnfb_{i+1}(x)$, there
are values $y\in\nw$, $t\in\nw$ and $u\in\nw$ such that $\chosen_{R}(y,t,x,u,\bar{a})\in M$
and $j=\globM{y,t}<i+1\leq\globM{x,u}$. So $j\leq i$. We have the
following cases:
\begin{itemize}
\item Suppose $j<i$. Thus $\globM{y,t}<i$. This immediately gives $\pair j{\fc}\in\cnfb_{i}(x)\subseteq\cnfb(x)$,
as desired.
\item Suppose $j=i$. Then $R(x,\bar{a})\in\async{\ded}\mstep{D_{i}}$ by
Claim~\ref{claim:bufM-in-async}. This implies that $\pair j{\fc}=\pair i{R(\bar{a})}\in\sendto ix\subseteq\cnfb(x)$,
as desired.
\end{itemize}

\paragraph*{Case $x=x_{i}$.}

By definition, $\cnfb(x)=(\cnfb_{i}(x)\setminus m_{i})\cup\sendto ix$.
Some parts of the reasoning are similar to the case $x\neq x_{i}$.
We refer to shared subclaims where possible. 

We start by showing $\cnfb(x)\subseteq\cnfb_{i+1}(x)$. Let $\pair j{\fc}\in\cnfb(x)$.
Denote $\fc=R(\bar{a})$. We have the following cases:
\begin{itemize}
\item Suppose $\pair j{\fc}\in\cnfb_{i}(x)\setminus m_{i}$. Thus $\pair j{\fc}\in\cnfb_{i}(x)$
and $\pair j{\fc}\notin m_{i}$. Here, $\pair j{\fc}\in\cnfb_{i}(x)$
implies there are values $y\in\nw$, $t\in\Nat$ and $u\in\Nat$ such
that $\chosen_{R}(y,t,x,u,\bar{a})\in M$ and $j=\globM{y,t}<i\leq\globM{x,u}$.
Also, $\pair j{\fc}\notin m_{i}$ implies $\globM{x,u}\neq i$. Hence,
$i+1\leq\globM{x,u}$ and we obtain $\pair j{\fc}\in\cnfb_{i+1}(x)$,
as desired.
\item Suppose $\pair j{\fc}\in\sendto ix$. By definition of $\sendto ix$,
we have $j=i$ and $R(x,\bar{a})\in\async{\ded}\mstep{D_{i}}$. By
Claim~\ref{claim:async-in-bufM} we then have $\pair i{R(\bar{a})}\in\cnfb_{i+1}(x)$,
as desired.
\end{itemize}
Secondly, we show $\cnfb_{i+1}(x)\subseteq\cnfb(x)$. Let $\pair j{\fc}\in\cnfb_{i+1}(x)$.
Denote $\fc=R(\bar{a})$. By definition of $\cnfb_{i+1}(x)$, there
are values $y\in\nw$, $t\in\Nat$ and $u\in\Nat$ such that $\chosen_{R}(y,t,x,u,\bar{a})\in M$
and $j=\globM{y,t}<i+1\leq\globM{x,u}$. Now we look at the cases
for $j$:
\begin{itemize}
\item Suppose $j<i$. This gives us $\globM{y,t}<i\leq\globM{x,u}$, which
implies $\pair j{\fc}\in\cnfb_{i}(x)$. Moreover, $i+1\leq\globM{x,u}$
gives $\globM{x,u}\neq i$. Hence, $\pair j{\fc}\notin m_{i}$. Taken
together, we now have $\pair j{\fc}\in\cnfb_{i}(x)\setminus m_{i}\subseteq\cnfb(x)$.
\item Suppose $j=i$. Then $\pair i{R(\bar{a})}\in\cnfb_{i+1}(x)$, and
by Claim~\ref{claim:bufM-in-async} we obtain that $R(x,\bar{a})\in\async{\ded}\mstep{D_{i}}$.
Therefore $\pair j{\fc}=\pair i{R(\bar{a})}\in\sendto ix\subseteq\cnfb(x)$,
as desired.
\end{itemize}

\subsection{Trace}

\label{sub:model-to-run--trace}

In this section we show $\trace{\run}=\proj M{\toloct{\schof{\ded}}}$.
Recall from Section~\ref{sub:run-timestamps-and-trace} that
\[
\trace{\run}=\bigcup_{i\in\Nat}(\addlt{D_{i})}{x_{i}}{\,\locR i}.
\]
For each $i\in\Nat$, $\locR i$ is the number of transitions in $\run$
before $i$ in which $x_{i}$ is also the active node. From the construction
of $\run$ we know $\locR i=\locM{i,x_{i}}$; indeed, $\locM{i,x_{i}}$
counts the number of pairs in $\nwnat$ with node $x_{i}$ that have
an ordinal strictly smaller than $i$, which is precisely the number
of transitions in $\run$ with active node $x_{i}$ that come before
$i$. Moreover, by Claim~\ref{claim:localM} we have $\locM{i,x_{i}}=s_{i}$.
Hence, 
\[
\trace{\run}=\bigcup_{i\in\Nat}\addlt{(D_{i})}{x_{i}}{s_{i}}.
\]
Thus, by Claim~\ref{claim:M-and-duc}:
\[
\trace{\run}=\bigcup_{i\in\Nat}\projlt M{x_{i}}{s_{i}}.
\]
For the next step, let us denote $A=\{(x_{i},s_{i})\mid i\in\Nat\}$.
We show $A=\nwnat$. First, we have $A\subseteq\nwnat$ because $x_{i}\in\nw$
and $s_{i}\in\Nat$ for each $i\in\Nat$. Now, let $(x,s)\in\nwnat$.
Denote $i=\globM{x,s}$. By definition, $x_{i}=x$ and $s_{i}=s$.
Hence $(x,s)=(x_{i},s_{i})\in A$. Now we may write:
\begin{eqnarray*}
\trace{\run} & = & \bigcup_{(x,s)\in A}\projlt Mxs\\
 & = & \bigcup_{(x,s)\in\nwnat}\projlt Mxs.
\end{eqnarray*}
Finally, because $M$ is well-formed (see Section~\ref{sub:model-to-run}),
for each $R(v,w,\bar{a})\in\proj M{\toloct{\schof{\ded}}}$ we have
$v\in\nw$ and $w\in\Nat$. We obtain, as desired:
\[
\trace{\run}=\proj M{\toloct{\schof{\ded}}}.
\]

\subsection{Subclaims}

\label{sub:model-to-run--claims}

\begin{claim}\label{claim:edb-in-M}Let $x\in\nw$ and $s\in\Nat$.
We have $\shprojlt{\proj M{\toloct{\edb{\ded}}}}xs=\addlt{H(x)}xs$.\end{claim}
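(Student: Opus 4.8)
The plan is to exploit the simple structural fact that no rule of $\pure{\ded}$ can ever \emph{produce} a fact over $\toloct{\edb{\ded}}$, so the only such facts in $M$ are exactly those already supplied by the input $\decl H$; the claim then follows by reading off $\decl H$ explicitly.

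First I would verify that every rule of $\pure{\ded}$ whose head atom lies over $\toloct{\schof{\ded}}$ is of one of the forms (\ref{eq:pure-duc}), (\ref{eq:pure-ind}), or (\ref{eq:deliv}), and that in each of these the head predicate is the head predicate of a deductive, inductive, or asynchronous rule of $\ded$ respectively (the transformation only prepends a location and timestamp, leaving the relation name intact). By the definition of $\idb{\ded}$ as the set of all relations occurring in rule-heads of $\ded$, every such head predicate lies in $\idb{\ded}$, hence never in $\edb{\ded}=\schof{\ded}\setminus\idb{\ded}$. All remaining rules of $\pure{\ded}$ — namely the auxiliary rules deriving $\before$, $\cand_{R}$, $\chosen_{R}$, $\other_{R}$, $\hassender$, $\issmaller$, $\hasmax$, and $\rcvinf$ — have head predicates that do not belong to $\toloct{\schof{\ded}}$ at all. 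Consequently no ground rule in $\grded$ has a head atom over $\toloct{\edb{\ded}}$.

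Next I would use $M=\grded(\decl H)$, the minimal fixpoint of the positive ground program $\grded$ on input $\decl H$. Writing this fixpoint as an increasing union $\bigcup_{l}N_{l}$ with $N_{0}=\decl H$ and each $N_{l}$ obtained by one application of the immediate consequence operator of $\grded$, the previous paragraph shows that every fact newly added at a step $l\ge 1$ has a predicate outside $\toloct{\edb{\ded}}$. Hence $\proj M{\toloct{\edb{\ded}}}=\proj{(\decl H)}{\toloct{\edb{\ded}}}$. Finally I would read $\proj{(\decl H)}{\toloct{\edb{\ded}}}$ straight from the definition of $\decl H$: the $\relall$-facts and the facts of $\Itime$ are over the auxiliary schema $\{\rar{\relall}1\}\cup\schtime$, which is disjoint from $\toloct{\edb{\ded}}$, so they contribute nothing, giving $\proj{(\decl H)}{\toloct{\edb{\ded}}}=\{R(x,s,\bar{a})\mid x\in\nw,\,s\in\Nat,\,R(\bar{a})\in H(x)\}$. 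Restricting to the facts with location specifier $x$ and timestamp $s$ leaves exactly $\{R(x,s,\bar{a})\mid R(\bar{a})\in H(x)\}=\addlt{H(x)}xs$, which is the asserted equality.

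The only genuinely delicate point — and the one I would spell out with care — is the first step: checking \emph{exhaustively} that the transformation introduces no rule whose head ranges over $\toloct{\edb{\ded}}$, since this is what pins the $\edb$-fragment of $M$ down to the input. Everything afterward is a routine consequence of the positivity of the reduct $\grded$ together with the explicit shape of $\decl H$.
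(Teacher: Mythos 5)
Your proposal is correct and follows essentially the same route as the paper's proof: both arguments rest on the observation that no rule of $\pure{\ded}$ has a head over $\toloct{\edb{\ded}}$ (so $\proj M{\toloct{\edb{\ded}}}=\proj{\decl H}{\toloct{\edb{\ded}}}$, using $\decl H\subseteq M$ for one inclusion and non-derivability for the other), followed by reading off $\decl H$ explicitly. The paper simply states the non-derivability in one line, whereas you spell out the exhaustive case analysis of rule forms and the fixpoint induction; that extra detail is sound but not a different method.
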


\begin{proof}First, by construction of $\decl H$ we have $\shprojlt{\proj{\decl H}{\toloct{\edb{\ded}}}}xs=\addlt{H(x)}xs$.
Because $\decl H\subseteq M$, and because facts over $\toloct{\edb{\ded}}$
can not be derived by rules in $\pure{\ded}$, we have $\proj M{\toloct{\edb{\ded}}}=\proj{\decl H}{\toloct{\edb{\ded}}}$.
Hence, 
\[
\shprojlt{\proj M{\toloct{\edb{\ded}}}}xs=\shprojlt{\proj{\decl H}{\toloct{\edb{\ded}}}}xs=\addlt{H(x)}xs.
\]
\end{proof}

\tline

\begin{claim}\label{claim:localM}Let $i\in\Nat$. We have $s_{i}=\locM{i,x_{i}}$.\end{claim}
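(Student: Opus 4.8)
The plan is to reduce the claim to a monotonicity statement about the function $\globM{x_i,\cdot}$ and then count directly. By definition, $\locM{i,x_i}$ is the cardinality of the set $\{s\in\Nat\mid\globM{x_i,s}<i\}$, and since $i=\globM{x_i,s_i}$, this is the number of timestamps $s$ of $x_i$ whose ordinal under $\totM$ strictly precedes that of $s_i$. So it suffices to show that $\globM{x_i,\cdot}$ is strictly increasing in its timestamp argument: once that is known, $\globM{x_i,s}<\globM{x_i,s_i}$ holds exactly when $s<s_i$, so the counted set is $\{0,1,\ldots,s_i-1\}$, whose size is $s_i$.

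To establish this monotonicity, I would first observe that the local edges of $\cauM$ are always present. Rule~(\ref{eq:before-step}) is positive, and for any $x\in\nw$ and $s\in\Nat$ its body facts $\relall(x)$ and $\timesucc(s,s+1)$ both lie in $\decl H\subseteq M$; hence the corresponding ground rule is active and derives $\before(x,s,x,s+1)\in M$, i.e.\ $(x,s)\cauM(x,s+1)$. Since $\cauM$ is transitive (Claim~\ref{claim:M-partial-order}), iterating this gives $(x,s)\cauM(x,s')$ whenever $s<s'$. Because $\totM$ is a strict total order that extends $\cauM$, this yields $(x,s)\totM(x,s')$ for all $s<s'$, and therefore $\globM{x,s}<\globM{x,s'}$.

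Applying this with $x=x_i$ completes the argument. The forward implication, that $s<s_i$ gives $\globM{x_i,s}<i$, is immediate from monotonicity; the converse holds because $s=s_i$ would force equality of ordinals and $s>s_i$ would force $\globM{x_i,s}>i$, both contradicting $\globM{x_i,s}<i$. Hence the set counted by $\locM{i,x_i}$ is precisely $\{0,1,\ldots,s_i-1\}$, of size $s_i$, as desired.

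I do not anticipate a serious obstacle here; the only point needing care is checking that the local-edge ground rules are genuinely active, which requires $x_i\in\nw$ so that $\relall(x_i)\in\decl H$ --- and this is guaranteed since $(x_i,s_i)\in\nwnat$ by construction. The conceptual heart of the argument is simply that, restricted to a single node, the global ordinal $\globM{x,\cdot}$ is forced to agree with the natural order on timestamps, so it enumerates the steps of $x$ in their correct order.
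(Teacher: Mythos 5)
Your proof is correct and follows essentially the same route as the paper's: reduce the claim to strict monotonicity of $\globM{x_i,\cdot}$ in the timestamp argument, establish the local edges $(x_i,s)\cauM(x_i,s+1)$ via active ground rules of the form (\ref{eq:before-step}) whose bodies lie in $\decl H$, and then lift through transitivity of $\cauM$ and the fact that $\totM$ extends $\cauM$. Your explicit treatment of the converse direction (that $\globM{x_i,s}<i$ forces $s<s_i$) is a minor point the paper leaves implicit, but the argument is the same.
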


\begin{proof}

Recall that $(x_{i},s_{i})\in\nwnat$ is the unique pair at ordinal
$i$ in $\totM$, i.e., $\globM{x_{i},s_{i}}=i$. Suppose we would
know for all $s\in\Nat$ and $t\in\Nat$ that $s<t$ implies $\globM{x_{i},s}<\globM{x_{i},t}$.
Then $\locM{i,x_{i}}$, which is 
\[
|\{s\in\Nat\mid\globM{x,s}<i\}|,
\]
is precisely 
\[
|\{s\in\Nat\mid s<s_{i}\}|.
\]
The latter is just $s_{i}$.

We are left to show for any $s\in\Nat$ and $t\in\Nat$ that $s<t$
implies $\globM{x_{i},s}<\globM{x_{i},t}$. It is actually sufficient
to show for any $s\in\Nat$ that $(x_{i},s)\cauM(x_{i},s+1)$. Indeed,
this would imply for any $t\in\Nat$ with $s<t$ that 
\[
(x_{i},s)\cauM(x_{i},s+1)\cauM(x_{i},s+2)\cauM\ldots\cauM(x_{i},t).
\]
And since $\cauM$ is a partial order, it is transitive, and thus
$(x_{i},s)\cauM(x_{i},t)$. Next, since $\totM$ respects $\cauM$,
we obtain $(x_{i},s)\totM(x_{i},t)$ and thus $\globM{x_{i},s}<\globM{x_{i},t}$,
as desired. To show $(x_{i},s)\cauM(x_{i},s+1)$, we observe that
the rule (\ref{eq:before-step}) in $\pure{\ded}$ is positive. Hence,
for any $s\in\Nat$, the following ground rule is always in $\grded$,
and it derives $\before(x_{i},s,x_{i},s+1)\in M$ because $\relall(x_{i})\in\decl H$
and $\timesucc(s,s+1)\in\decl H$:
\[
\before(x_{i},s,x_{i},s+1)\gets\relall(x_{i}),\,\timesucc(s,s+1).
\]
Thus $(x_{i},s)\cauM(x_{i},s+1)$ by definition of $\cauM$.\end{proof}

\tline

\begin{claim}\label{claim:ind-M-in-induc}Let $i\in\Nat$. We have
$\projlt{\Mind}{x_{i}}{s_{i}+1}\subseteq\addlt{\induc{\ded}\mstep{D_{i}}}{x_{i}}{s_{i}+1}$.\end{claim}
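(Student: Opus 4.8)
The plan is to peel apart a single active inductive ground rule and reconstruct from it a satisfying valuation for the underlying original inductive rule on $D_i$. So I would take an arbitrary fact of $\projlt{\Mind}{x_i}{s_i+1}$; by definition of $\Mind$ it is the head $\head{\grl}$ of an active inductive ground rule $\grl\in\grded$ whose head has location specifier $x_i$ and timestamp $s_i+1$, say $\head{\grl}=R(x_i,s_i+1,\bar a)$. Using the classification of rules from Section~\ref{sub:model-to-run-defs-notations}, $\grl$ arises from some rule $\rl'\in\pure{\ded}$ of the form~(\ref{eq:pure-ind}) together with a valuation $V'$ satisfying $V'(\bneg{\rl'})\cap M=\emptyset$, by applying $V'$ and deleting the negative ground body atoms. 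Let $\rl\in\induc{\ded}$ be the original inductive rule underlying $\rl'$ (annotation removed), and let $V$ be the restriction of $V'$ to $\vars{\rl}$, so that $V(\head{\rl})=R(\bar a)$. The statement then reduces to showing that $V$ is satisfying for $\rl$ on $D_i$: from this $R(\bar a)\in\induc{\ded}\mstep{D_i}$ follows by the single-step semantics of $\induc{\ded}$, and tagging gives $R(x_i,s_i+1,\bar a)=\addlt{R(\bar a)}{x_i}{s_i+1}\in\addlt{\induc{\ded}\mstep{D_i}}{x_i}{s_i+1}$, as required.

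Before checking satisfaction I would pin down the body timestamp. The positive body of $\grl$ carries a ground atom $\timesucc(r,s_i+1)$ coming from the $\timesucc(\var s,\var t)$-atom of~(\ref{eq:pure-ind}); since $\grl$ is active this atom is in $M$, and because no rule of $\pure{\ded}$ has $\timesucc$ in its head it must already belong to $\decl H$, forcing $r=s_i$. Hence every remaining body atom of $\rl'$, which is over $\toloct{\schof{\ded}}$, is grounded at location $x_i$ and timestamp $s_i$; the positive such atoms form exactly $\addlt{V(\bpos{\rl})}{x_i}{s_i}$ and the negative ones $\addlt{V(\bneg{\rl})}{x_i}{s_i}$. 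Now I would invoke the correspondence between the facts of $M$ at $(x_i,s_i)$ and the deductive output $D_i$, namely $\projlt{M}{x_i}{s_i}=\addlt{D_i}{x_i}{s_i}$ (Claim~\ref{claim:M-and-duc}). For the positive body, activeness of $\grl$ gives $\addlt{V(\bpos{\rl})}{x_i}{s_i}\subseteq\bpos{\grl}\subseteq M$, so these facts lie in $\projlt{M}{x_i}{s_i}=\addlt{D_i}{x_i}{s_i}$, and stripping the common location and timestamp yields $V(\bpos{\rl})\subseteq D_i$. For the negative body, $V'(\bneg{\rl'})\cap M=\emptyset$ gives $\addlt{V(\bneg{\rl})}{x_i}{s_i}\cap M=\emptyset$; since $\addlt{D_i}{x_i}{s_i}=\projlt{M}{x_i}{s_i}\subseteq M$ and all these facts carry location $x_i$ and timestamp $s_i$, the same correspondence forces $V(\bneg{\rl})\cap D_i=\emptyset$. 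Thus $V$ is satisfying for $\rl$ on $D_i$, and the argument closes.

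The delicate point I expect is the appeal to the $M$-versus-$D_i$ correspondence at the very step $(x_i,s_i)$, which is itself entangled with the present inclusion. Indeed, Claim~\ref{claim:M-and-duc} builds $D_i$ from an input whose inductively stored part is, by the definition of the configurations, precisely $\projlt{\Mind}{x_i}{s_i}$ — the object controlled by this very inclusion and its converse (Claim~\ref{claim:ind-induc-in-M}) applied at the preceding transition of $x_i$, which has the strictly smaller ordinal $\globM{x_i,s_i-1}<i$ because $(x_i,s_i-1)\cauM(x_i,s_i)$ and $\totM$ extends $\cauM$. I therefore expect the clean route to be a single induction on the $\totM$-ordinal $i$: establish the correspondence at $i$ first (its inductive-input part drawing on the inclusions at the earlier ordinal), and only then deduce the present inclusion at $i$. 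The other point requiring care is negation: since the transformation~(\ref{eq:pure-ind}) keeps negative literals negative and at the same location and timestamp, I must use the correspondence in both its membership and its non-membership form — the exact analogues of Claims~\ref{claim:in-M-in-duc} and~\ref{claim:not-in-M-not-in-duc} from the run-to-model direction.
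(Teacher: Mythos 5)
Your first two paragraphs are, up to packaging, the paper's own proof: you trace the fact back to an active inductive ground rule, recover a rule of the form~(\ref{eq:pure-ind}) with its valuation, pin the body timestamp to $s_{i}$ via the $\timesucc$-atom (the paper asserts this more tersely; your justification that no rule of $\pure{\ded}$ derives $\timesucc$-facts is the right one), and then verify that the induced valuation is satisfying for the corresponding rule of $\induc{\ded}$ on $D_{i}$, handling the positive body by membership and the negative body by non-membership in the correspondence $\projlt{M}{x_{i}}{s_{i}}=\addlt{(D_{i})}{x_{i}}{s_{i}}$. The paper routes those two checks through Claims~\ref{claim:in-M-in-duc-ver2} and~\ref{claim:not-in-M-not-in-duc-ver2}, which are precisely the wrappers of Claim~\ref{claim:M-and-duc} that you invoke directly.

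The worry in your closing paragraph, however, is unfounded, and the induction on the $\totM$-ordinal that you propose is unnecessary. In the model-to-run direction the run is not constructed operationally, so $D_{i}$ is not built from the output of the preceding transition of $x_{i}$. Rather, the configurations are defined outright from $M$: the state $\cnfs_{i}(x_{i})$ is, by definition, $\droplt{(\shprojlt{\proj M{\toloct{\edb{\ded}}}}{x_{i}}{s_{i}}\cup\projlt{\Mind}{x_{i}}{s_{i}})}$, the delivered messages $m_{i}$ are read off from the $\chosen_{R}$-facts of $M$, and $D_{i}$ is $\deduc{\ded}$ applied to $\cnfs_{i}(x_{i})\cup\untag{m_{i}}$. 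Consequently, the base case of Claim~\ref{claim:M-and-duc} is nothing more than this definition (Claim~\ref{claim:state-is-edb-ind}) combined with Claim~\ref{claim:deliv-is-untag}; at no point does Claim~\ref{claim:M-and-duc} appeal to the present claim at any ordinal, so the dependency is strictly one-way: Claim~\ref{claim:M-and-duc} feeds the present claim, never conversely. The present claim and its converse, Claim~\ref{claim:ind-induc-in-M}, enter only afterwards, in Section~\ref{sub:model-to-run--valid-transition}, where one verifies that the defined state at step $s_{i}+1$ agrees with what the operational semantics prescribes, namely $H(x_{i})\cup\induc{\ded}\mstep{D_{i}}$. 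Your first two paragraphs therefore already constitute a complete proof; no restructuring is needed.
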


\begin{proof}

Let $\fc\in\projlt{\Mind}{x_{i}}{s_{i}+1}$. We show $\fc\in\addlt{\induc{\ded}\mstep{D_{i}}}{x_{i}}{s_{i}+1}$. 

By definition of $\Mind$, there is an active \emph{inductive} ground
rule $\grl\in\grded$ with $\head{\grl}=\fc$. Because $\grl\in\grded$,
there is a rule $\rl\in\pure{\ded}$ and a valuation $V$ so that
$\grl$ can be obtained from $\rl$ by applying $V$ and by subsequently
removing all negative (ground) body literals, and so that $V(\bneg{\rl})\cap M=\emptyset$.
The rule $\rl$ must be of the form (\ref{eq:pure-ind}), which implies
that $V$ must assign $x_{i}$ and $s_{i}$ to the body location and
timestamp variable respectively, and that it must assign $x_{i}$
and $s_{i}+1$ to the head location and timestamp variable respectively.

Let $\rl'\in\ded$ be the original inductive rule on which $\rl$
is based. Let $\rl''\in\induc{\ded}$ be the rule corresponding to
$\rl'$. It follows from the construction of $\rl$ out of $\rl'$
and $\rl''$ out of $\rl'$ that valuation $V$ can also be applied
to rule $\rl''$. Indeed, rule $\rl$ just has more variables for
the location and timestamps. We show that $V$ is satisfying for $\rl''$
with respect to $D_{i}$, so that $\rl''$ and $V$ together derive
$V(\head{\rl''})=\droplt{\fc}\in\induc{\ded}\mstep{D_{i}}$, which
gives $\fc\in\addlt{\induc{\ded}\mstep{D_{i}}}{x_{i}}{s_{i}+1}$, as
desired.

We must concretely show $V(\bpos{\rl''})\subseteq D_{i}$ and $V(\bneg{\rl''})\cap D_{i}=\emptyset$.
We start by showing $V(\bpos{\rl''})\subseteq D_{i}$. From the relationship
between $\grl$, $\rl$ and $\rl''$, we know that 
\[
\proj{\bpos{\grl}}{\toloct{\schof{\ded}}}=\proj{V(\bpos{\rl})}{\toloct{\schof{\ded}}}=\addlt{V(\bpos{\rl''})}{x_{i}}{s_{i}}.
\]
Since $\grl$ is active with respect to $M$, we have $\bpos{\grl}\subseteq M$,
and thus $\addlt{V(\bpos{\rl''})}{x_{i}}{s_{i}}\subseteq M$. Then
by Claim~\ref{claim:in-M-in-duc-ver2} we have $V(\bpos{\rl''})\subseteq D_{i}$,
as desired.

Now we show that $V(\bneg{\rl''})\cap D_{i}=\emptyset$. By the relationship
of $\rl$ and $\rl''$, we have $\addlt{V(\bneg{\rl''})}{x_{i}}{s_{i}}=V(\bneg{\rl})$.
By choice of $\rl$ and $V$, we have $V(\bneg{\rl})\cap M=\emptyset$.
Hence, $\addlt{V(\bneg{\rl''})}{x_{i}}{s_{i}}\cap M=\emptyset$. Finally,
by Claim~\ref{claim:not-in-M-not-in-duc-ver2}, we have $V(\bneg{\rl''})\cap D_{i}=\emptyset$,
as desired.\end{proof}

\tline

\begin{claim}\label{claim:in-M-in-duc-ver2}Let $i\in\Nat$. Let
$I$ be a set of facts over $\toloct{\schof{\ded}}$ that all have
location specifier $x_{i}$ and timestamp $s_{i}$. If $I\subseteq M$
then $\droplt I\subseteq D_{i}$, with $D_{i}$ as defined in Section~\ref{sub:model-to-run-defs-notations}.\end{claim}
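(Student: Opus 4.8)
The plan is to reduce the statement to the single inclusion $\droplt{(\projlt M{x_i}{s_i})} \subseteq D_i$: since every fact of $I$ lies over $\toloct{\schof{\ded}}$ with location specifier $x_i$ and timestamp $s_i$, we have $I \subseteq \projlt M{x_i}{s_i}$, so it suffices to bound the projection of the whole model. Recalling from Section~\ref{sub:model-to-run-defs-notations} that $\projlt M{x_i}{s_i}$ splits into the \emph{input} facts $\up M=\proj M{\toloct{\edb{\ded}}}\cup\ind M\cup\deliv M$ and the \emph{deductively derived} facts $\duc Mn$ (with $n$ the top stratum), and that $D_i=\deduc{\ded}(\cnfs_i(x_i)\cup\untag{m_i})$ is computed stratum by stratum with $\strat{D_i}0=\cnfs_i(x_i)\cup\untag{m_i}$ and $\strat{D_i}n=D_i$, I would prove, by induction on the stratum number $k=0,1,\dots,n$, that $\droplt{(\projlt{(\strat Mk)}{x_i}{s_i})}\subseteq\strat{D_i}k$. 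Taking $k=n$ then yields the claim, because $\projlt{(\strat Mn)}{x_i}{s_i}=\projlt M{x_i}{s_i}$.

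For the base case $k=0$ I would treat the three kinds of input facts separately. The $\mathit{edb}$-part gives $\droplt{(\shprojlt{\proj M{\toloct{\edb{\ded}}}}{x_i}{s_i})}=H(x_i)\subseteq\cnfs_i(x_i)$ by Claim~\ref{claim:edb-in-M} and the definition of the state. The inductive part gives $\droplt{(\projlt{\ind M}{x_i}{s_i})}\subseteq\cnfs_i(x_i)$ directly from the definition of $\cnfs_i(x_i)$, which contains exactly $\droplt{(\projlt{\Mind}{x_i}{s_i})}$ together with the dropped $\mathit{edb}$-facts. The delivery part gives $\droplt{(\projlt{\deliv M}{x_i}{s_i})}=\untag{m_i}$: an active delivery ground rule of the form (\ref{eq:deliv}) with head at $(x_i,s_i)$ exists precisely when $\chosen_R(y,t,x_i,s_i,\bar a)\in M$ for some $y,t$, and since $\globM{x_i,s_i}=i$ such facts are exactly those contributing $R(\bar a)$ to $m_i$. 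Hence $\droplt{(\projlt{(\up M)}{x_i}{s_i})}\subseteq\cnfs_i(x_i)\cup\untag{m_i}=\strat{D_i}0$.

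For the inductive step ($k\ge1$) the new facts of $\strat Mk$ at $(x_i,s_i)$ are heads of active deductive ground rules of stratum $k$, and I would match each such derivation in $M$ with one in the stratum-$k$ fixpoint of $D_i$ by a nested induction on the fixpoint stages $M=\bigcup_l M_l$. A fact $R(x_i,s_i,\bar a)$ derived by an active $\grl\in\grded$ coming from $\rl'\in\deduc{\ded}$ (stratum $k$) via a valuation $V$ has, by the shape (\ref{eq:pure-duc}), all its positive body atoms over $\toloct{\schof{\ded}}$ at $(x_i,s_i)$; the same-stratum ones land in $\strat{D_i}k$ by the nested hypothesis and the lower-stratum ones by the outer hypothesis, so $V$ positively satisfies $\rl'$ on $\strat{D_i}k$. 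The delicate point, and the \emph{main obstacle}, is the negative body atoms: membership $\grl\in\grded$ forces $V(\bneg{\rl})\cap M=\emptyset$, which I must convert into $V(\bneg{\rl'})\cap D_i=\emptyset$. By stratification these atoms lie over relations of strata strictly below $k$ (or over $\mathit{edb}$), whose $D_i$-content is already frozen in $\strat{D_i}{k-1}$; the passage from absence in $M$ to absence in $D_i$ for these lower strata is exactly the companion statement Claim~\ref{claim:not-in-M-not-in-duc-ver2}. Since that companion is the reverse inclusion and is established by the same stratum induction, the two claims must be carried out simultaneously (one giving $M\!\to\!D_i$ and the other $D_i\!\to\!M$ on each stratum), and making this tandem induction go through is where the real work lies; the positive bookkeeping, by contrast, is routine once the negation is under control.
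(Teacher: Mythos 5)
Your proposal is correct and takes essentially the same route as the paper: the paper obtains this claim as a one-line corollary of Claim~\ref{claim:M-and-duc}, the stratum-by-stratum equality $\projlt M{x_{i}}{s_{i}}=\addlt{(D_{i})}{x_{i}}{s_{i}}$, whose proof is exactly your tandem induction (base case via Claims~\ref{claim:state-is-edb-ind} and \ref{claim:deliv-is-untag} together with Claim~\ref{claim:localM}, inductive step via the two inclusions in Claims~\ref{claim:strat-k-M-in-D} and \ref{claim:strat-k-D-in-M}, with the nested induction over the fixpoint stages of $M$). Your key observation---that the $M\to D_{i}$ inclusion at stratum $k$ needs the reverse inclusion at stratum $k-1$ to control the negative body atoms, so both directions must be carried simultaneously---is precisely why the paper phrases the outer induction as an equality rather than as two independent inclusions.
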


\begin{proof}

We are given $I\subseteq M$. By the assumptions on $I$, we more
specifically have $I\subseteq\projlt M{x_{i}}{s_{i}}$. Then by Claim~\ref{claim:M-and-duc}
we have $I\subseteq\addlt{(D_{i})}{x_{i}}{s_{i}}$. Hence $\droplt I\subseteq D_{i}$,
as desired.\end{proof}

\tline

\begin{claim}\label{claim:not-in-M-not-in-duc-ver2}Let $i\in\Nat$.
Let $I$ be a set of facts over $\toloct{\schof{\ded}}$ that all
have location specifier $x_{i}$ and timestamp $s_{i}$. If $I\cap M=\emptyset$
then $\droplt I\cap D_{i}=\emptyset$, with $D_{i}$ as defined in
Section~\ref{sub:model-to-run-defs-notations}.\end{claim}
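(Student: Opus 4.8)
The plan is to reduce the statement entirely to Claim~\ref{claim:M-and-duc}, which identifies the $(x_i,s_i)$-slice of the stable model $M$ with the deductive output $D_i$ lifted to that location and timestamp, i.e.\ $\projlt M{x_i}{s_i}=\addlt{(D_i)}{x_i}{s_i}$. This is the exact mirror of Claim~\ref{claim:not-in-M-not-in-duc} from the Run-to-Model direction; the only difference is that there the identity $\sliceduc i=\addlt{D_i}xs$ held by definition, whereas here the corresponding identity is supplied by Claim~\ref{claim:M-and-duc}. So the argument is pure bookkeeping of the location and timestamp components, and there is no genuine obstacle once that claim is in hand.

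Concretely, I would first observe that since every fact of $I$ is over $\toloct{\schof{\ded}}$ and carries location specifier $x_i$ and timestamp $s_i$, the hypothesis $I\cap M=\emptyset$ already forces $I\cap\projlt M{x_i}{s_i}=\emptyset$: no fact of $M$ lying outside this slice, and in particular no fact over a relation not in $\toloct{\schof{\ded}}$, can meet $I$. Next I would rewrite $\projlt M{x_i}{s_i}$ as $\addlt{(D_i)}{x_i}{s_i}$ via Claim~\ref{claim:M-and-duc}, turning the disjointness into $I\cap\addlt{(D_i)}{x_i}{s_i}=\emptyset$.

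The final step transfers this disjointness across the $\droplt{\cdot}$ operation. Because all facts in $I$ share the location $x_i$ and timestamp $s_i$, re-prepending these components recovers $I$ exactly, so $I=\addlt{(\droplt I)}{x_i}{s_i}$. Hence $I\cap\addlt{(D_i)}{x_i}{s_i}=\addlt{(\droplt I\cap D_i)}{x_i}{s_i}$, and the left-hand side being empty is equivalent to $\droplt I\cap D_i=\emptyset$, which is the desired conclusion. The only point needing a word of care is exactly this last equivalence: one must note that prepending the common pair $(x_i,s_i)$ is injective on the sets at hand, so that set-theoretic disjointness after stripping corresponds to disjointness before stripping; this is immediate precisely because the location and timestamp are fixed throughout.
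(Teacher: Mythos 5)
Your proof is correct and follows essentially the same route as the paper's: reduce $I\cap M=\emptyset$ to disjointness from the slice $\projlt M{x_i}{s_i}$, rewrite that slice as $\addlt{(D_i)}{x_i}{s_i}$ via Claim~\ref{claim:M-and-duc}, and strip the common location specifier and timestamp. The only difference is presentational, in that you spell out the injectivity of prepending the fixed pair $(x_i,s_i)$, which the paper compresses into ``by the assumptions on $I$.''
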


\begin{proof}

We are given that $I\cap M=\emptyset$. This implies $I\cap\projlt M{x_{i}}{s_{i}}=\emptyset$.
By Claim~\ref{claim:M-and-duc} we have $I\cap\addlt{(D_{i})}{x_{i}}{s_{i}}=\emptyset$.
Hence, by the assumptions on $I$, we have $\droplt I\cap D_{i}=\emptyset$,
as desired.\end{proof}

\tline

\begin{claim}\label{claim:ind-induc-in-M}Let $i\in\Nat$. We have
$\addlt{\induc{\ded}\mstep{D_{i}}}{x_{i}}{s_{i}+1}\subseteq\projlt{\Mind}{x_{i}}{s_{i}+1}$.\end{claim}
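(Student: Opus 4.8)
The plan is to prove the inclusion by reversing the construction used in the proof of Claim~\ref{claim:ind-M-in-induc}: there we started from an active inductive ground rule and produced a satisfying valuation of $\induc{\ded}$ on $D_i$, and here I run the same correspondence backwards. So I would take an arbitrary fact $\fc\in\induc{\ded}\mstep{D_i}$ and exhibit an active inductive ground rule in $\grded$ whose head is $\addlt{\fc}{x_i}{s_i+1}$. Since that head already carries location specifier $x_i$ and timestamp $s_i+1$, membership in $\Mind$ immediately gives membership in $\projlt{\Mind}{x_i}{s_i+1}$, which is what is required.

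Concretely, by definition of $\induc{\ded}\mstep{D_i}$ there are a rule $\rl''\in\induc{\ded}$ and a valuation $V$ that is satisfying for $\rl''$ on $D_i$ (so $V(\bpos{\rl''})\subseteq D_i$ and $V(\bneg{\rl''})\cap D_i=\emptyset$) with $V(\head{\rl''})=\fc$. Let $\rl'\in\ded$ be the original inductive rule behind $\rl''$, and let $\rl\in\pure{\ded}$ be the rule of the form (\ref{eq:pure-ind}) obtained from $\rl'$. I would extend $V$ to a valuation $V'$ that sends the extra location variable to $x_i$, the body timestamp variable to $s_i$, and the head timestamp variable to $s_i+1$; this is exactly the extension used in Claim~\ref{claim:ind-M-in-induc}. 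Applying $V'$ to $\rl$ and deleting its negative ground body atoms yields a candidate ground rule $\grl$, and by construction $\head{\grl}=V'(\head{\rl})=\addlt{\fc}{x_i}{s_i+1}$, since the location and head-timestamp components are $x_i$ and $s_i+1$ while the remaining components agree with $\fc$.

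It then remains to check that $\grl\in\grded$ and that $\grl$ is active on $M$; from this, because $\rl$ has form (\ref{eq:pure-ind}) and the surviving positive body contains the $\timesucc$-atom, $\grl$ is an inductive ground rule and its head lies in $\Mind$. For membership in $\grded$ I would verify $V'(\bneg{\rl})\cap M=\emptyset$: here $V'(\bneg{\rl})=\addlt{V(\bneg{\rl''})}{x_i}{s_i}$, and since $V(\bneg{\rl''})\cap D_i=\emptyset$, any common element with $M$ would, by Claim~\ref{claim:M-and-duc} (which identifies $\projlt{M}{x_i}{s_i}$ with $\addlt{D_i}{x_i}{s_i}$), force a common element of $V(\bneg{\rl''})$ and $D_i$, a contradiction. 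For activeness I would check $\bpos{\grl}=V'(\bpos{\rl})\subseteq M$: this positive body is $\addlt{V(\bpos{\rl''})}{x_i}{s_i}$ together with the fact $\timesucc(s_i,s_i+1)$; the latter is in $\Itime\subseteq\decl H\subseteq M$, and the former lies in $\addlt{D_i}{x_i}{s_i}=\projlt{M}{x_i}{s_i}\subseteq M$, again by Claim~\ref{claim:M-and-duc}, because $V(\bpos{\rl''})\subseteq D_i$. The argument is entirely symmetric to Claim~\ref{claim:ind-M-in-induc}, so I do not expect a genuine obstacle; the only point requiring care is to invoke Claim~\ref{claim:M-and-duc} in the $D_i\to M$ direction (rather than the $M\to D_i$ direction packaged in Claims~\ref{claim:in-M-in-duc-ver2} and \ref{claim:not-in-M-not-in-duc-ver2}) and to make sure the $\timesucc$-atom, being positive, is retained in $\grl$ and witnessed in $\decl H$.
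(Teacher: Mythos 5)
Your proposal is correct and follows essentially the same route as the paper's own proof: take the rule and satisfying valuation in $\induc{\ded}$ that derived $\fc$, lift them through transformation (\ref{eq:pure-ind}) by assigning $x_i$, $s_i$, and $s_i+1$ to the new variables, and verify membership in $\grded$ and activeness by transporting the positive and negative body conditions between $D_i$ and $M$ via the equality of Claim~\ref{claim:M-and-duc}, with the $\timesucc(s_i,s_i+1)$ fact witnessed in $\decl H$. The only differences are notational (your $\rl$, $\rl''$ swap the paper's naming), and your closing remark about using Claim~\ref{claim:M-and-duc} in the $D_i\to M$ direction is exactly what the paper does.
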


\begin{proof}Let $\fc\in\induc{\ded}\mstep{D_{i}}$. We show that
$\addlt{\fc}{x_{i}}{s_{i}+1}\in\projlt{\Mind}{x_{i}}{s_{i}+1}$. 

Recall the semantics for $\induc{\ded}$ from Section~\ref{sub:subprograms}.
Let $\rl\in\induc{\ded}$ and $V$ be the rule and valuation that
together derived $\fc\in\induc{\ded}\mstep{D_{i}}$. Let $\rl'\in\ded$
be the original inductive rule on which $\rl$ is based. Let $\rl''\in\pure{\ded}$
be the inductive rule that in turn is based on $\rl'$, which is of
the form (\ref{eq:pure-ind}). Let $V''$ be the valuation for $\rl''$
that is obtained by extending $V$ to assign $x_{i}$ and $s_{i}$
to respectively the location and timestamp variables in the body,
and to assign $s_{i}+1$ to the head timestamp variable. Let $\grl$
be the positive ground rule obtained from $\rl''$ by applying the
valuation $V''$, and by subsequently removing the negative (ground)
body literals. Note that $\head{\grl}=\addlt{V(\head{\rl})}{x_{i}}{s_{i}+1}=\addlt{\fc}{x_{i}}{s_{i}+1}$.
We will show that $\grl\in\grded$ and that $\bpos{\grl}\subseteq M$,
so that this ground rule derives $\addlt{\fc}{x_{i}}{s_{i}+1}\in M$.
And since $\grl$ is inductive, we more specifically have $\addlt{\fc}{x_{i}}{s_{i}+1}\in\projlt{\Mind}{x_{i}}{s_{i}+1}$,
as desired.
\begin{itemize}
\item For $\grl\in\grded$, we require $V''(\bneg{\rl''})\cap M=\emptyset$.
From the construction of rule $\rl''$, we have $V''(\bneg{\rl''})=\addlt{V(\bneg{\rl})}{x_{i}}{s_{i}}$.
We show $\addlt{V(\bneg{\rl})}{x_{i}}{s_{i}}\cap M=\emptyset$.

Because $V$ is satisfying for $\rl$ with respect to $D_{i}$, we
have $V(\bneg{\rl})\cap D_{i}=\emptyset$. This gives $\addlt{V(\bneg{\rl})}{x_{i}}{s_{i}}\cap\addlt{(D_{i})}{x_{i}}{s_{i}}=\emptyset$.
Then $\addlt{V(\bneg{\rl})}{x_{i}}{s_{i}}\cap\projlt M{x_{i}}{s_{i}}=\emptyset$
by Claim~\ref{claim:M-and-duc}. Next, we obtain $\addlt{V(\bneg{\rl})}{x_{i}}{s_{i}}\cap M=\emptyset$
since $\addlt{V(\bneg{\rl})}{x_{i}}{s_{i}}$ contains only facts over
$\toloct{\schof{\ded}}$ with location specifier $x_{i}$ and timestamp
$s_{i}$.

\item Now we show $\bpos{\grl}\subseteq M$. From the construction of rule
$\rl''$, we have 
\[
\bpos{\grl}=V''(\bpos{\rl''})=\addlt{V(\bpos{\rl})}{x_{i}}{s_{i}}\cup\{\timesucc(s_{i},s_{i}+1)\}.
\]
We immediately have $\timesucc(s_{i},s_{i}+1)\in\decl H\subseteq M$.
Moreover, since $V$ is satisfying for $\rl$ with respect to $D_{i}$,
we have $V(\bpos{\rl})\subseteq D_{i}$. Hence $\addlt{V(\bpos{\rl})}{x_{i}}{s_{i}}\subseteq\addlt{(D_{i})}{x_{i}}{s_{i}}$.
By Claim~\ref{claim:M-and-duc} we then have $\addlt{V(\bpos{\rl})}{x_{i}}{s_{i}}\subseteq\projlt M{x_{i}}{s_{i}}\subseteq M$,
as desired.
\end{itemize}
\end{proof}

\tline

\begin{claim}\label{claim:async-in-bufM}Let $i\in\Nat$. Let $x\in\nw$.
For each $R(x,\bar{a})\in\async{\ded}\mstep{D_{i}}$, we have $\pair i{R(\bar{a})}\in\cnfb_{i+1}(x)$.
\end{claim}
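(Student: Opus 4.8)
The plan is to unfold the definition of the buffer $\cnfb_{i+1}(x)$ and exhibit an explicit witness. Recall
\[
\begin{array}{ll}
\cnfb_{i+1}(x)=\{\pair{\globM{y,t}}{\,R(\bar a)}\mid
 & \exists u:\chosen_R(y,t,x,u,\bar a)\in M,\\
 & \globM{y,t}<i+1\le\globM{x,u}\}.
\end{array}
\]
Since $\globM{\cdot}$ is injective and $\globM{x_i,s_i}=i$, a pair whose send-tag is $i$ forces $(y,t)=(x_i,s_i)$. Hence it suffices to produce a timestamp $u\in\Nat$ with $\chosen_R(x_i,s_i,x,u,\bar a)\in M$ and $i<\globM{x,u}$; this places precisely $\pair i{R(\bar a)}$ in $\cnfb_{i+1}(x)$.

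First I would trace the given fact $R(x,\bar a)\in\async{\ded}\mstep{D_i}$ back into $M$. It is produced by some rule of $\async{\ded}$ that comes from an asynchronous rule $\rl\in\ded$ (with addressee variable mapped to $x$ and head data $\bar a$) under a valuation $V$ satisfying on $D_i$. Let $\rl''\in\pure{\ded}$ be the corresponding candidate rule of the form (\ref{eq:cand}), and extend $V$ to assign $x_i,s_i$ to the sender location/timestamp and $x$ to the addressee. Using Claim~\ref{claim:M-and-duc} (that $\projlt M{x_i}{s_i}=\addlt{(D_i)}{x_i}{s_i}$), the positive lifted body atoms of $\rl''$ lie in $M$ and the negative ones lie outside $M$; moreover $\relall(x)\in\decl H\subseteq M$ since $x\in\nw$, and $\reltime(t)\in\decl H\subseteq M$ for every $t$. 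The only remaining body literal of (\ref{eq:cand}) is the negative atom $\before(x,t,x_i,s_i)$, so I must pick the candidate arrival time $t$ so that this fact is absent. Since $\cauM$ is well-founded, $(x_i,s_i)$ has only finitely many $\cauM$-predecessors, so all but finitely many $t\in\Nat$ satisfy $(x,t)\not\cauM(x_i,s_i)$, i.e.\ $\before(x,t,x_i,s_i)\notin M$. Fixing such a $t$, the corresponding ground instance of (\ref{eq:cand}) lies in $\grded$ and is active, so $\cand_R(x_i,s_i,x,t,\bar a)\in M$.

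Next I would extract a chosen timestamp from the choice gadget (\ref{eq:chosen})--(\ref{eq:other}). I claim at least one candidate for the key $(x_i,s_i,x,\bar a)$ is chosen: if no $\chosen_R(x_i,s_i,x,\cdot,\bar a)$ were in $M$, then $\other_R(x_i,s_i,x,t,\bar a)$ could not be derived (rule (\ref{eq:other}) needs a $\chosen_R$-fact in its body), so $\other_R(x_i,s_i,x,t,\bar a)\notin M$; but then the ground instance of (\ref{eq:chosen}) built from $\cand_R(x_i,s_i,x,t,\bar a)$ is active and forces $\chosen_R(x_i,s_i,x,t,\bar a)\in M$, a contradiction. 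Hence $\chosen_R(x_i,s_i,x,u,\bar a)\in M$ for some $u$. Finally, rule (\ref{eq:before-send}), being positive, yields $\before(x_i,s_i,x,u)\in M$, i.e.\ $(x_i,s_i)\cauM(x,u)$; since $\totM$ extends $\cauM$ this gives $\globM{x_i,s_i}<\globM{x,u}$, that is $i<\globM{x,u}$, equivalently $i+1\le\globM{x,u}$. Together with $\globM{x_i,s_i}=i<i+1$ this shows $\pair i{R(\bar a)}\in\cnfb_{i+1}(x)$, as required.

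The main obstacle I anticipate is the interaction between the causality restriction and the choice mechanism: one must ensure that the negative $\before$-literal in (\ref{eq:cand}) does not eliminate every candidate arrival time (handled by well-foundedness of $\cauM$), and that the \sacca--Zaniolo choice rules genuinely select a candidate in the stable model $M$ (handled by the short reduct argument above). The transfer of body (un)satisfaction between $D_i$ and $M$ is routine once Claim~\ref{claim:M-and-duc} is invoked.
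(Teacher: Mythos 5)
Your proof is correct and takes essentially the same route as the paper's: you use well-foundedness of $\cauM$ to pick a candidate arrival timestamp whose negative $\before$-literal is absent from $M$, verify the ground instance of rule~(\ref{eq:cand}) via Claim~\ref{claim:M-and-duc} to get the $\cand_R$-fact, argue via the stable-model semantics of the choice rules (\ref{eq:chosen})--(\ref{eq:other}) that some $\chosen_R$-fact must exist (the paper isolates this as Claim~\ref{claim:cand-implies-chosen}, which you reprove inline), and then use rule~(\ref{eq:before-send}) together with $\totM$ extending $\cauM$ to obtain the required ordinal inequality. No gaps.
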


\begin{proof}

The main approach of this proof is as follows. We will show there
is a timestamp $u\in\Nat$ such that $\chosen_{R}(x_{i},s_{i},x,u,\bar{a})\in M$.
Next, because rules of the form (\ref{eq:before-send}) are positive,
in $\grded$ there is always the following ground rule:
\[
\before(x_{i},s_{i},x,u)\gets\chosen_{R}(x_{i},s_{i},x,u,\bar{a}).
\]
Thus if $\chosen_{R}(x_{i},s_{i},x,u,\bar{a})\in M$ then $\before(x_{i},s_{i},x,u)\in M$,
which implies $(x_{i},s_{i})\cauM(x,u)$ by definition of $\cauM$.
Since $\totM$ respects $\cauM$, we obtain $(x_{i},s_{i})\totM(x,u)$
and thus $\globM{x_{i},s_{i}}<\globM{x,u}$. Also, since $\globM{x_{i},s_{i}}=i$,
we overall get 
\[
\globM{x_{i},s_{i}}<i+1\leq\globM{x,u},
\]
which together with $\chosen_{R}(x_{i},s_{i},x,u,\bar{a})\in M$ gives
$\pair{\globM{x_{i},s_{i}}}{\, R(\bar{a})}=\pair i{R(\bar{a})}\in\cnfb_{i+1}(x)$,
as desired.

Now we are left to show that such a timestamp $u$ exists. Recall
the semantics for $\async{\ded}$ from Section~\ref{sub:subprograms}.
Let $\rl\in\async{\ded}$ and $V$ be a rule and valuation that together
have derived $R(x,\bar{a})\in\async{\ded}\mstep{D_{i}}$. Let $\rl'\in\ded$
be the original asynchronous rule on which $\rl$ is based. Let $\rl''\in\pure{\ded}$
be the rule obtained by applying transformation (\ref{eq:cand}) to
$\rl'$. To continue, because $\cauM$ is well-founded, there are
only a finite number of timestamps $v\in\Nat$ of node $x$ such that
$(x,v)\cauM(x_{i},s_{i})$. So, there exists a timestamp $u\in\Nat$
such that $(x,u)\not\cauM(x_{i},s_{i})$. Now, let $V''$ be the
valuation for $\rl''$ that is the extension of valuation $V$ to
assign $x_{i}$ and $s_{i}$ to the body location variable and timestamp
variable respectively (both belonging to the sender), and to assign
$u$ to the addressee arrival timestamp. Note that from the construction
of $\rl''$ we also know that $V$ (and thus $V''$) assigns the value
$x$ to the addressee location variable and the tuple $\bar{a}$ to
the message contents. Let $\grl$ denote the ground rule obtained
by applying $V''$ to $\rl''$, and by subsequently removing the negative
(ground) body literals. We will first show that $\grl\in\grded$,
and then we show that $\bpos{\grl}\subseteq M$, meaning that $\grl$
derives $\head{\grl}=\cand_{R}(x_{i},s_{i},x,u,\bar{a})\in M$. Then
Claim~\ref{claim:cand-implies-chosen} can be applied to know that
there is a timestamp $u'$, with possibly $u'=u$, such that $\chosen_{R}(x_{i},s_{i},x,u',\bar{a})\in M$,
as desired.

In order for $\grl$ to be in $\grded$, we require $V''(\bneg{\rl''})\cap M=\emptyset$.
It follows from the construction of $\rl''$ out of $\rl'$ and $\rl$
out of $\rl'$ that 
\[
V''(\bneg{\rl''})=\addlt{V(\bneg{\rl})}{x_{i}}{s_{i}}\cup\{\before(x,u,x_{i},s_{i})\}.
\]
We have $\before(x,u,x_{i},s_{i})\notin M$ because $(x,u)\not\cauM(x_{i},s_{i})$
by choice of $u$.  Next, we show that $\addlt{V(\bneg{\rl})}{x_{i}}{s_{i}}\cap M=\emptyset$.
Because $V$ is satisfying for $\rl$ with respect to $D_{i}$, we
have $V(\bneg{\rl})\cap D_{i}=\emptyset$, and thus
\[
\addlt{V(\bneg{\rl})}{x_{i}}{s_{i}}\cap\addlt{(D_{i})}{x_{i}}{s_{i}}=\emptyset.
\]
Then, by Claim~\ref{claim:M-and-duc}, 
\[
\addlt{V(\bneg{\rl})}{x_{i}}{s_{i}}\cap\projlt M{x_{i}}{s_{i}}=\emptyset.
\]
Since $\addlt{V(\bneg{\rl})}{x_{i}}{s_{i}}$ contains only facts over
$\toloct{\schof{\ded}}$ with location specifier $x_{i}$ and timestamp
$s_{i}$, we have 
\[
\addlt{V(\bneg{\rl})}{x_{i}}{s_{i}}\cap M=\emptyset.
\]

We now show $\bpos{\grl}\subseteq M$. Note, $\bpos{\grl}=V''(\bpos{\rl''})$.
From the construction of $\rl''$ we have 
\[
V''(\bpos{\rl''})=\addlt{V(\bpos{\rl})}{x_{i}}{s_{i}}\cup\{\relall(x),\,\reltime(u)\}.
\]
Because $x\in\nw$ and $u\in\Nat$, we immediately have $\{\relall(x),\,\reltime(u)\}\subseteq\decl H\subseteq M$.
We are left to show $\addlt{V(\bpos{\rl})}{x_{i}}{s_{i}}\subseteq M$.
Because $V$ is satisfying for $\rl$ with respect to $D_{i}$, we
have $V(\bpos{\rl})\subseteq D_{i}$. Hence $\addlt{V(\bpos{\rl})}{x_{i}}{s_{i}}\subseteq\addlt{(D_{i})}{x_{i}}{s_{i}}$.
By again using Claim~\ref{claim:M-and-duc} we then obtain $\addlt{V(\bpos{\rl})}{x_{i}}{s_{i}}\subseteq\projlt M{x_{i}}{s_{i}}\subseteq M$,
as desired.\end{proof}

\tline

\begin{claim}\label{claim:bufM-in-async}Let $i\in\Nat$ and $x\in\nw$.
For each $\pair i{R(\bar{a})}\in\cnfb_{i+1}(x)$, we have $R(x,\bar{a})\in\async{\ded}\mstep{D_{i}}$.\end{claim}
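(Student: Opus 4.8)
The plan is to read this as the exact converse of Claim~\ref{claim:async-in-bufM}: rather than pushing an $\async{\ded}$-derivation forward into the buffer, I will start from the buffer membership, trace it back through the chain of ground rules in $\grded$ that must have produced the relevant $\chosen_R$-fact, and recover from that chain a valuation witnessing the desired single-step $\async{\ded}$-derivation on $D_i$.

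First I would unfold the definition of $\cnfb_{i+1}(x)$. The hypothesis $\pair i{R(\bar a)}\in\cnfb_{i+1}(x)$ means there are values $y,t,u$ with $\chosen_R(y,t,x,u,\bar a)\in M$ and $\globM{y,t}<i+1\leq\globM{x,u}$, where moreover the send-tag $\globM{y,t}$ equals $i$. Since $\globM{\cdot}$ is injective and $\globM{x_i,s_i}=i$ by definition of $(x_i,s_i)$, this forces $(y,t)=(x_i,s_i)$, so in fact $\chosen_R(x_i,s_i,x,u,\bar a)\in M$. Note that the inequality $i+1\leq\globM{x,u}$ and the specific value of $u$ play no further role: from here on I only need the \emph{existence} of this $\chosen_R$-fact.

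Next I would trace this fact back to its defining rules. Because $M=\grded(\decl H)$ is the least fixpoint of the positive program $\grded$, the fact $\chosen_R(x_i,s_i,x,u,\bar a)$ is the head of an active ground rule of the form (\ref{eq:chosen}); the fact that this rule lies in $\grded$ already requires its positive body fact $\cand_R(x_i,s_i,x,u,\bar a)$ to be in $M$. That $\cand_R$-fact is itself the head of an active ground rule $\grl$ of the form (\ref{eq:cand}), built from some asynchronous rule $\rl'\in\ded$ with head predicate $R$ via a valuation $V$. By construction of rule (\ref{eq:cand}) we have $V(\var y)=x$ and $V$ maps the head tuple of $\rl'$ to $\bar a$; furthermore, since $\grl\in\grded$ is active, the lifted positive $\simplebody$-literals $\addlt{V(\bpos{\rl'})}{x_i}{s_i}$ lie in $M$, and the lifted negative $\simplebody$-literals $\addlt{V(\bneg{\rl'})}{x_i}{s_i}$ are disjoint from $M$ (the remaining body atoms $\relall(x)$, $\reltime(u)$ and the negative $\before$-atom can simply be discharged).

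Finally I would let $\rl''\in\async{\ded}$ be the operational rule corresponding to $\rl'$ and show $V$ is satisfying for $\rl''$ on $D_i$, so that $V(\head{\rl''})=R(x,\bar a)\in\async{\ded}\mstep{D_i}$. Since $\rl''$ shares its body $\simplebody{\tvar u,\tvar v,\var y}$ with $\rl'$ (only the head differs, placing the addressee variable first), applying Claim~\ref{claim:in-M-in-duc-ver2} to the lifted positive body facts yields $V(\bpos{\rl''})\subseteq D_i$, and applying Claim~\ref{claim:not-in-M-not-in-duc-ver2} to the lifted negative body facts yields $V(\bneg{\rl''})\cap D_i=\emptyset$, giving exactly the single-step derivation required. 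The bookkeeping here is entirely routine once the link $\projlt M{x_i}{s_i}=\addlt{(D_i)}{x_i}{s_i}$ (Claim~\ref{claim:M-and-duc}) is in hand; the one point deserving care is the treatment of the mixed positive and negative literals inside $\simplebody$, where I must check that the restriction of $V$ to the body variables of $\rl''$ coincides with the valuation extracted from $\grl$ after stripping the location and timestamp assignments, so that membership and disjointness transfer cleanly from $M$ to $D_i$.
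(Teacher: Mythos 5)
Your proposal is correct and follows essentially the same route as the paper's proof: identify the sender as $(x_i,s_i)$ from the buffer definition, trace the $\chosen_R$-fact back through an active ground rule of the form (\ref{eq:chosen}) to a $\cand_R$-fact and then to an active ground instance of (\ref{eq:cand}), and transfer the extracted valuation to the corresponding rule of $\async{\ded}$ via Claims~\ref{claim:in-M-in-duc-ver2} and~\ref{claim:not-in-M-not-in-duc-ver2}. One wording nit: membership of a ground rule in $\grded$ only constrains its negative body (disjointness from $M$); it is activeness, which you do invoke in the same breath, that puts the positive body fact $\cand_R(x_i,s_i,x,u,\bar{a})$ in $M$.
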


\begin{proof}

By definition of $\cnfb_{i+1}(x)$, the pair $\pair i{R(\bar{a})}\in\cnfb_{i+1}(x)$
implies that there are values $y\in\nw$, $t\in\Nat$ and $u\in\Nat$
such that $\chosen_{R}(y,t,x,u,\bar{a})\in M$, $\globM{y,t}=i$ and
$\globM{y,t}<i+1\leq\globM{x,u}$. And $\globM{y,t}=i$ gives us that
$y=x_{i}$ and $t=s_{i}$. Thus $\chosen_{R}(x_{i},s_{i},x,u,\bar{a})\in M$.

All ground rules in $\grded$ that can derive $\chosen_{R}(x_{i},s_{i},x,u,\bar{a})\in M$
are of the form (\ref{eq:chosen}), and hence $\cand_{R}(x_{i},s_{i},x,u,\bar{a})\in M$.
Let $\grl\in\grded$ be an active ground rule with head $\cand_{R}(x_{i},s_{i},x,u,\bar{a})$.
Because $\grl\in\grded$, there is a rule $\rl\in\pure{\ded}$ and
a valuation $V$ so that $\grl$ is obtained from $\rl$ by applying
$V$ and by subsequently removing all negative (ground) body literals,
and so that $V(\bneg{\rl})\cap M=\emptyset$. The rule $\rl$ is of
the form (\ref{eq:cand}), which implies that $V$ must assign $x_{i}$
and $s_{i}$ respectively to the body location and timestamp variable
that correspond to the sender, and that it must assign $x$ and $u$
respectively to the location and timestamp variable that correspond
to the addressee. Let $\rl'\in\ded$ be the original asynchronous
rule on which $\rl$ is based. Let $\rl''$ be the corresponding rule
in $\async{\ded}$. From the construction of $\rl$ out of $\rl'$
and $\rl''$ out of $\rl'$, it follows that $V$ can also be applied
to $\rl''$.  Note, $V(\head{\rl''})=R(x,\bar{a})$. We now show
that $V$ is satisfying for $\rl''$ with respect to $D_{i}$, which
causes $R(x,\bar{a})\in\async{\ded}\mstep{D_{i}}$, as desired. Specifically,
we have to show $V(\bpos{\rl''})\subseteq D_{i}$ and $V(\bneg{\rl''})\cap D_{i}=\emptyset$.

First we show $V(\bpos{\rl''})\subseteq D_{i}$. By construction of
$\rl$ and $\rl''$, we have 
\[
\proj{\bpos{\grl}}{\toloct{\schof{\ded}}}=\proj{V(\bpos{\rl})}{\toloct{\schof{\ded}}}=\addlt{V(\bpos{\rl''})}{x_{i}}{s_{i}}.
\]
Since $\grl$ is active, we have $\proj{\bpos{\grl}}{\toloct{\schof{\ded}}}\subseteq M$,
and therefore $\addlt{V(\bpos{\rl''})}{x_{i}}{s_{i}}\subseteq M$.
 Then, because the facts in $\addlt{V(\bpos{\rl''})}{x_{i}}{s_{i}}$
are over $\toloct{\schof{\ded}}$ and have location specifier $x_{i}$
and timestamp $s_{i}$, we can apply Claim \ref{claim:in-M-in-duc-ver2}
to know that $V(\bpos{\rl''})\subseteq D_{i}$, as desired.

Now we show $V(\bneg{\rl''})\cap D_{i}=\emptyset$. By construction
of $\rl$ and $\rl''$, we have 
\[
\proj{V(\bneg{\rl})}{\toloct{\schof{\ded}}}=\addlt{V(\bneg{\rl''})}{x_{i}}{s_{i}}.
\]
By choice of $\rl$ and $V$, we have $V(\bneg{\rl})\cap M=\emptyset$.
Hence, $\addlt{V(\bneg{\rl''})}{x_{i}}{s_{i}}\cap M=\emptyset$. Then,
because the facts in $\addlt{V(\bneg{\rl''})}{x_{i}}{s_{i}}$ are
over $\toloct{\schof{\ded}}$ and have location specifier $x_{i}$
and timestamp $s_{i}$, we can apply Claim \ref{claim:not-in-M-not-in-duc-ver2}
to know that $V(\bneg{\rl''})\cap D_{i}=\emptyset$, as desired.\end{proof}

\tline

\begin{claim}\label{claim:M-and-duc}Let $i\in\Nat$. We have $\projlt M{x_{i}}{s_{i}}=\addlt{(D_{i})}{x_{i}}{s_{i}}$.
Intuitively, this means that the operational deductive fixpoint $D_{i}$
during transition $i$, corresponding to step $s_{i}$ of node $x_{i}$,
is represented by $M$ in an exact way.\end{claim}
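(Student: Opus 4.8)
The plan is to reduce the claim to a statement purely about the deductive fixpoint, and then prove that statement by a double induction mirroring the stratified semantics. First I would show that the operational input to the deductive step at transition $i$ is exactly the non-deductive slice of $M$ at $(x_i,s_i)$. By Claim~\ref{claim:localM} we have $s_i=\locM{i,x_i}$, so Claim~\ref{claim:edb-in-M} identifies the $\mathit{edb}$-part of $\cnfs_i(x_i)$ with $\droplt{(\projlt{\proj M{\toloct{\edb{\ded}}}}{x_i}{s_i})}$. Unfolding the definitions of $m_i$ and of $\deliv M$ (both driven by the $\chosen_R$-facts of $M$ arriving at $(x_i,s_i)$, using that $\globM{z,u}=i$ forces $(z,u)=(x_i,s_i)$) gives $\untag{m_i}=\droplt{(\projlt{\deliv M}{x_i}{s_i})}$, while $\Mind=\ind M$ gives $\projlt{\Mind}{x_i}{s_i}=\projlt{\ind M}{x_i}{s_i}$. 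Combining these with $\up M=\proj M{\toloct{\edb{\ded}}}\cup\ind M\cup\deliv M$ yields
\[
\cnfs_i(x_i)\cup\untag{m_i}=\droplt{(\projlt{(\up M)}{x_i}{s_i})}=\strat{D_i}{0},
\]
i.e.\ the operational input equals the $\mathit{edb}$/inductive/delivery slice of $M$ with location specifier and timestamp stripped.

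Next I would observe that every fact of $\proj M{\toloct{\schof{\ded}}}$ is either an $\mathit{edb}$-fact or is derived by an active deductive, inductive, or delivery ground rule, so $\proj M{\toloct{\schof{\ded}}}=\strat{M}{n}$, where $n$ is the top stratum of $\deduc{\ded}$ and $\strat{M}{k}=\up M\cup\duc Mk$; hence $\projlt M{x_i}{s_i}=\projlt{\strat{M}{n}}{x_i}{s_i}$. On the operational side $\strat{D_i}{n}=D_i$. Thus it suffices to prove, for each stratum $k$ from $0$ to $n$,
\[
\projlt{\strat{M}{k}}{x_i}{s_i}=\addlt{(\strat{D_i}{k})}{x_i}{s_i}.
\]
The base case $k=0$ is exactly the identity of the first paragraph, rewritten with $\addlt{\cdot}{x_i}{s_i}$ since every fact in the slice already carries location $x_i$ and timestamp $s_i$.

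For the inductive step I would fix $k\ge 1$, assume the equality for $k-1$, and run an inner induction on the semi-positive fixpoint iteration of stratum $k$, in the style of Claim~\ref{claim:duc-in-stable}. The crucial link is that a rule of the form~(\ref{eq:pure-duc}) forces all its atoms to share the location variable and timestamp variable, so grounding it at $(x_i,s_i)$ puts it in bijection with a ground instance of the original rule $\rl'\in\deduc{\ded}$ evaluated locally. For $\subseteq$, an active stratum-$k$ deductive ground rule of $\grded$ with head $R(x_i,s_i,\bar a)$ has its positive body atoms in $\projlt{\strat{M}{k}}{x_i}{s_i}$ and its negative body atoms (over strata $<k$) absent from $M$; the outer hypothesis then makes the corresponding local valuation satisfying for $\rl'$, giving $R(\bar a)\in\strat{D_i}{k}$. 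For $\supseteq$, a satisfying valuation of $\rl'$ lifts to a ground rule of~(\ref{eq:pure-duc}) whose negative atoms are absent from $M$, so it lies in $\grded$ and derives $R(x_i,s_i,\bar a)\in M$; the lifted valuation is a legal grounding because $\grded$ is positive with input $\decl H$, so all values stay in $\adom{\decl H}$.

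The main obstacle will be the negation bookkeeping threaded through this double induction: one must show that for an $\mathit{idb}$ relation $S$ of stratum strictly below $k$, the membership $S(\bar b)\in\strat{D_i}{k-1}$ is equivalent to $S(x_i,s_i,\bar b)\in M$. The forward half is the outer induction hypothesis; the converse additionally needs that every occurrence of such an $S$-fact in $M$ is already captured by $\strat{M}{k-1}$, which holds because each relation is computed in a single stratum and inductive or delivery derivations feed into $\up M\subseteq\strat{M}{k-1}$. Once this equivalence is established, both the positive and the negative satisfaction conditions transfer cleanly between the two fixpoints, and the stratum-$n$ instance of the displayed equality is precisely the statement of the claim.
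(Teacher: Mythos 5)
Your overall architecture coincides with the paper's: reduce the claim to the per-stratum equality $\shprojlt{\strat Mk}{x_{i}}{s_{i}}=\shaddlt{\strat{D_{i}}{k}}{x_{i}}{s_{i}}$, establish the base case $k=0$ by identifying $\cnfs_{i}(x_{i})\cup\untag{m_{i}}$ with the \emph{edb}/inductive/delivery slice of $M$ (this is exactly the content of Claims~\ref{claim:localM}, \ref{claim:state-is-edb-ind} and \ref{claim:deliv-is-untag}), and recover the full claim at the top stratum via Claim~\ref{claim:M-largest-stratum}. Your $\supseteq$ half of the inductive step, by inner induction on the semi-positive iteration of stratum $k$ on the operational side, is sound and is precisely the paper's Claim~\ref{claim:strat-k-D-in-M}.

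The gap is in the $\subseteq$ half. A deductive rule of stratum $k$ may use a stratum-$k$ relation \emph{positively} (the stratification condition only requires $\sigma(R)\leq\sigma(T)$ for positive atoms), so an active stratum-$k$ ground rule can have positive body atoms that themselves lie in $\shprojlt{\strat Mk}{x_{i}}{s_{i}}$ and nowhere lower. To conclude that ``the corresponding local valuation is satisfying for $\rl'$'' you need those body facts to lie in $\strat{D_{i}}{k}$ --- which is exactly the inclusion being proved; the outer hypothesis only transfers atoms of strata $\leq k-1$, and your single inner induction runs over the stages $A_{0},A_{1},\ldots$ of $D_{i}$, so it enumerates operational facts and gives no well-founded measure for the facts of $M$. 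The inclusion genuinely requires the \emph{leastness} of $M$ as the fixpoint of the reduct: a merely supported interpretation could contain a self-supporting fact (e.g.\ justified only by an active ground instance $T(x_{i},s_{i},\bar{a})\gets T(x_{i},s_{i},\bar{a})$ of a recursive rule) that belongs to no $\strat{D_{i}}{k}$, and your argument as written would ``prove'' the inclusion for it as well. The paper breaks this circularity in Claim~\ref{claim:strat-k-M-in-D} with a second, different inner induction, on the stages $M_{0}\subseteq M_{1}\subseteq\cdots$ of the computation $M=\grded(\decl H)$: a fact entering $M$ at stage $l$ has its positive body inside $M_{l-1}$, so the inner hypothesis applies to the body atoms before it is applied to the head. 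You need to add this induction (or an equivalent induction on derivation rank in $M$); once it is in place, the rest of your outline, including your treatment of the negative atoms in the final paragraph, goes through.
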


\begin{proof}

Recall the notations from Section~\ref{sub:model-to-run-defs-notations}.
Let $n$ denote the largest stratum number of the deductive rules
of $\ded$. We show by induction on $k=0,1,\ldots,n$ that 
\[
\shprojlt{\strat Mk}{x_{i}}{s_{i}}=\shaddlt{\strat{D_{i}}k}{x_{i}}{s_{i}}.
\]
This will give us $\shprojlt{\strat Mn}{x_{i}}{s_{i}}=\shaddlt{\strat{D_{i}}n}{x_{i}}{s_{i}}=\addlt{(D_{i})}{x_{i}}{s_{i}}.$
Moreover, Claim~\ref{claim:M-largest-stratum} says that $\shprojlt{\strat Mn}{x_{i}}{s_{i}}=\projlt M{x_{i}}{s_{i}}$,
and thus we obtain $\projlt M{x_{i}}{s_{i}}=\addlt{(D_{i})}{x_{i}}{s_{i}}$,
as desired.

\paragraph*{Base case ($k=0$)}

By definition, 
\[
\strat M0=\up M\cup\duc M0.
\]
But since there are no deductive ground rules in $\grded$ with stratum
$0$, we have $\duc M0=\emptyset$. Hence, 
\begin{eqnarray}
\shprojlt{\strat M0}{x_{i}}{s_{i}} & = & \shprojlt{\up M}{x_{i}}{s_{i}}\nonumber \\
 & = & \shprojlt{\proj M{\toloct{\edb{\ded}}}}{x_{i}}{s_{i}}\cup\projlt{\ind M}{x_{i}}{s_{i}}\cup\projlt{\deliv M}{x_{i}}{s_{i}}.\label{eq:expression-for-M-0}
\end{eqnarray}
Using Claim~\ref{claim:state-is-edb-ind} and Claim~\ref{claim:deliv-is-untag},
we can rewrite expression (\ref{eq:expression-for-M-0}) to the desired
equality: 
\begin{eqnarray*}
\shprojlt{\strat M0}{x_{i}}{s_{i}} & = & \addlt{\cnfs_{i}(x_{i})}{x_{i}}{s_{i}}\cup\addlt{\untag{m_{i}}}{x_{i}}{s_{i}}\\
 & = & \shaddlt{\cnfs_{i}(x_{i})\cup\untag{m_{i}}}{x_{i}}{s_{i}}\\
 & = & \shaddlt{\strat{D_{i}}0}{x_{i}}{s_{i}}.
\end{eqnarray*}

\paragraph*{Induction hypothesis}

For the induction hypothesis, we assume for a stratum number $k\geq1$
that 
\[
\shprojlt{\strat M{k-1}}{x_{i}}{s_{i}}=\shaddlt{\strat{D_{i}}{k-1}}{x_{i}}{s_{i}}.
\]

\paragraph*{Inductive step}

We show that 
\[
\shprojlt{\strat Mk}{x_{i}}{s_{i}}=\shaddlt{\strat{D_{i}}k}{x_{i}}{s_{i}}.
\]
We show both inclusions separately, in Claims~\ref{claim:strat-k-M-in-D}
and \ref{claim:strat-k-D-in-M}. \end{proof}

\tline

\begin{claim}\label{claim:state-is-edb-ind}Let $i\in\Nat$. We have
$\addlt{\cnfs_{i}(x_{i})}{x_{i}}{s_{i}}=\shprojlt{\proj M{\toloct{\edb{\ded}}}}{x_{i}}{s_{i}}\cup\projlt{\ind M}{x_{i}}{s_{i}}$.\end{claim}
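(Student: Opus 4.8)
The plan is to unfold the definition of the state $\cnfs_i(x_i)$ and observe that applying $\addlt{\cdot}{x_i}{s_i}$ simply undoes the $\droplt{\cdot}$ appearing in that definition. First I would invoke Claim~\ref{claim:localM} to pin down the local clock: by construction $\cnfs_i(x_i)$ uses the timestamp $s = \locM{i,x_i}$, and Claim~\ref{claim:localM} gives $\locM{i,x_i} = s_i$. Writing $X$ for the set $\shprojlt{\proj M{\toloct{\edb{\ded}}}}{x_i}{s_i}\cup\projlt{\ind M}{x_i}{s_i}$ (the set $\ind M$ from the appendix is the same as $\Mind$ from Section~\ref{sub:construction-of-run}), the state is then exactly $\cnfs_i(x_i) = \droplt X$, so the goal reduces to showing $\addlt{\droplt X}{x_i}{s_i} = X$.

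The key observation is that every fact in $X$ is over $\toloct{\schof{\ded}}$ and carries location specifier $x_i$ and timestamp $s_i$; this is immediate from the definition of the selection operator $\projlt{\cdot}{x_i}{s_i}$, which keeps only facts with exactly that location and timestamp. I would then record the general fact that for any set $I$ of facts over $\toloct{\schof{\ded}}$ all sharing location $x_i$ and timestamp $s_i$ we have $\addlt{\droplt I}{x_i}{s_i} = I$, proving both inclusions directly. For the $\supseteq$ direction, $R(x_i,s_i,\bar a)\in I$ yields $R(\bar a)\in\droplt I$ and hence $R(x_i,s_i,\bar a)\in\addlt{\droplt I}{x_i}{s_i}$. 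For the $\subseteq$ direction, a fact $R(x_i,s_i,\bar a)\in\addlt{\droplt I}{x_i}{s_i}$ forces $R(\bar a)\in\droplt I$, so some $R(x',s',\bar a)\in I$ exists, and by the shared-location-and-timestamp assumption $x'=x_i$ and $s'=s_i$, giving $R(x_i,s_i,\bar a)\in I$.

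Applying this general fact with $I = X$ yields $\addlt{\cnfs_i(x_i)}{x_i}{s_i} = \addlt{\droplt X}{x_i}{s_i} = X$, which is precisely the claimed equality. There is no deep obstacle here: the content is purely notational bookkeeping about the inverse relationship between $\droplt{\cdot}$ and $\addlt{\cdot}{x_i}{s_i}$. The only subtlety worth care is that $\droplt{\cdot}$ is not injective in general, since it collapses facts differing only in their first two components, so the argument genuinely relies on the uniformity of location and timestamp across $X$, which the selection operator $\projlt{\cdot}{x_i}{s_i}$ guarantees.
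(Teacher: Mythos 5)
Your proof is correct and follows essentially the same route as the paper's: unfold the definition of $\cnfs_{i}(x_{i})$, apply Claim~\ref{claim:localM} to identify $\locM{i,x_{i}}$ with $s_{i}$, and observe that $\addlt{\cdot}{x_{i}}{s_{i}}$ inverts $\droplt{\cdot}$ on the projected set. The only difference is that you spell out this inversion step and the uniformity of location specifier and timestamp it relies on, which the paper treats as immediate.
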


\begin{proof}By definition, 
\[
\cnfs_{i}(x_{i})=\droplt{\left(\shprojlt{\proj M{\toloct{\edb{\ded}}}}{x_{i}}s\cup\projlt{\ind M}{x_{i}}s\right)},
\]
where $s=\locM{i,x_{i}}$. Using Claim~\ref{claim:localM}, we have
$s=s_{i}$. Therefore, 
\[
\addlt{\cnfs_{i}(x_{i})}{x_{i}}{s_{i}}=\shprojlt{\proj M{\toloct{\edb{\ded}}}}{x_{i}}{s_{i}}\cup\projlt{\ind M}{x_{i}}{s_{i}}.
\]
\end{proof}

\tline

\begin{claim}\label{claim:cand-implies-chosen}For each fact $\cand_{R}(x,s,y,u,\bar{a})\in M$,
there is a timestamp $u'\in\Nat$ such that $\chosen_{R}(x,s,y,u',\bar{a})\in M$,
with possibly $u'=u$.\end{claim}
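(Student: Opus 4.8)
The plan is to argue by contradiction, exploiting the \sacca--Zaniolo choice gadget formed by rules~(\ref{eq:chosen}) and (\ref{eq:other}), which is designed precisely to force exactly one chosen arrival timestamp among all candidates sharing the same sender, step, addressee and message content. Recall that $M=\grded(\decl H)$ where $\grded=\grp{M}{\pure{\ded}}{\decl H}$. So I would fix the given fact $\cand_R(x,s,y,u,\bar{a})\in M$ and suppose, for contradiction, that $\chosen_R(x,s,y,t,\bar{a})\notin M$ for every $t\in\Nat$.

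The first step would be to show that this assumption forces $\other_R(x,s,y,u,\bar{a})\in M$. Consider the ground instance of rule~(\ref{eq:chosen}) obtained by sending the sender variables to $x,s$, the addressee to $y$, the arrival timestamp to $u$, and the content tuple to $\bar{a}$; this ground rule lies in $\gr{\pure{\ded}}{\decl H}$ because all these values occur in $\adom{\decl H}$ (nodes via $\relall$, timestamps via $\Itime$, and the values of $\bar{a}$ because $\cand_R(x,s,y,u,\bar{a})\in M$ and Datalog invents no values). Its only negative body atom is $\other_R(x,s,y,u,\bar{a})$. If that atom were absent from $M$, the reduct $\grded$ would retain the positive rule $\chosen_R(x,s,y,u,\bar{a})\gets\cand_R(x,s,y,u,\bar{a})$, and since $\cand_R(x,s,y,u,\bar{a})\in M$ this rule would derive $\chosen_R(x,s,y,u,\bar{a})\in M$, contradicting the assumption. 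Hence $\other_R(x,s,y,u,\bar{a})\in M$.

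The second step would derive a contradiction from the presence of this $\other_R$-fact. The only rules of $\pure{\ded}$ with head predicate $\other_R$ are of the form (\ref{eq:other}), and these are positive; therefore $\other_R(x,s,y,u,\bar{a})\in M=\grded(\decl H)$ must be produced by some ground rule $\other_R(x,s,y,u,\bar{a})\gets\cand_R(x,s,y,u,\bar{a}),\,\chosen_R(x,s,y,t',\bar{a}),\,u\neq t'$ in $\grded$ whose body lies in $M$. In particular $\chosen_R(x,s,y,t',\bar{a})\in M$ for some $t'\neq u$, which directly contradicts the standing assumption. This contradiction establishes the existence of a timestamp $u'$, possibly equal to $u$, with $\chosen_R(x,s,y,u',\bar{a})\in M$.

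The argument is short, and I expect the only delicate point to be the bookkeeping of the reduct: one must keep straight that surviving the first step of the reduct depends on the \emph{absence} of the negated atom $\other_R(x,s,y,u,\bar{a})$ from $M$, while the second step relies on the fact that $\other_R$ is defined only through rule~(\ref{eq:other}), so any $\other_R$-fact in $M$ necessarily witnesses a \emph{competing} chosen timestamp. No properties of the causality or finiteness rules are needed; this is purely a statement about the choice construction introduced in Section~\ref{sub:basic-transformation}, so the same proof applies verbatim to $\purech{\ded}$ and $\purecaus{\ded}$ as well.
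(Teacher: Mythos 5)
Your proof is correct and follows essentially the same route as the paper's: a proof by contradiction that first uses the reduct of rule~(\ref{eq:chosen}) to force $\other_{R}(x,s,y,u,\bar{a})\in M$, and then observes that this $\other_{R}$-fact can only be derived by a ground rule of the form~(\ref{eq:other}), whose body exhibits the competing fact $\chosen_{R}(x,s,y,u',\bar{a})\in M$. The extra care you take with groundability and the closing remark that the argument already works for $\purech{\ded}$ and $\purecaus{\ded}$ are both sound, though the paper leaves them implicit.
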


\begin{proof}Towards a proof by contradiction, suppose there is no
such timestamp $u'$. Now, because $\cand_{R}(x,s,y,u,\bar{a})\in M$,
the following ground rule, which is of the form (\ref{eq:chosen}),
can not be in $\grded$, because otherwise $\chosen_{R}(x,s,y,u,\bar{a})\in M$,
which is assumed not to be possible:
\[
\chosen_{R}(x,s,y,u,\bar{a})\gets\cand_{R}(x,s,y,u,\bar{a}).
\]
Because rules of the form (\ref{eq:chosen}) contain a negative $\other_{\ldots}$-atom
in their body, the absence of the above ground rule from $\grded$
implies $\other_{R}(x,s,y,u,\bar{a})\in M$. This $\other_{R}$-fact
must be derived by a ground rule of the form (\ref{eq:other}):
\[
\other_{R}(x,s,y,u,\bar{a})\gets\cand_{R}(x,s,y,u,\bar{a}),\,\chosen_{R}(x,s,y,u',\bar{a}),\, u\neq u'.
\]
But this implies that $\chosen_{R}(x,s,y,u',\bar{a})\in M$, which
is a contradiction. \end{proof}

\tline

\begin{claim}\label{claim:M-largest-stratum}Let $i\in\Nat$. Let
$n$ denote the largest stratum number of the deductive rules of $\ded$.
We have $\shprojlt{\strat Mn}{x_{i}}{s_{i}}=\projlt M{x_{i}}{s_{i}}$.\end{claim}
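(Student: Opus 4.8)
The plan is to establish the stronger \emph{global} identity $\proj M{\toloct{\schof{\ded}}}=\strat Mn$ and then read off the stated equality by restricting both sides to facts with location specifier $x_{i}$ and timestamp $s_{i}$. Recall that $\strat Mn=\up M\cup\duc Mn=\proj M{\toloct{\edb{\ded}}}\cup\ind M\cup\deliv M\cup\duc Mn$. Each of these four component sets is a subset of $M$ and consists exclusively of facts over $\toloct{\schof{\ded}}$, so the inclusion $\strat Mn\subseteq\proj M{\toloct{\schof{\ded}}}$ is immediate; in particular this already yields the easy half $\shprojlt{\strat Mn}{x_{i}}{s_{i}}\subseteq\projlt M{x_{i}}{s_{i}}$.

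For the reverse inclusion I would take an arbitrary fact $\fc\in\proj M{\toloct{\schof{\ded}}}$ and argue that $\fc\in\strat Mn$. Since $M$ is a stable model we have $M=\grded(\decl H)$, so $\fc$ is either already present in the input $\decl H$ or is the head of some active ground rule $\grl\in\grded$. In the first case, because the only facts of $\decl H$ lying over $\toloct{\schof{\ded}}$ are the edb-facts over $\toloct{\edb{\ded}}$, we obtain $\fc\in\proj M{\toloct{\edb{\ded}}}\subseteq\up M$. In the second case $\fc=\head{\grl}$ for an active $\grl$ whose head atom is over $\toloct{\schof{\ded}}$; by the rule classification recalled in Section~\ref{sub:model-to-run-defs-notations}, such a ground rule is necessarily deductive (form (\ref{eq:pure-duc})), inductive (form (\ref{eq:pure-ind})), or a delivery (form (\ref{eq:deliv})). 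In the three respective cases $\fc$ lands in $\duc Mn$ (here using that $n$ is the \emph{largest} stratum number, so every active deductive head is indeed counted), in $\ind M$, or in $\deliv M$, each of which is contained in $\strat Mn$. This exhausts all possibilities and gives $\proj M{\toloct{\schof{\ded}}}\subseteq\strat Mn$, hence the global equality.

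Finally I would restrict both sides of $\proj M{\toloct{\schof{\ded}}}=\strat Mn$ to the facts having location specifier $x_{i}$ and timestamp $s_{i}$. On the right, since $\strat Mn$ already consists only of facts over $\toloct{\schof{\ded}}$, the implicit projection onto $\toloct{\schof{\ded}}$ built into the notation $\shprojlt{\cdot}{x_{i}}{s_{i}}$ acts trivially; on the left, the abbreviation convention of Section~\ref{sub:model-to-run-defs-notations} gives $\projlt M{x_{i}}{s_{i}}=\projlt{(\proj M{\toloct{\schof{\ded}}})}{x_{i}}{s_{i}}$. Substituting the global equality therefore yields $\shprojlt{\strat Mn}{x_{i}}{s_{i}}=\projlt M{x_{i}}{s_{i}}$, as desired.

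The argument is essentially bookkeeping, so I do not expect a genuine obstacle; the single point that requires care is the appeal to stability, namely that in the fixpoint $M=\grded(\decl H)$ every non-input fact over $\toloct{\schof{\ded}}$ must be produced by an active ground rule, combined with the observation that the \emph{only} rules of $\pure{\ded}$ whose heads range over $\toloct{\schof{\ded}}$ are those of the forms (\ref{eq:pure-duc}), (\ref{eq:pure-ind}), and (\ref{eq:deliv}). The maximality of $n$ is precisely what guarantees that the deductive contributions are fully captured by $\duc Mn$ rather than only up to some intermediate stratum.
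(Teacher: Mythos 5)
Your proposal is correct and matches the paper's proof in essence: both establish the easy inclusion from $\strat Mn\subseteq M$, and then use stability of $M$ together with the three-way classification of ground rules whose heads lie over $\toloct{\schof{\ded}}$ (deductive, inductive, delivery) and the maximality of $n$ to place every remaining fact into a component of $\strat Mn$. The only cosmetic difference is that you first prove the global identity $\proj M{\toloct{\schof{\ded}}}=\strat Mn$ and split on input-versus-derived, whereas the paper argues directly at location $(x_{i},s_{i})$ and splits on edb-versus-idb; these coincide because the only facts of $\decl H$ over $\toloct{\schof{\ded}}$ are the edb facts and all rule heads are idb.
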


\begin{proof}

First, since $\strat Mn\subseteq M$, we immediately have $\shprojlt{\strat Mn}{x_{i}}{s_{i}}\subseteq\projlt M{x_{i}}{s_{i}}$. 

Now, let $\fc\in\projlt M{x_{i}}{s_{i}}$. We show $\fc\in\shprojlt{\strat Mn}{x_{i}}{s_{i}}$.
Since $\fc$ has location specifier $x_{i}$ and timestamp $s_{i}$,
we are left to show $\fc\in\strat Mn$. We have the following cases:
\begin{itemize}
\item Suppose $\fc\in\proj M{\toloct{\edb{\ded}}}$. Then $\fc\in\up M\subseteq\strat Mn$.
\item Suppose $\fc\in\proj M{\toloct{\idb{\ded}}}$. Then there is an active
ground rule $\grl\in\grded$ with $\head{\grl}=\fc$. As seen in Section
\ref{sub:model-to-run-defs-notations}, rule $\grl$ can be of three
types: deductive, inductive and delivery. The last two cases would
respectively imply $\fc\in\ind M$ and $\fc\in\deliv M$, giving $\fc\in\up M\subseteq\strat Mn$.
In the deductive case, rule $\grl$ has a stratum number no larger
than $n$, and hence $\fc\in\duc Mn\subseteq\strat Mn$.
\end{itemize}
\end{proof}

\tline

\begin{claim}\label{claim:deliv-is-untag}Let $i\in\Nat$. We have
$\projlt{\deliv M}{x_{i}}{s_{i}}=\addlt{\untag{m_{i}}}{x_{i}}{s_{i}}$.
\end{claim}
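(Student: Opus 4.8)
The plan is to reduce both sides of the claimed identity to one and the same explicit set of facts, described purely in terms of which $\chosen_R$-facts occur in $M$, and then to read off the equality. No earlier numbered claim is needed; everything follows by unfolding definitions, together with one small check about membership in $\grded$.

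First I would unfold the left-hand side. By definition $\deliv M$ gathers the head facts of all \emph{active} delivery ground rules of $\grded$, and a delivery rule has the shape (\ref{eq:deliv}), so each such ground rule reads $R(y,t,\bar a)\gets\chosen_R(x,s,y,t,\bar a)$ and is active exactly when $\chosen_R(x,s,y,t,\bar a)\in M$ (in which case its head lies in $M$ since $M$ is stable). Applying $\projlt{\cdot}{x_i}{s_i}$, which keeps only facts with location specifier $x_i$ and timestamp $s_i$, I would obtain
\[
\projlt{\deliv M}{x_i}{s_i}=\{R(x_i,s_i,\bar a)\mid\exists x,s:\ \chosen_R(x,s,x_i,s_i,\bar a)\in M\}.
\]
Next I would unfold the right-hand side. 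Because $\globM{\cdot}$ is injective, the requirement $\globM{z,u}=i$ in the definition of $m_i$ pins $(z,u)$ down to $(x_i,s_i)$, so $m_i$ consists precisely of the tagged pairs $\pair{\globM{y,t}}{R(\bar a)}$ for which $\chosen_R(y,t,x_i,s_i,\bar a)\in M$. Dropping the send-tags through $\untag{\cdot}$ and then prepending $x_i,s_i$ gives
\[
\addlt{\untag{m_i}}{x_i}{s_i}=\{R(x_i,s_i,\bar a)\mid\exists y,t:\ \chosen_R(y,t,x_i,s_i,\bar a)\in M\}.
\]
This is literally the set from the previous display, after renaming the bound variables $x,s$ to $y,t$, which settles the equality.

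The one point that will need slight care --- and where I expect to spend the most attention --- is justifying, for the inclusion from right to left, that each $\chosen_R(y,t,x_i,s_i,\bar a)\in M$ really does witness a member of $\deliv M$, i.e., that the corresponding ground delivery rule actually belongs to $\grded$ and not merely to some abstract set of instances. Here I would invoke that rules of the form (\ref{eq:deliv}) are positive, so the grounding procedure neither deletes the rule on account of a negative body atom nor strips anything from it; it then remains only to note that all values involved lie in $\adom{\decl H}$, since $\decl H$ already supplies every timestamp in $\Nat$ and every node of $\nw$ through $\Itime$ and $\relall$, while the data tuple $\bar a$ stems from the constant-free program $\ded$ run on the input. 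Consequently the valuation sending the rule to $R(x_i,s_i,\bar a)\gets\chosen_R(y,t,x_i,s_i,\bar a)$ has its image in $\adom{\decl H}$, so this ground rule lies in $\grded$ and is active whenever its $\chosen_R$-body fact is in $M$, completing both inclusions.
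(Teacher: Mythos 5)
Your proposal is correct and in substance coincides with the paper's own proof: both arguments come down to unfolding the definition of $m_i$ (where $\globM{z,u}=i$ forces $(z,u)=(x_i,s_i)$ by uniqueness of ordinals under $\totM$), and to the observation that delivery rules (\ref{eq:deliv}) are positive, so their ground instances always belong to $\grded$ and are active exactly when their $\chosen_R$-body fact is in $M$, with stability of $M$ then yielding the head fact. Your additional check that the valuation's image lies in $\adom{\decl H}$ is a point the paper silently elides, but it is a cosmetic strengthening rather than a different route.
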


\begin{proof}

Let $\fc\in\projlt{\deliv M}{x_{i}}{s_{i}}$. We show $\fc\in\addlt{\untag{m_{i}}}{x_{i}}{s_{i}}$.
Denote $\fc=R(x_{i},s_{i},\bar{a})$. By definition of $\deliv M$,
there is an active delivery rule $\grl\in\grded$ that derives $\fc$:
\[
R(x_{i},s_{i},\bar{a})\gets\chosen_{R}(y,t,x_{i},s_{i},\bar{a}).
\]
Because this rule is active, we have $\chosen_{R}(y,t,x_{i},s_{i},\bar{a})\in M$.
Now, by definition of $x_{i}$ and $s_{i}$, we have $\globM{x_{i},s_{i}}=i$.
Hence, $\pair{\globM{y,t}}{\, R(\bar{a})}\in m_{i}$ and thus $R(\bar{a})\in\untag{m_{i}}$.
Finally, we obtain $\fc=R(x_{i},s_{i},\bar{a})\in\addlt{\untag{m_{i}}}{x_{i}}{s_{i}}$,
as desired.

Let $\fc\in\addlt{\untag{m_{i}}}{x_{i}}{s_{i}}$. We show $\fc\in\projlt{\deliv M}{x_{i}}{s_{i}}$.
Denote $\fc=R(x_{i},s_{i},\bar{a})$. We have $R(\bar{a})\in\untag{m_{i}}$.
Thus, there is some tag $j\in\Nat$ such that $\pair j{R(\bar{a})}\in m_{i}$.
By definition of $m_{i}$, there are values $y\in\nw$, $t\in\Nat$,
$z\in\nw$ and $u\in\Nat$ such that 
\[
\chosen_{R}(y,t,z,u,\bar{a})\in M,
\]
where $\globM{y,t}=j$ and $\globM{z,u}=i$. Here, $\globM{z,u}=i$
implies $z=x_{i}$ and $u=s_{i}$. Hence, $\chosen_{R}(y,t,x_{i},s_{i},\bar{a})\in M$.
Now, the following ground rule $\grl$ is in $\grded$ because (delivery)
rules of the form (\ref{eq:deliv}) are always positive:
\[
R(x_{i},s_{i},\bar{a})\gets\chosen_{R}(y,t,x_{i},s_{i},\bar{a}).
\]
This rule derives $\fc=R(x_{i},s_{i},\bar{a})\in M$ because its body-fact
is in $M$. Hence, $\fc\in\projlt{\deliv M}{x_{i}}{s_{i}}$, as desired.\end{proof}

\tline

\begin{claim}\label{claim:strat-k-M-in-D}Let $i\in\Nat$. Let $k$
be a stratum number (thus $k\geq1$). Suppose that 
\[
\shprojlt{\strat M{k-1}}{x_{i}}{s_{i}}=\shaddlt{\strat{D_{i}}{k-1}}{x_{i}}{s_{i}}.
\]
We have 
\[
\shprojlt{\strat Mk}{x_{i}}{s_{i}}\subseteq\shaddlt{\strat{D_{i}}k}{x_{i}}{s_{i}}.
\]
\end{claim}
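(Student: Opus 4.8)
The plan is to decompose $\strat Mk=\up M\cup\duc Mk$ and observe that almost every part is settled directly by the given hypothesis, leaving a single genuinely recursive case. Concretely, any fact of $\shprojlt{\up M}{x_i}{s_i}$ and any fact of $\shprojlt{\duc M{k-1}}{x_i}{s_i}$ already lies in $\shprojlt{\strat M{k-1}}{x_i}{s_i}$, which by the hypothesis equals $\shaddlt{\strat{D_i}{k-1}}{x_i}{s_i}$, and this sits inside $\shaddlt{\strat{D_i}k}{x_i}{s_i}$ because $\strat{D_i}{k-1}\subseteq\strat{D_i}k$. Hence the only real work concerns a fact $\fc=R(x_i,s_i,\bar a)\in\duc Mk$ whose underlying active deductive ground rule has stratum number \emph{exactly} $k$; for such $\fc$ I must show $R(\bar a)\in\strat{D_i}k$.

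For such a fact there is an active deductive ground rule $\grl\in\grded$ with $\head{\grl}=\fc$, coming from a rule $\rl\in\pure{\ded}$ of the form (\ref{eq:pure-duc}) and a valuation $V$ with $V(\bneg{\rl})\cap M=\emptyset$ and $\bpos{\grl}=V(\bpos{\rl})\subseteq M$, where $V$ sends the location and timestamp variables to $x_i$ and $s_i$. Let $\rl'\in\deduc{\ded}$ be the original deductive rule, to which $V$ applies as well. Recalling from Section~\ref{sub:stratified-semantics} that $\strat{D_i}k$ is the fixpoint of the stratum-$k$ immediate consequence operator started on the lower-stratum facts $\strat{D_i}{k-1}$, and is therefore closed under the stratum-$k$ rules, it suffices to show that $V$ is satisfying for $\rl'$ with respect to $\strat{D_i}k$, i.e.\ $\droplt{V(\bpos{\rl})}\subseteq\strat{D_i}k$ and $\droplt{V(\bneg{\rl})}\cap\strat{D_i}{k-1}=\emptyset$.

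The negative condition follows from the hypothesis alone: the atoms of $\bneg{\rl'}$ are over relations of stratum strictly below $k$, so if some $\fcB\in\droplt{V(\bneg{\rl})}\cap\strat{D_i}{k-1}$, then $\addlt{\fcB}{x_i}{s_i}\in\shaddlt{\strat{D_i}{k-1}}{x_i}{s_i}=\shprojlt{\strat M{k-1}}{x_i}{s_i}\subseteq M$, contradicting $V(\bneg{\rl})\cap M=\emptyset$. For the positive condition I would classify each $\fcB'\in V(\bpos{\rl})\subseteq M$ (all lying at $x_i,s_i$) by the scheme of Section~\ref{sub:model-to-run-defs-notations}: if $\fcB'$ is an \emph{edb} fact, the head of an active inductive or delivery rule, or a deductive fact of stratum below $k$, then $\fcB'\in\strat M{k-1}$ and the hypothesis places its drop in $\strat{D_i}{k-1}\subseteq\strat{D_i}k$; otherwise $\fcB'$ is a deductive fact of stratum exactly $k$. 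To handle this last, recursive case I would run an inner induction on the stages $M_0\subseteq M_1\subseteq\cdots$ of the fixpoint $M=\grded(\decl H)=\bigcup_j M_j$ (in the sense of Section~\ref{sub:positive-and-semi-positive}), proving that every stratum-$k$ deductive fact at $x_i,s_i$ entering $M$ at stage $j$ has its drop in $\strat{D_i}k$: its positive body lies in $M_{j-1}$, and each body atom is routed either to the hypothesis (edb, inductive/delivery, and lower-stratum cases) or to the inner induction hypothesis (stratum-$k$ deductive case), so all positive body facts lie in $\strat{D_i}k$ and closure of the fixpoint yields the head.

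The main obstacle is precisely this positive recursion within stratum $k$: since stratum-$k$ rules may consume other stratum-$k$ atoms, the outer induction on strata does not suffice and one is forced into the nested induction on the derivation stages of $\grded$, leaning on the closure of $\strat{D_i}k$ under the stratum-$k$ operator. A secondary, but essential, care point is the bookkeeping that separates deductive facts from inductive and delivery facts carrying the same relation, so that every positive body atom is dispatched to the correct inductive hypothesis; the classification of $\proj M{\toloct{\schof{\ded}}}$ recorded in Section~\ref{sub:model-to-run-defs-notations} is exactly what makes this routing well defined.
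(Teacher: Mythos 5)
Your proof is correct and takes essentially the same route as the paper's: both hinge on the same inner induction over the stages $M_0\subseteq M_1\subseteq\cdots$ of the fixpoint computation of $\grded$, dispatch the edb, inductive, delivery, and lower-stratum facts through the given hypothesis, and derive the stratum-$k$ deductive facts by showing the corresponding valuation is satisfying on $\strat{D_i}{k}$ (positive body via the stage induction, negative body via the hypothesis and stratification) so that closure of $\strat{D_i}{k}$ under the stratum-$k$ operator yields the head. The only difference is organizational: the paper runs the stage induction on all of $\shprojlt{M_l\cap\strat Mk}{x_i}{s_i}$ and performs the case split inside the inductive step, whereas you peel off the non-recursive cases first; the mathematical content is identical.
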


\begin{proof}

We consider the fixpoint computation of $M$, i.e., $M=\bigcup_{l\in\Nat}M_{l}$
with $M_{0}=\decl H$ and $M_{l}=T(M_{l-1})$ for each $l\geq1$,
where $T$ is the immediate consequence operator of $\grded$. By
the semantics of operator $T$, we have $M_{l-1}\subseteq M_{l}$. 

We show by induction on $l=0$, $1$, $2$, $\ldots$, that 
\[
\shprojlt{M_{l}\cap\strat Mk}{x_{i}}{s_{i}}\subseteq\shaddlt{\strat{D_{i}}k}{x_{i}}{s_{i}}.
\]
This will imply that 
\[
\projlt{\left(\left(\bigcup_{l\in\Nat}M_{l}\right)\cap\strat Mk\right)}{x_{i}}{s_{i}}\subseteq\shaddlt{\strat{D_{i}}k}{x_{i}}{s_{i}}.
\]
Hence, we obtain, as desired
\[
\shprojlt{M\cap\strat Mk}{x_{i}}{s_{i}}=\shprojlt{\strat Mk}{x_{i}}{s_{i}}\subseteq\shaddlt{\strat{D_{i}}k}{x_{i}}{s_{i}}.
\]

Before we start with the induction, recall from Section \ref{sub:model-to-run-defs-notations}
that 
\begin{eqnarray*}
\strat Mk & = & \up M\cup\duc Mk\\
 & = & \proj M{\toloct{\edb{\ded}}}\cup\ind M\cup\deliv M\cup\duc Mk.
\end{eqnarray*}

\paragraph{Base case ($l=0$)}

We have $M_{0}=\decl H$. Thus $M_{0}$ contains no facts derived
by deductive, inductive or delivery ground rules. Therefore,
\begin{eqnarray*}
M_{0}\cap\strat Mk & = & \proj M{\toloct{\edb{\ded}}}.
\end{eqnarray*}
Hence, 
\begin{eqnarray*}
\shprojlt{M_{0}\cap\strat Mk}{x_{i}}{s_{i}} & \subseteq & \shprojlt{\up M}{x_{i}}{s_{i}}\\
 & \subseteq & \shprojlt{\strat M{k-1}}{x_{i}}{s_{i}}.
\end{eqnarray*}
And by using the given equality $\shprojlt{\strat M{k-1}}{x_{i}}{s_{i}}=\shaddlt{\strat{D_{i}}{k-1}}{x_{i}}{s_{i}}$,
we obtain, as desired:
\begin{eqnarray*}
\shprojlt{M_{0}\cap\strat Mk}{x_{i}}{s_{i}} & \subseteq & \shaddlt{\strat{D_{i}}{k-1}}{x_{i}}{s_{i}}\\
 & \subseteq & \shaddlt{\strat{D_{i}}k}{x_{i}}{s_{i}}.
\end{eqnarray*}

\paragraph*{Induction hypothesis}

Let $l\geq1$. We assume
\[
\shprojlt{M_{l-1}\cap\strat Mk}{x_{i}}{s_{i}}\subseteq\shaddlt{\strat{D_{i}}k}{x_{i}}{s_{i}}.
\]

\paragraph*{Inductive step}

We show 
\[
\shprojlt{M_{l}\cap\strat Mk}{x_{i}}{s_{i}}\subseteq\shaddlt{\strat{D_{i}}k}{x_{i}}{s_{i}}.
\]
Let $\fc\in\shprojlt{M_{l}\cap\strat Mk}{x_{i}}{s_{i}}$. If $\fc\in M_{l-1}$
then $\fc\in\shprojlt{M_{l-1}\cap\strat Mk}{x_{i}}{s_{i}}$ and the
induction hypothesis can be immediately applied. Now suppose that
$\fc\in M_{l}\setminus M_{l-1}$. Then there is a ground rule $\grl\in\grded$
with $\head{\grl}=\fc$ that is active on $M_{l-1}$. We have $\bpos{\grl}\subseteq M_{l-1}$.
As we have seen in Section \ref{sub:model-to-run-defs-notations},
rule $\grl$ can be of three types: deductive, inductive or a delivery.
If $\grl$ is an inductive rule or a delivery rule then 
\begin{eqnarray*}
\fc & \in & \projlt{\ind M}{x_{i}}{s_{i}}\cup\projlt{\deliv M}{x_{i}}{s_{i}}\\
 & \subseteq & \shprojlt{\up M}{x_{i}}{s_{i}}\subseteq\shprojlt{\strat M{k-1}}{x_{i}}{s_{i}}\\
 & = & \shaddlt{\strat{D_{i}}{k-1}}{x_{i}}{s_{i}}\subseteq\shaddlt{\strat{D_{i}}k}{x_{i}}{s_{i}}.
\end{eqnarray*}
Now suppose $\grl$ is deductive. If $\grl$ has stratum less than
or equal to $k-1$, then $\fc\in\shprojlt{\strat M{k-1}}{x_{i}}{s_{i}}$.
In that case, the given equality $\shprojlt{\strat M{k-1}}{x_{i}}{s_{i}}=\shaddlt{\strat{D_{i}}{k-1}}{x_{i}}{s_{i}}$
gives $\fc\in\shaddlt{\strat{D_{i}}{k-1}}{x_{i}}{s_{i}}\subseteq\shaddlt{\strat{D_{i}}k}{x_{i}}{s_{i}}$,
as desired. Now suppose that $\grl$ has stratum $k$. Because $\grl\in\grded$,
there is a rule $\rl\in\pure{\ded}$ and valuation $V$ so that $\grl$
is obtained from $\rl$ by applying valuation $V$ and subsequently
removing the negative (ground) body literals, and so that $V(\bneg{\rl})\cap M=\emptyset$.
Let $\rl'\in\ded$ be the original deductive rule on which $\rl$
is based. Thus $\rl'\in\deduc{\ded}$ (see Section~\ref{sub:subprograms}).
By construction of $\rl$ out of $\rl'$, valuation $V$ can also
be applied to rule $\rl'$. We now show that $V$ is satisfying for
$\rl'$ during the computation of $D_{i}$, in stratum $k$. Since
$V(\head{\rl})=\head{\grl}=\fc$, this results in the derivation of
$V(\head{\rl'})=\droplt{\fc}\in\strat{D_{i}}k$ and thus $\fc\in\shaddlt{\strat{D_{i}}k}{x_{i}}{s_{i}}$,
as desired. It is sufficient to show $V(\bpos{\rl'})\subseteq\strat{D_{i}}k$
and $V(\bneg{\rl'})\cap\strat{D_{i}}{k-1}=\emptyset$ because by the
syntactic stratification, if $\rl'$ uses relations positively then
those relations are in stratum $k$ or lower, and if $\rl'$ uses
relations negatively then those relations are in a stratum strictly
lower than $k$. 
\begin{itemize}
\item We show $V(\bpos{\rl'})\subseteq\strat{D_{i}}k$. First, by the relationship
between $\rl$ and $\rl'$, and because valuation $V$ assigns $x_{i}$
and $s_{i}$ to respectively the body location variable and body timestamp
variable of $\rl$, we have $\bpos{\grl}=V(\bpos{\rl})=\addlt{V(\bpos{\rl'})}{x_{i}}{s_{i}}$.
By choice of $\grl$, we already know $\bpos{\grl}\subseteq M_{l-1}$.
If we could show $\bpos{\grl}\subseteq\strat Mk$ then $\bpos{\grl}\subseteq\shprojlt{M_{l-1}\cap\strat Mk}{x_{i}}{s_{i}}$,
to which the induction hypothesis can be applied to obtain $\bpos{\grl}=\addlt{V(\bpos{\rl'})}{x_{i}}{s_{i}}\subseteq\shaddlt{\strat{D_{i}}k}{x_{i}}{s_{i}}$,
resulting in $V(\bpos{\rl'})\subseteq\strat{D_{i}}k$, as desired.

Now we show $\bpos{\grl}\subseteq\strat Mk$. Let $\fcB\in\bpos{\grl}$.
If $\fcB\in\up M$ then we immediately have $\fcB\in\strat Mk$. Now
suppose that $\fcB\notin\up M$. Since $\bpos{\grl}\subseteq\projlt M{x_{i}}{s_{i}}$,
we have $\fcB\in\projlt M{x_{i}}{s_{i}}\setminus\up M$. Then Claim
\ref{claim:M-largest-stratum} implies there is an active deductive
ground rule $\grl'\in\grded$ with $\head{\grl'}=\fcB$. But we are
working with a syntactic stratification, and thus the stratum of $\grl'$
can not be higher than the stratum of $\grl$, which is $k$. Hence
$\fcB\in\duc Mk\subseteq\strat Mk$.

\item We show $V(\bneg{\rl'})\cap\strat{D_{i}}{k-1}=\emptyset$. By choice
of $\rl$ and $V$, we have $V(\bneg{\rl})\cap M=\emptyset$. So,
\[
V(\bneg{\rl})\cap\shprojlt{\strat M{k-1}}{x_{i}}{s_{i}}=\emptyset.
\]
By applying the given equality $\shprojlt{\strat M{k-1}}{x_{i}}{s_{i}}=\shaddlt{\strat{D_{i}}{k-1}}{x_{i}}{s_{i}}$,
we then have $V(\bneg{\rl})\cap\shaddlt{\strat{D_{i}}{k-1}}{x_{i}}{s_{i}}=\emptyset$.
By the relationship between $\rl$ and $\rl'$, we have $V(\bneg{\rl})=\addlt{V(\bneg{\rl'})}{x_{i}}{s_{i}}$.
Thus $V(\bneg{\rl'})\cap\strat{D_{i}}{k-1}=\emptyset$, as desired. 
\end{itemize}
\end{proof}

\tline

\begin{claim}\label{claim:strat-k-D-in-M}Let $i\in\Nat$. Let $k$
be a stratum number (thus $k\geq1$). Suppose that 
\[
\shprojlt{\strat M{k-1}}{x_{i}}{s_{i}}=\shaddlt{\strat{D_{i}}{k-1}}{x_{i}}{s_{i}}.
\]
We have 
\[
\shaddlt{\strat{D_{i}}k}{x_{i}}{s_{i}}\subseteq\shprojlt{\strat Mk}{x_{i}}{s_{i}}.
\]
\end{claim}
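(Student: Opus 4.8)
The plan is to prove this inclusion by mirroring the proof of Claim~\ref{claim:strat-k-M-in-D}, but now running the inner induction along the fixpoint computation of stratum $k$ of the operational deductive program, rather than along the fixpoint of $\grded$. Recall that $\strat{D_i}k$ is obtained from $\strat{D_i}{k-1}$ by taking the semi-positive $\datalogneg$ fixpoint of the stratum-$k$ rules of $\deduc{\ded}$. So I would write $\strat{D_i}k=\bigcup_{l\in\Nat}A_l$ with $A_0=\strat{D_i}{k-1}$ and $A_l=T(A_{l-1})$, where $T$ is the immediate consequence operator of stratum $k$, and prove by induction on $l$ that $\shaddlt{A_l}{x_i}{s_i}\subseteq\shprojlt{\strat Mk}{x_i}{s_i}$; taking the union over $l$ then gives $\shaddlt{\strat{D_i}k}{x_i}{s_i}=\bigcup_l\shaddlt{A_l}{x_i}{s_i}\subseteq\shprojlt{\strat Mk}{x_i}{s_i}$, which is the claim.

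For the base case $l=0$ I have $A_0=\strat{D_i}{k-1}$, so the given equality yields $\shaddlt{A_0}{x_i}{s_i}=\shprojlt{\strat M{k-1}}{x_i}{s_i}$, and since $\duc M{k-1}\subseteq\duc Mk$ we get $\strat M{k-1}\subseteq\strat Mk$, hence the desired inclusion. For the inductive step I take a newly derived fact $\fc\in A_l\setminus A_{l-1}$ together with a stratum-$k$ rule $\rl'\in\deduc{\ded}$ and a satisfying valuation $V$ deriving it, so that $V(\bpos{\rl'})\subseteq A_{l-1}$ and $V(\bneg{\rl'})\cap\strat{D_i}{k-1}=\emptyset$ (the latter because negation in $\rl'$ reaches only relations of strictly lower stratum, whose facts are already fixed in $\strat{D_i}{k-1}$). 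I then let $\rl\in\pure{\ded}$ be the rule obtained from $\rl'$ by transformation~(\ref{eq:pure-duc}), extend $V$ to a valuation $V'$ sending the location and timestamp variables to $x_i$ and $s_i$, and form the positive ground rule $\grl$ by applying $V'$ to $\rl$ and dropping negative atoms. The positive body is handled by the inner induction hypothesis, since $\bpos{\grl}=\addlt{V(\bpos{\rl'})}{x_i}{s_i}\subseteq\shaddlt{A_{l-1}}{x_i}{s_i}\subseteq\shprojlt{\strat Mk}{x_i}{s_i}\subseteq M$. Once I also show $\grl\in\grded$, activeness forces $\head{\grl}=\addlt{\fc}{x_i}{s_i}\in M$; and as $\grl$ is deductive of stratum $k$, its head lies in $\duc Mk\subseteq\strat Mk$ with location $x_i$ and timestamp $s_i$, giving $\addlt{\fc}{x_i}{s_i}\in\shprojlt{\strat Mk}{x_i}{s_i}$, as required.

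The main obstacle is establishing $\grl\in\grded$, i.e.\ $V'(\bneg{\rl})\cap M=\emptyset$, where $V'(\bneg{\rl})=\addlt{V(\bneg{\rl'})}{x_i}{s_i}$. From $V(\bneg{\rl'})\cap\strat{D_i}{k-1}=\emptyset$ and injectivity of prepending $(x_i,s_i)$, the given equality already yields $V'(\bneg{\rl})\cap\shprojlt{\strat M{k-1}}{x_i}{s_i}=\emptyset$, but I must upgrade this to emptiness against all of $M$. The key point is that every fact in $V'(\bneg{\rl})$ is over $\toloct{\schof{\ded}}$ with location $x_i$ and timestamp $s_i$, over an $\edb{\ded}$-relation or an $\idb{\ded}$-relation of stratum strictly below $k$. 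I would then argue, exactly as in the proof of Claim~\ref{claim:M-largest-stratum}, that any such fact appearing in $M$ must already lie in $\strat M{k-1}$: an $\edb$-fact is in $\proj M{\toloct{\edb{\ded}}}\subseteq\up M$; an $\idb$-fact is the head of an active deductive, inductive, or delivery ground rule, where the inductive and delivery cases land in $\up M$, and the deductive case has stratum equal to the (sub-$k$) stratum of its head relation, hence lands in $\duc M{k-1}$. Consequently $V'(\bneg{\rl})\cap M=V'(\bneg{\rl})\cap\shprojlt{\strat M{k-1}}{x_i}{s_i}=\emptyset$. The only delicate spot is precisely this last step's reliance on the syntactic stratification to guarantee that a negatively-used relation is never deductively derived at a stratum $\ge k$, so that all of its $M$-facts at $(x_i,s_i)$ are captured by $\strat M{k-1}$.
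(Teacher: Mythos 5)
Your proposal is correct and follows essentially the same route as the paper's own proof: an inner induction along the semi-positive fixpoint $A_0,A_1,\ldots$ of stratum $k$, lifting each operational derivation to a ground rule of $\grded$ via transformation~(\ref{eq:pure-duc}), handling the positive body with the inner induction hypothesis, and upgrading $V'(\bneg{\rl})\cap\shprojlt{\strat M{k-1}}{x_i}{s_i}=\emptyset$ to $V'(\bneg{\rl})\cap M=\emptyset$ by the case analysis (edb, inductive/delivery, lower-stratum deductive) showing every relevant $M$-fact at $(x_i,s_i)$ already lies in $\strat M{k-1}$. The paper phrases this last step as a proof by contradiction while you argue it directly, but the content is identical.
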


\begin{proof}Recall that the semantics of stratum $k$ in $\deduc{\ded}$
is that of semi-positive $\datalogneg$, with input $\strat{D_{i}}{k-1}$.
So, we can consider $\strat{D_{i}}k$ to be a fixpoint, i.e., as the
set $\bigcup_{l\in\Nat}A_{l}$ with $A_{0}=\strat{D_{i}}{k-1}$ and
$A_{l}=T(A_{l-1})$ for each $l\geq1$, where $T$ is the immediate
consequence operator of stratum $k$ in $\deduc{\ded}$. We show by
induction on $l=0$, $1$, $2$, etc, that 
\[
\shaddlt{A_{l}}{x_{i}}{s_{i}}\subseteq\shprojlt{\strat Mk}{x_{i}}{s_{i}}.
\]
This then gives us the desired result.

\paragraph*{Base case ($l=0$)}

We have $A_{0}=\strat{D_{i}}{k-1}$. By applying the given equality,
we obtain 
\[
\shaddlt{A_{0}}{x_{i}}{s_{i}}=\shaddlt{\strat{D_{i}}{k-1}}{x_{i}}{s_{i}}=\shprojlt{\strat M{k-1}}{x_{i}}{s_{i}}\subseteq\shprojlt{\strat Mk}{x_{i}}{s_{i}}.
\]

\paragraph*{Induction hypothesis}

Let $l\geq1$. We assume 
\[
\shaddlt{A_{l-1}}{x_{i}}{s_{i}}\subseteq\shprojlt{\strat Mk}{x_{i}}{s_{i}}.
\]

\paragraph*{Inductive step}

Let $\fc\in A_{l}$. We show $\addlt{\fc}{x_{i}}{s_{i}}\in\shprojlt{\strat Mk}{x_{i}}{s_{i}}$.
If $\fc\in A_{l-1}$ then the induction hypothesis can be applied
to obtain the desired result. Now suppose $\fc\in A_{l}\setminus A_{l-1}$.
Let $\rl\in\deduc{\ded}$ and $V$ be respectively a rule with stratum
$k$ and a valuation that together have derived $\fc\in A_{l}$. Let
$\rl'\in\pure{\ded}$ be the rule obtained from $\rl$ by applying
transformation (\ref{eq:pure-duc}). Let $V'$ be the extension of
$V$ to assign $x_{i}$ and $s_{i}$ respectively to the body location
and timestamp variable of $\rl'$, which are also both used in the
head of $\rl'$. Let $\grl$ be the ground rule obtained from $\rl'$
by applying valuation $V'$ and by subsequently removing all negative
body literals. We show $\grl\in\grded$ and $\bpos{\grl}\subseteq M$,
which then implies 
\[
\head{\grl}=V'(\head{\rl'})=\addlt{V(\head{\rl})}{x_{i}}{s_{i}}=\addlt{\fc}{x_{i}}{s_{i}}\in M.
\]
Moreover, because $\rl$ (and thus $\rl'$) has stratum $k$, rule
$\grl$ is an active deductive ground rule with stratum $k$, and
thus $\addlt{\fc}{x_{i}}{s_{i}}\in\shprojlt{\duc Mk}{x_{i}}{s_{i}}\subseteq\shprojlt{\strat Mk}{x_{i}}{s_{i}}$,
as desired.
\begin{itemize}
\item To show $\grl\in\grded$, we require $V'(\bneg{\rl'})\cap M=\emptyset$.
Because $V$ is satifying for $\rl$, and because negation is only
applied to lower strata, we have 
\[
V(\bneg{\rl})\cap\strat{D_{i}}{k-1}=\emptyset.
\]
Thus
\[
\addlt{V(\bneg{\rl})}{x_{i}}{s_{i}}\cap\shaddlt{\strat{D_{i}}{k-1}}{x_{i}}{s_{i}}=\emptyset.
\]
By the relationship between $\rl$ and $\rl'$, we have $\addlt{V(\bneg{\rl})}{x_{i}}{s_{i}}=V'(\bneg{\rl'})$,
which gives us 
\[
V'(\bneg{\rl'})\cap\shaddlt{\strat{D_{i}}{k-1}}{x_{i}}{s_{i}}=\emptyset.
\]
And by using the given equality $\shprojlt{\strat M{k-1}}{x_{i}}{s_{i}}=\shaddlt{\strat{D_{i}}{k-1}}{x_{i}}{s_{i}}$,
we have 
\[
V'(\bneg{\rl'})\cap\shprojlt{\strat M{k-1}}{x_{i}}{s_{i}}=\emptyset.
\]
Now, for the last step, we work towards a contradiction: suppose that
there is a fact $\fcB\in V'(\bneg{\rl'})\cap M$. From the construction
of $\rl'$, we know that $\fcB$ is over $\toloct{\schof{\ded}}$
and has location specifier $x_{i}$ and timestamp $s_{i}$. 

\begin{itemize}
\item If $\fcB$ is over $\toloct{\edb{\ded}}$ then $\fcB\in\shprojlt{\proj M{\toloct{\edb{\ded}}}}{x_{i}}{s_{i}}$.
Thus $\fcB\in\shprojlt{\up M}{x_{i}}{s_{i}}\subseteq\shprojlt{\strat M{k-1}}{x_{i}}{s_{i}}$,
which is a contradiction.
\item If $\fcB$ is over $\toloct{\idb{\ded}}$ then there is an active
ground rule $\grl'\in\grded$ with $\head{\grl'}=\fcB$. As seen in
Section~\ref{sub:model-to-run-defs-notations}, rule $\grl'$ is
either deductive, inductive or a delivery. The last two cases would
imply that $\fcB\in\shprojlt{\ind M\cup\deliv M}{x_{i}}{s_{i}}\subseteq\shprojlt{\up M}{x_{i}}{s_{i}}$,
which gives a contradiction like in the previous case. Now suppose
that $\grl'$ is deductive. Because the predicate of $\fcB$ is used
negatively in $\rl'$ and thus negatively in $\rl$, the syntactic
stratification assigns a smaller stratum number to $\grl'$ than the
stratum number of $\grl$, which is $k$. Hence, $\fcB\in\shprojlt{\strat M{k-1}}{x_{i}}{s_{i}}$,
which is again a contradiction.
\end{itemize}

We conclude that $V'(\bneg{\rl'})\cap M=\emptyset$.

\item We show $\bpos{\grl}\subseteq M$. Because $V$ is satisfying for
$\rl$, we have 
\[
V(\bpos{\rl})\subseteq A_{l-1}.
\]
By the relationship between $\rl$ and $\rl'$ (and $\grl$), we have
$\addlt{V(\bpos{\rl})}{x_{i}}{s_{i}}=V'(\bpos{\rl'})=\bpos{\grl}$.
Thus 
\[
\bpos{\grl}\subseteq\shaddlt{A_{l-1}}{x_{i}}{s_{i}}.
\]
By now applying the induction hypothesis, we obtain, as desired: 
\[
\bpos{\grl}\subseteq\shprojlt{\strat Mk}{x_{i}}{s_{i}}\subseteq M.
\]

\end{itemize}
\end{proof}

\tline

\end{appendix}

\end{document}